\newcommand{\ignore}[1]{}
\definecolor{gruen}{rgb}{0,.5,0}
\crefname{lem}{Lemma}{Lemmas}
\crefname{thm}{Theorem}{Theorems}
\crefname{result}{Result}{Results}
\crefname{prop}{Proposition}{Propositions}
\newtheorem{thm}{Theorem}[section]
\newtheorem{lem}[thm]{Lemma}
\newtheorem{cor}{Corollary}[section]
\newtheorem{prop}[thm]{Proposition}
\newtheorem*{clm}{Claim}
\newtheorem{problem}{Problem}
\theoremstyle{remark}
\newtheorem{rem}{Remark}
\newtheorem{ex}{Example}
\newtheorem{result}{Result}
\let\geq\geqslant
\let\leq\leqslant
\let\bar\underline
\let\cal\mathcal
\newcommand{\NN}{\mathbb{N}}
\newcommand{\RR}{\mathbb{R}}
\newcommand{\f}{{\cal F}}
\newcommand{\h}{{\cal H}}
\newcommand{\A}{{\cal A}}
\newcommand{\B}{{\cal B}}
\newcommand{\R}{{\cal R}}
\newcommand{\skal}[1]{\langle #1\rangle}
\newcommand{\Un}[1]{L(#1)}
\newcommand{\Min}[2]{\mathsf{Min}_{#2}(#1)}
\newcommand{\Max}[2]{\mathsf{Max}_{#2}(#1)}
\newcommand{\Bool}[1]{\mathsf{Bool}(#1)}
\newcommand{\BBool}[2]{\mathsf{Bool}_{#2}(#1)}
\newcommand{\gf}[1]{\mathrm{GF}(#1)}
\newcommand{\supp}[1]{S_{\!#1}}
\def\deg#1{\mathrm{deg}(#1)}
\newcommand{\ddeg}[2]{\#_{#2}(#1)}
\newcommand{\err}{\epsilon}
\newcommand{\bal}{\gamma} 
\newcommand{\bbal}{\theta}
\newcommand{\conv}[1]{\mathrm{Conv}(#1)}
\newcommand{\compl}[1]{Y_{#1}}
\newcommand{\pr}[1]{X_{#1}}
\newcommand{\BF}[2]{u^{#1}_{#2}}
\newcommand{\dist}[1]{\mathrm{dist}(#1)} 
\newcommand{\spanntree}{\mathcal{T}}
\newcommand{\matching}{\mathcal{M}}
\newcommand{\euler}{\mathrm{e}}
\newcommand{\Nor}{\mu} 
\newcommand{\nor}[1]{\Nor(#1)}
\newcommand{\nnor}[2]{\Nor_{#2}(#1)}
\newcommand{\mm}{p} %
\newcommand{\vnul}{\vec{0}}
\newcommand{\vienas}{\mathsf{1}}
\newcommand{\const}[1]{\lambda_{#1}}
\newcommand{\scalar}{c}
\newcommand{\dff}{d}
\newcommand{\xx}{x_0}
\newcommand{\zz}{z}
\newcommand{\xf}{\boldsymbol{\mathcal{F}}}
\newcommand*{\daug}{\mathbin{\scalebox{0.9}{\ensuremath{\otimes}}}}%
\newcommand*{\suma}{\mathbin{\scalebox{0.9}{\ensuremath{\oplus}}}}%
\newcommand{\vvec}[2]{\vec{#1}_{#2}}
\newcommand{\Vvec}[2]{\vec{#1}_{#2}'}
\newcommand{\bool}[1]{f_{#1}} 
\newcommand{\bphi}{\phi} 
\newcommand{\sel}[2]{f_{#1,#2}} 
\newcommand{\fgreedy}[1]{r_{\mathrm{greed}}(#1)}
\newcommand{\case}[1]{\medskip \item {\it Case} #1:}
\def\bild{
\begin{figure}
  \begin{center}
    $ \scalebox{0.8}{ \xygraph { !{(0,0)}*+{x}="x" !{(1.6,0)}*+{y}="y"
        !{(0,-0.7)}*+=[o]+[F]{+}="a" !{(1.6,-0.7)}*+=[o]+[F]{+}="b"
        !{(0.8,-1.5)}*+=[o]+[F]{\min}="c" "x":@{=>}"a" "y":@{=>}"b"
        "a":"c" "b":"c" } } $ \qquad \qquad $ \scalebox{0.8}{ \xygraph
      { 
        !{(0,0)}*+{x}="x" !{(1.6,0)}*+{y}="y"
        !{(0.8,-0.7)}*+=[o]+[F]{\min}="a"
        !{(0.8,-1.5)}*+=[o]+[F]{+}="b" "x":"a" "y":"a" "a":@{=>}"b"
      } } $
  \end{center}
  \caption{Two constant-free $(\min,+)$ circuits solving the
    minimization problem $f(x,y)=\min\{2x,2y\}$ whose set of feasible
    solutions is $A=\{(2,0), (0,2)\}$. The first circuit produces the
    set $A$ itself, whereas the second saves one gate by producing a
    \emph{different} set $B=\{(2,0), (1,1), (0,2)\}$. Here
    $\Downarrow$ stands for two parallel edges.}
  \label{fig:produced}
\end{figure}
}
\def\floyd{
\begin{SCfigure}[10]
  $\scalebox{0.8}{ \xygraph { 
      !{(-1.3,0)}*+{g_{k-1}(i,j)}="a"
      !{(0.5,0)}*+{g_{k-1}(i,k)}="b"
      !{(2.5,0)}*+{g_{k-1}(k,j)}="c"
      !{(0.5,-1.4)}*+=[o]+[F]{\min}="g1"
      !{(1.5,-0.7)}*+=[o]+[F]{+}="g2"
      !{(0.5,-2)}*+={g_{k}(i,j)}="d"
      "a":"g1" "b":"g2" "c":"g2" "g2":"g1" } } $
  \caption[.]{\small A fragment of a tropical $(\min,+)$ circuit of
    size $O(n^3)$ implementing the Floyd--Warshall DP algorithm for
    the all-pairs lightest paths on $K_n$ problem.  At the gate
    $g_k(i,j)$, the minimum weight of a path from $i$ to $j$, which
    only uses nodes $1,\ldots,k$ as inner nodes, is computed.  }
  \label{fig:floyd}
\end{SCfigure}
}
\begin{document}

\title[]{Approximation Limitations of Pure Dynamic Programming}
\thanks{This work was funded by the DFG (German
      Research Foundation)  grant~JU~3105/1-1.}
      
\author[]{Stasys Jukna$^1$ \and Hannes Seiwert$^2$}
\thanks{$^1$Faculty of Mathematics and Computer Science, Vilnius
    University, Lithuania. Email: stjukna@gmail.com, homepage: \href{http://www.thi.cs.uni-frankfurt.de/~jukna/}
    {http://www.thi.cs.uni-frankfurt.de/$\sim$jukna/}}
\thanks{$^2$Institute of Computer
      Science, Goethe University Frankfurt, Frankfurt am Main, Germany. Email: seiwert@thi.cs.uni-frankfurt.de.}

\begin{abstract}
  We prove the first, even super-polynomial, lower bounds on the size
  of tropical (min,+) and (max,+) circuits approximating given
  optimization problems.  Many classical dynamic programming (DP)
  algorithms for  optimization problems are pure in that
  they only use the basic $\min$, $\max$, $+$ operations in their recursion
  equations.  Tropical circuits constitute a rigorous mathematical
  model for this class of algorithms. An algorithmic consequence of
  our lower bounds for tropical circuits is that the approximation
  powers of pure DP algorithms and greedy algorithms are
  incomparable. That pure DP algorithms can hardly beat greedy in
  approximation, is long known. New in this consequence is that also
  the converse holds.
\end{abstract}

\maketitle

\keywords{\footnotesize {\bf Keywords:} dynamic programming, greedy algorithm, approximation, lower bounds}

\section{Introduction}
A combinatorial optimization problem is specified by a finite set of
\emph{ground elements} and a family $\f$ of subsets of these elements,
called \emph{feasible solutions}. The problem itself then is, given an
assignment of nonnegative real weights to the ground elements, to
compute the minimum or the maximum weight of a feasible solution, the
latter being the sum of weights of its elements.

The family $\f$ of feasible solutions itself can be described either
explicitly, or as the set of $0$-$1$ solutions of a system of linear
inequalities (as in linear programming), or by other means. Important
is only that $\f$ does \emph{not} depend on the actual input
weighting: the family $\f$ is the \emph{same} for \emph{all} arriving
input weightings.

For example, in the MST problem (minimum weight spanning tree problem)
on a given graph, feasible solutions are spanning trees of this graph
(viewed as sets of their edges), and the problem is to compute the
minimum weight of a spanning tree of this graph. In the assignment
problem, feasible solutions are perfect matchings in a complete
bipartite graph, etc.

Dynamic programming (DP) is a fundamental algorithmic paradigm for
solving combinatorial optimization problems.  Many classical DP
algorithms are \emph{pure} in that they only apply the \emph{basic}
operations $(\min,+)$ or $(\max,+)$ in their recursion equations. Note
that these are the \emph{only} operations used in the definitions of
the optimization problems themselves.

Notable examples of pure DP algorithms for combinatorial optimization
problems are the well-known Bell\-man--Ford--Moore shortest $s$-$t$
path algorithm \cite{bellman,ford,Moore1957}, the Floyd--Warshall
all-pairs shortest paths algorithm~\cite{floyd,warshall} (see
\cref{fig:floyd}), the Held--Karp traveling salesman
algorithm~\cite{held62}, the Dreyfus--Levin--Wagner Steiner tree
algorithm~\cite{dreyfus,levin}. The Viterbi $(\max,\times)$ DP
algorithm~\cite{viterbi} is also a pure $(\min,+)$ DP algorithm via
the isomorphism $h:(0,1]\to\RR_+$ given by $h(x)=-\ln x$.

The main question we ask in this paper is: How many operations are
necessary for pure DP algorithms to \emph{approximate} a given
combinatorial optimization problem within a given factor?  That is, we
are interested in proving \emph{lower bounds} on the number of
performed operations.

A natural mathematical model for pure $(\min,+)$ and $(\max,+)$ DP
algorithms is that of tropical circuits. A \emph{tropical $(\min,+)$
  circuit} is a directed acyclic graph, whose each indegree-zero node
holds either one of the input variables $x_1,\ldots,x_n$ or a
nonnegative real constant, and every other node (a gate) has indegree
two and computes either the minimum or the sum of the values computed
at its two predecessors. Tropical $(\max,+)$ circuits are defined
similarly. The \emph{size} of a circuit is the total number of its
gates. Note that pure $(\min,+)$ and $(\max,+)$ DP algorithms are just
special (recursively constructed) tropical circuits (see
\cref{fig:floyd}). So, lower bounds on the size of tropical circuits
show limits of these pure DP algorithms.

In this paper, we prove the first non-trivial, even super-polynomial,
lower bounds for \emph{approximating} tropical circuits and, hence,
also for approximating pure DP algorithms.

Recall that an algorithm \emph{approximates} a given optimization
problem $f$ within a \emph{factor} $r\geq 1$ (or
$r$-\emph{approximates}~$f$) if for every input weighting $x$ (a
vector of $n$ nonnegative real numbers), the output value of the
algorithm lies:
\begin{itemize}
\item[$\circ$] between $f(x)$ and $r\cdot f(x)$, in the case when $f$
  is a \emph{minimization} problem;

\item[$\circ$] between $f(x)/r$ and $f(x)$, in the case when $f$ is a
  \emph{maximization} problem.
\end{itemize}
The factor $r$ may depend on the length $n$ of the inputs $x$, but not
on the inputs $x$ themselves. In both cases, the \emph{smaller} the
factor $r$ is, the better is the approximation. In particular, factor
$r=1$ means that the problem is solved exactly.

\floyd

One of our motivations for proving lower bounds on the number of
operations performed by \emph{approximating} pure DP algorithms is to
compare their approximation power with that of the greedy algorithm;
see \cref{app:greedy} for what we mean by \emph{the} greedy algorithm.

That the greedy algorithm can have much worse approximation behavior
than pure DP algorithms is long known. Namely, there are many
optimization problems easily solvable by pure DP algorithms using a
small number of $(\min,+)$ or $(\max,+)$ operations, but the greedy
algorithm cannot achieve any non-trivial approximation factor (smaller
than the maximum number of elements in feasible solutions). Such are,
for example, the maximum weight independent set in a tree, or the
maximum weight simple path in a transitive tournament problem, and
many other problems. To give a trivial example, note that the problem
$f(x)=\max\{x_1,x_2+\cdots+x_n\}$ can be solved (within factor $r=1$)
by a trivial pure $(\max,+)$ DP algorithm performing only $n-1$
operations, but the greedy algorithm cannot achieve any smaller than
$r=n-1$ approximation factor for this problem (see \cref{prop:greedy}
in \cref{app:greedy}).  But what about the \emph{converse} direction:
can also pure DP algorithms have worse approximation behavior than
greedy?

Apparently, the first indication that greedy can also beat pure DP was
given by Jerrum and Snir~\cite{jerrum}. They proved that every
$(\min,+)$ circuit solving (exactly, within factor $r=1$) the
\emph{directed} MST problem on $n$-vertex graphs (known also as the
\emph{arborescence} problem) requires $2^{\Omega(n)}$ gates. Since the
family of feasible solutions of the arborescence problem is an
intersection of two matroids, the greedy algorithm can approximate
this problem within factor $r=2$. This result was later improved in
\cite{JS19} by showing that also the \emph{undirected} MST problem,
which can already be solved by the greedy algorithm exactly, requires
$(\min,+)$ circuits of size $2^{\Omega(\sqrt{n})}$ to be solved
exactly.

But what if pure DP algorithms are only required to \emph{approximate}
a given optimization problem within some factor $r > 1$?  Can greedy
algorithms achieve smaller approximation factors than efficient pure
DP algorithms? Our lower bounds on the size of approximating tropical
circuits answer this question in the \emph{affirmative}.

Below we summarize our main results. Since the approximation behaviors
of tropical $(\min,+)$ and $(\max,+)$ circuits turned out to be
completely different, we consider minimization and maximization
problems separately.

\section{Main results}
\label{sec:results}

Recall that a combinatorial optimization problem $f(x_1,\ldots,x_n)$
is specified by giving some family $\f\subseteq 2^{[n]}$ of feasible
solutions. The problem itself is then, given an input weighting
$x\in\RR_+^n$, to compute either the minimum or the maximum weight
$\sum_{i\in S}x_i$ of a feasible solution $S\in\f$.  To indicate the
total number $n$ of ground elements, we will also write $f_n$ instead
of just $f$.

\subsection*{Minimization}
The \emph{boolean version} of a minimization problem $f_n$ is the
monotone boolean function which, given a set of ground elements,
decides whether this set contains at least one feasible solution
of~$f_n$.

\begin{result}[Boolean bound for $(\min,+)$ circuits; \cref{thm:bool}]
  \label{res:bool}
  If the boolean version of a minimization problem $f_n$ requires
  monotone boolean $(\lor,\land)$ circuits of size $>t$, then no
  tropical $(\min,+)$ circuit of size $\leq t$ can approximate $f_n$
  within any finite factor $r=r(n)\geq 1$.
\end{result}

That is, if a tropical $(\min,+)$ circuit has fewer than $t$ gates,
then regardless of how large approximation factor $r$ we will allow,
there will be an input weighting on which the circuit makes an error:
the computed value on this input will be either strictly smaller or
more than $r$ times larger than the optimal value.

Together with known lower bounds for monotone boolean circuits,
\cref{res:bool} yields the same lower bounds for tropical $(\min,+)$
circuits approximating the corresponding minimization problems.

Take, for example the assignment problem: given a nonnegative
weighting of the edges of the complete bipartite $n\times n$ graph,
compute the minimum weight of a perfect matching.  Jerrum and
Snir~\cite{jerrum} have proved that any $(\min,+)$ circuit solving
this problem exactly (within the factor $r=1$) must have
$2^{\Omega(n)}$ gates.  On the other hand, together with Razborov's
monotone circuit lower bound for the logical permanent
function~\cite{razb-perm}, \cref{res:bool} implies that a polynomial
in $n$ number of gates is not sufficient to approximate this problem
even when an \emph{arbitrarily} large approximation factor is allowed:
for \emph{any} finite approximation factor $r=r(n)\geq 1$, at least
$n^{\Omega(\log n)}$ gates are necessary to approximate the assignment
problem within the factor~$r$.

By combining the boolean bound (\cref{res:bool}) with counting
arguments, we show that the greedy algorithm can beat approximating
pure $(\min,+)$ DP algorithms on some minimization problems.

\begin{result}[Greedy \mbox{}can beat $(\min,+)$ circuits;
  \cref{thm:min-gap}]
  \label{res:min-gap}
  There are doubly-exponentially many in $n$ minimization problems
  $f_n$ such that the greedy algorithm solves $f_n$ exactly, but any
  $(\min,+)$ circuit approximating $f_n$ within any finite factor
  $r=r(n)\geq 1$ must have $2^{\Omega(n)}$ gates.
\end{result}

Our proof of \cref{res:bool} is fairly simple, but it only gives us an
``absolute'' lower bound on the number of gates, below which no
tropical $(\min,+)$ circuit can approximate a given minimization
problem within \emph{any} factor. More interesting (and less simple),
however, is the fact that, after an appropriate definition of the
``semantic degree'' of monotone boolean circuits (\cref{sec:sem-deg}),
also a \emph{converse} of \cref{res:bool} holds: the approximation
power of tropical $(\min,+)$ circuits is \emph{captured} (not only
lower bounded) by the computational power of monotone boolean circuits
of bounded semantic degree.

\begin{result}[Converse of the boolean bound; \cref{thm:minexact}]
  \label{res:minexact}
  A minimization problem $f$ can be approximated within a factor $r$
  by a tropical $(\min,+)$ circuit of size~$t$ if and only if the
  boolean version of $f$ can be computed by a monotone boolean
  $(\lor,\land)$ circuit of size $t$ and semantic degree at most~$r$.
\end{result}
We prove this result in \cref{sec:bool-tight} using convexity
arguments. Yet another consequence of these arguments is (see
\cref{rem:min-to-arithm}) that, in order to show that the minimization
problem on a family $\f\subseteq 2^{[n]}$ of feasible solutions can be
$r$-approximated by a $(\min,+)$ circuit of size $t$, it is enough to
design a monotone \emph{arithmetic} $(+,\times)$ circuit of size $\leq
t$ such that the  polynomial computed by this circuit has the following
two properties (where we, as customary, only consider monomials with nonzero coefficients):
\begin{enumerate}
\item for every monomial $\prod_{i\in T}x_i^{d_i}$ there is a set
  $S\in\f$ with $S\subseteq T$;
\item for every set $S\in\f$ there is a monomial $\prod_{i\in
    T}x_i^{d_i}$ with $T=S$ and all $d_i\leq r$.
\end{enumerate}
That is, we can approximate minimization problems by designing
monotone arithmetic circuits of bounded degree. This is a (rough)
\emph{upper} bound on the size of approximating $(\min,+)$ circuits in
terms of \emph{arithmetic} circuits. \Cref{res:minexact} gives a
\emph{tight} bound, but in terms of \emph{boolean} circuits.

\subsection*{Maximization}

It turned out that not only the approximation behaviors of $(\min,+)$
and $(\max,+)$ circuits are different (approximation factors may be
unbounded in the former model, while they are always bounded in the
latter model), but also the task of proving lower bounds for
approximating $(\max,+)$ circuits is by far more difficult than that
for $(\min,+)$ circuits.

The point is that for approximating $(\max,+)$ circuits, even Shannon
type \emph{counting} arguments fail (see \cref{sec:count-fails}).  In
particular, there are doubly-exponentially many in $n$ maximization
problems $f_n$ such that $(\max,+)$ circuits require $2^{\Omega(n)}$
gates to solve any of them exactly (within the factor $r=1$), but
\emph{one single} $(\max,+)$ circuit of size $O(n^2)$ approximates
each of these problems within a just slightly larger than~$1$ factor
$r=1+o(1)$ (\cref{prop:counting1}).  Such a jump in circuit size
occurs also on \emph{random} maximization problems
(\cref{prop:counting2}). Moreover, there are also \emph{explicit}
maximization problems $f_n$ that require $(\max,+)$ circuits of size at
least $2^{n/4}$ to solve them exactly (within factor $r=1$), but can
be approximated within the factor $r=2$ by using only $n$ gates
(\cref{thm:sidon}).

Being warned by these facts, we go much deeper (than in the case of
minimization) into the structure of approximating $(\max,+)$ circuits
and prove a general ``rectangle lower bound'' for them.

Let $\f$ be a family of feasible solutions.  A~\emph{rectangle} is a
family of sets specified by a pair $\A,\B$ of families satisfying
$A\cap B=\emptyset$ for all $A\in \A$ and $B\in\B$. The rectangle
$\R=\A\lor \B$ itself consists of all sets $A\cup B$ with $A\in \A$
and $B\in\B$.  The rectangle $\R$ \emph{lies below} $\f$ if every set
of $\R$ is contained in at least one set of $\f$.  Given an
approximation factor $r\geq 1$, we say that a set $F\in\f$
\emph{appears $r$-balanced} in the rectangle $\R$ if there are sets
$A\in\A$ and $B\in\B$ such that $F$ shares $\geq |F|/r$ elements with
$A\cup B$, and $\geq |F|/3r$ elements with both $A$ and~$B$.

\begin{result}[Rectangle bound; special case of \cref{thm:rect}]
  \label{res:rect}
  If in any rectangle lying below $\f$, at most a $1/t$ portion of
  sets of $\f$ appear $r$-balanced, then every $(\max,+)$ circuit
  approximating the maximization problem on $\f$ within the factor $r$
  must have at least $t$ gates.
\end{result}

Using the rectangle bound, we show that already a slight decrease of
the allowed approximation factors $r$ can make tractable problems
intractable, and that this happens for arbitrarily large
factors~$r$. In the following result formalizing this phenomenon,
$\err>0$ is an arbitrarily small constant.

\begin{result}[Factor hierarchy theorem; \cref{thm:hierarchy}]\label{res:hierarchy}
  For every prime power $m$ and integer $1\leq d\leq m$, there is an
  explicit maximization problem $f_n$ on $n=m^2$ ground elements which
  can be approximated within the factor $r=m/d$ by a $(\max,+)$
  circuit of size $3n$, but any $(\max,+)$ circuit approximating $f_n$
  within the factor $(1-\err)r$ must have at least $n^{\err d/4}$
  gates.
\end{result}

Finally, using the rectangle bound, we show that there are explicit
maximization problems $f_n$ such that $(\max,+)$ circuits of
polynomial in $n$ size cannot achieve even an exponentially larger
factor than the factor achieved by the greedy algorithm on~$f_n$.

\begin{result}[Greedy can beat $(\max,+)$ circuits; \cref{thm:matching}]\label{res:matching}
  For every integer $r\geq 6$, there are explicit maximization
  problems $f_n$ such that the greedy algorithm approximates $f_n$
  within the factor $r$, but every $(\max,+)$ circuit approximating
  $f_n$ within the factor $2^r/9$ must have $2^{n^{\Omega(1)}}$ gates.
\end{result}

Families of feasible solutions of the maximization problems $f_n$ in
\cref{res:hierarchy} are particular combinatorial designs, while those
in \cref{res:matching} are families of perfect matchings in
$r$-partite $r$-uniform hypergraphs.

\subsection*{The algorithmic message}
As we already mentioned above, it was long known that for some
combinatorial optimization problems, greedy algorithms can have much
worse approximation behavior than pure DP algorithms. Thus,
\cref{res:min-gap,res:matching} imply that the approximation powers of
greedy and pure DP algorithms are \emph{incomparable}: on some
optimization problems, pure DP algorithms can also have much
\emph{worse} approximation behavior than greedy.

\subsection*{Why ``only'' pure DP?}
In this paper, we only consider pure $(\min,+)$ and $(\max,+)$ DP
algorithms.  Non-pure DP algorithms may use other arithmetic
operations, rounding, as well as very powerful operations like
conditional branchings (via if-then-else constraints), argmin, argmax,
etc. The presence of such operations makes the corresponding circuit
models no longer amenable for analysis using known mathematical
tools. In particular, such DP algorithms have the full power of
arithmetic circuits as well as of unrestricted boolean
$(\lor,\land,\neg)$ circuits (for example, $\neg x$ is a simple
conditional branching operation {\sf if} $x=0$ {\sf then} $1$ {\sf
  else} $0$).  Let us stress that our goal is to prove (unconditional)
\emph{lower bounds}. In the context of this task, even proving lower
bounds for exactly solving $(\min,+,-)$ circuits (tropical circuits
with subtraction operation allowed), remains a challenge (see
\cref{sec:subtraction}).

\subsection*{Organization} In \cref{sec:preliminaries}, we recall the
concept of sets \emph{produced} by circuits, and show that when
approximating combinatorial optimization problems, we can safely
assume that tropical circuits are \emph{constant-free}, that is,
contain no constants as inputs (\cref{lem:constant-free}).
\Cref{sec:minplus,sec:maxplus,sec:structure-tight,sec:bool-tight} are
devoted to the proofs of our main results. \Cref{res:bool,res:min-gap}
are proved in \cref{sec:minplus}, and
\cref{res:rect,res:hierarchy,res:matching} are proved in
\cref{sec:maxplus}.  In \cref{sec:structure-tight}, we use convexity
arguments (Farkas' lemma) to give a \emph{tight} structural connection
between the sets of feasible solutions of optimization problems to be
\emph{approximated} and the sets of feasible solutions \emph{produced}
by approximating tropical circuits. In \cref{sec:bool-tight}, we prove
the \emph{converse} of our boolean lower bound for approximating
$(\min,+)$ circuits (\cref{res:minexact}). The concluding section
(\cref{sec:concl}) contains some open problems. In \cref{app:greedy},
we recall greedy algorithms. In \cref{app:sidon}, we exhibit an
exponential (almost maximal possible) decrease in the size of
$(\max,+)$ circuits on \emph{explicit} maximization problems when
going from the approximation factor $r=1$ (exact solution) to factor
$r=2$.

\subsection*{Notation}
Through the paper, $\NN=\{0,1,2,\ldots\}$ will denote the set of all
nonnegative integers, $[n]=\{1,2,\ldots,n\}$ the set of the first $n$
positive integers, $\RR_+$ the set of all nonnegative real numbers,
and $2^E$ the family of all subsets of a set~$E$. Also, $\vnul$ will
denote the all-$0$ vector, $\vec{e}_i$ will denote the $0$-$1$ vector
with exactly one $1$ in the $i$th position. For sets
$A,B\subseteq\RR^n$ of vectors, their \emph{Minkowski sum} (or
\emph{sumset}) is the set of vectors $A+B=\{a+b\colon a\in A, b\in
B\}\subseteq\RR^n$, where $a+b=(a_1+b_1,\ldots,a_n+b_n)$ is the
componentwise sum of vector $a$ and $b$. That is, we add every vector
of $B$ to every vector of~$A$.  For a real vector $a=(a_1,\ldots,a_n)$
and a scalar $\lambda\in\RR$, $\lambda\cdot a$ stands for the vector
$(\lambda a_1,\ldots,\lambda a_n)$. If $A\subseteq\RR^n$ is a set of
vectors, then $\lambda\cdot A$ stands for the set of vectors
$\{\lambda\cdot a\colon a\in A\}$. The \emph{support} of vector $a$ is
the set $\supp{a}=\{i\colon a_i\neq 0\}$ of its nonzero positions.

As customary, a family $\f$ of sets is an \emph{antichain} if none of its
sets is a proper subset of another set of~$\f$. For two vectors
$a\in\RR^n$ and $b\in\RR^n$ we write $a\leq b$ if $a_i\leq b_i$ holds
for all positions $i=1,\ldots,n$. A set $A$ of vectors is an
\emph{antichain} if $a\leq a'$ holds for no two distinct vectors
$a\neq a'\in A$. The \emph{characteristic vector} of a set
$S\subseteq[n]$ is the vector $a\in\{0,1\}^n$ with $a_i=1$ if and only
if $i\in S$.

\section{Preliminaries}
\label{sec:preliminaries}
Every finite set $A\subset\NN^n$ of \emph{feasible solutions} defines
a \emph{discrete optimization problem} of the form $f(x)=\min_{a\in
  A}\skal{a,x}$ or of the form $f(x)=\max_{a\in A}\skal{a,x}$, where
here and in what follows, $\skal{a,x}=a_1x_1+\cdots+a_nx_n$ stands for
the scalar product of vectors $a=(a_1,\ldots,a_n)$ and
$x=(x_1,\ldots,x_n)$.

We will refer to such problems as \emph{problems defined by} $A$, or
as \emph{problems on}~$A$. Such a problem is a \emph{$0$-$1$
  optimization problem} if the set $A\subseteq\{0,1\}^n$ of feasible
solutions consists of only $0$-$1$ vectors. These latter problems are
exactly what we called ``combinatorial optimization'' problems on
\emph{families} $\f\subseteq 2^{[n]}$ of feasible solutions, where~$\f$ consists of all sets $S_a=\{i\colon a_i=1\}$ for vectors $a\in
A$.

To avoid trivialities, we will throughout assume that the all-$0$
vector (or the empty set) is \emph{not} a feasible solution, that is,
we will always assume that $\vnul\not\in A$ and $\emptyset\not\in\f$.

\subsection{Circuits over semirings}
Recall that a (commutative) \emph{semiring} is a set $R$ closed under
two associative and commutative binary operations ``addition''
$(\suma)$ and ``multiplication'' $(\daug)$, where multiplication
distributes over addition: $x\daug(y\suma z)=(x\daug y)\suma (x\daug
z)$.  That is, in a semiring, we can ``add'' and ``multiply''
elements, but neither ``subtraction'' nor ``division'' are necessarily
possible. Besides of being commutative, we will assume that the
semiring contains a multiplicative identity element $\vienas$ with
$\vienas\daug x=x\daug\vienas=x$.

A \emph{circuit} over a semiring $R$ is a directed acyclic graph;
parallel edges joining the same pair of nodes are allowed.  Each
indegree-zero node (an \emph{input} node) holds either one of the
variables $x_1,\ldots,x_n$ or a semiring element.  Every other node, a
\emph{gate}, has indegree two and performs one of the semiring
operations. One of the gates is designated as the output gate. The
\emph{size} of a circuit is the total number of gates in it.  A
circuit is \emph{constant-free} if it has no semiring elements as
inputs.

Since in any semiring $(R,\suma,\daug)$, multiplication distributes
over addition, each circuit $\Phi$ over $R$ computes (at the output
gate) some polynomial
\begin{equation}\label{eq:pol}
  \Phi(x_1,\ldots,x_n)=\sum_{b\in B}\const{b} X^b \ \ \mbox{ with }\ \ X^b=\prod_{i=1}^n x_i^{b_i}\,
\end{equation}
over $R$ in a natural way, where $B\subset \NN^n$ is some set of
\emph{exponent vectors}, and $x_i^k$ stands for $x_i\daug
x_i\daug\cdots\daug x_i$ $k$-times.  Since we only consider semirings
with multiplicative identity, coefficients $\const{b}\in R$ are
semiring elements. To see why this assumption is necessary, consider
the semiring $(R,+,\times)$, where $R$ is the set of all positive even
integers. Then the coefficient $3$ of the monomial $x$ in the
polynomial $x+x+x$ is not a semiring element.

In this paper, we will mainly consider circuits over three commutative
and idempotent semirings $(R,\suma,\daug)$. In the \emph{boolean}
$(\lor,\land)$ semiring, we have $R=\{0,1\}$, $x\suma y:=x\lor y$ and
$x\daug y:=x\land y$. In the \emph{tropical} $(\min,+)$ semiring, we
have $R=\RR_+$, $x\suma y:=\min(x,y)$ and $x\daug y:=x+y$.  Similarly,
in the \emph{tropical} $(\max,+)$ semiring, we have $R=\RR_+$, $x\suma
y:=\max(x,y)$ and $x\daug y:=x+y$.  The multiplicative identity element
in the boolean semiring is $\vienas=1$, and is $\vienas=0$ in both
tropical semirings.  Over the boolean semiring, the polynomial
\cref{eq:pol} computes the monotone boolean function
\[
\Phi(x)=\bigvee_{b\in B} \bigwedge_{i:b_i\neq 0} x_i\,.
\]
Over the tropical semirings, every monomial $X^b=\prod_{i=1}^n
x_i^{b_i}$ turns into the scalar product
$X^b=\sum_{i=1}^nb_ix_i=\skal{b,x}$ of vectors $b$ and $x$. Hence, the
polynomial \cref{eq:pol} solves one of the two optimization problems
with linear objective functions:
\begin{equation}\label{eq:trop-pol}
  \Phi(x)=\min_{b\in B}\
  \skal{b,x}+\const{b}\ \mbox{ or }\ \Phi(x)=\max_{b\in B}\
  \skal{b,x}+\const{b}\,.
\end{equation}
Note that if a tropical circuit $\Phi$ is constant-free, then
$\const{b}=0$ holds for all $b\in B$.

\subsection{Sets of vectors produced by circuits}
\label{sec:produced}
A simple, but important in our later analysis, observation is that
every circuit of $n$ variables over a semiring $(R,\suma,\daug)$ not
only \emph{computes} some polynomial over $R$, but also
\emph{produces} (purely syntactically) a finite set of vectors in
$\NN^n$ in a natural way.

At each input node holding a semiring element, the same set
$\{\vec{0}\}$ is produced.  At an input node holding a variable $x_i$,
the set $\{\vec{e}_i\}$ is produced. At an ``addition'' $(\suma)$
gate, the union of sets produced at its inputs is produced. Finally,
at a ``multiplication'' $(\daug)$ gate, the Minkowski sum of sets
produced at its inputs is produced. The set produced by the entire
circuit is the set produced at its output gate.

It is clear that the same circuit $\Phi$ with only ``addition''
$(\suma$) and ``multiplication'' $(\daug)$ gates may \emph{compute}
different functions over different semirings. It is, however,
important to note that the set $B\subset\NN^n$ of vectors
\emph{produced} by $\Phi$ is always the same---it only depends on the
circuit itself, not on the underlying semiring.

On the other hand, up to coefficients, the polynomial function
\emph{computed} by the circuit $\Phi$ is determined by the set of
\emph{produced} vectors.

\begin{prop}\label{prop:function}
  If $B\subset\NN^n$ is the set of vectors produced by a circuit
  $\Phi$ over a semiring $R$, then $\Phi$ computes some polynomial
  over $R$ whose set of exponent vectors coincides with $B$.
\end{prop}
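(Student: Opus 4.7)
\medskip
\noindent\emph{Proof proposal.} The plan is to prove the statement by structural induction on the circuit $\Phi$, following the topological order of its gates. At each node $v$ of $\Phi$, I will track two objects in parallel: the set $B_v\subset\NN^n$ produced at $v$ (as defined syntactically in \cref{sec:produced}), and the polynomial $\Phi_v=\sum_{b\in E_v}\const{b}X^b$ over $R$ that $\Phi$ computes at $v$, where $E_v$ denotes its set of exponent vectors (i.e., those exponents whose coefficient is a genuine semiring element that arises from combining the constants and variables seen so far). The claim to carry through the induction is the equality $E_v=B_v$.

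For the base cases, an input node holding a semiring constant $c$ computes the polynomial $c=c\cdot X^{\vnul}$, so $E_v=\{\vnul\}=B_v$; an input node holding a variable $x_i$ computes $\vienas\cdot X^{\vec{e}_i}$, so $E_v=\{\vec{e}_i\}=B_v$. For the inductive step at an addition gate $v=u\suma w$, distributivity of $\daug$ over $\suma$ is not needed: by definition, $\Phi_v=\Phi_u\suma\Phi_w$, and grouping like monomials gives $\Phi_v=\sum_{b\in E_u\cup E_w}\mu_b X^b$ with a semiring coefficient $\mu_b$ obtained from $\const{b}^{(u)}$ and/or $\const{b}^{(w)}$ (those that are present). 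Since $B_v=B_u\cup B_w$ by definition of the produced set, the induction hypothesis immediately yields $E_v=B_v$. For a multiplication gate $v=u\daug w$, distributivity gives
\[
\Phi_v=\Phi_u\daug\Phi_w=\sum_{b'\in E_u}\sum_{b''\in E_w}(\const{b'}^{(u)}\daug\const{b''}^{(w)})X^{b'+b''},
\]
and collecting equal exponents presents $\Phi_v$ as $\sum_{b\in E_u+E_w}\nu_b X^b$; since $B_v=B_u+B_w$ is precisely the Minkowski sum, the induction hypothesis again gives $E_v=B_v$.

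The only delicate point, and the main thing to justify carefully, is that regrouping like monomials does not cause an exponent vector to \emph{disappear} from the polynomial: equivalently, that the coefficients $\mu_b,\nu_b$ obtained by the regrouping are honest elements of the semiring and are not treated as ``zero''. This is where the standing assumption that $R$ is a commutative semiring with multiplicative identity $\vienas$ is used: there is no additive inverse available, so sums of semiring elements never cancel, and sums and products of coefficients that are already semiring elements remain semiring elements. Hence the set of exponents appearing in the regrouped expression is exactly the union, respectively the Minkowski sum, of the exponent sets of the inputs, matching $B_v$ by construction.

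Applying the induction to the output gate of $\Phi$ yields the claim. I do not expect a genuine obstacle here; the only thing one must resist is the urge to speak about ``nonzero coefficients'' in a semiring that has no zero, which is avoided by phrasing everything as ``the exponents actually appearing after grouping'', matched step-by-step against the syntactic definition of the produced set.
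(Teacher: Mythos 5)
Your proof is correct and follows essentially the same route as the paper's: induction on the circuit, with the exponent set being the union at an ``addition'' gate and the Minkowski sum at a ``multiplication'' gate, and the multiplicative identity guaranteeing that all coefficients arising in the expansion are genuine semiring elements. One minor caveat: your aside that a semiring has ``no additive inverse available, so sums of semiring elements never cancel'' is not actually guaranteed by the paper's definition (every commutative ring is a commutative semiring in that sense), but this is not load-bearing, since the proposition concerns the formally expanded polynomial and your final framing---matching the exponents that appear after grouping against the syntactic definition of the produced set---is exactly what the paper's proof does.
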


\begin{proof}
  Simple induction on the size of a circuit $\Phi$. Let
  $B\subset\NN^n$ be the set of vectors produced, and $f\colon R^n\to
  R$ the polynomial function computed by~$\Phi$.

  If the circuit $\Phi$ consists of a single input node holding a
  semiring element $\const{}\in R$, then $f(x)=\const{}$ is a constant
  polynomial with a single exponent vector $\vec{0}$. If $\Phi$
  consists of a single input node holding a variable $x_i$, then
  $f(x)=x_i$ is a degree-$1$ polynomial with the single exponent
  vector~$\vec{e}_i$.

  Now, the set of exponent vectors of a sum of two polynomials is just
  the union of the sets of exponent vectors of these polynomials.
  Finally, when multiplying two polynomials, we multiply each monomial
  of the first polynomial with all monomials of the second
  polynomial. The exponent vector of a product of two monomials is the
  sum of exponent vectors of these monomials.
\end{proof}

\bild

\begin{rem}
  In general, \cref{prop:function} has no converse, even for
  constant-free circuits: if a circuit $\Phi$ computes some polynomial
  $f$, then $\Phi$ does \emph{not} need to produce the set of exponent
  vectors of~$f$; a simple example for tropical circuits is given in
  \cref{fig:produced}.  Monotone arithmetic $(+,\times)$ circuits,
  that is, circuits over the arithmetic semiring $(\RR_+,+,\times)$,
  are here an exception: for them, also the converse of
  \cref{prop:function} holds. Namely, if such a circuit
  \emph{computes} a polynomial $f$, then the set of vectors
  \emph{produced} by the circuit is exactly the set of exponent
  vectors of this polynomial~$f$. This holds because, if two
  arithmetic polynomials coincide on sufficiently many (with respect
  to the number of variables and the degrees of these polynomials)
  inputs, then these polynomials must \emph{syntactically} coincide
  (even up to coefficients).
\end{rem}

\subsection{Eliminating constant inputs}
\label{sec:cf}
Recall that an optimization problem on a set $A\subset\NN^n$ of
feasible solutions is of the form $f(x)=\min_{a\in A}\skal{a,x}$ or of
the form $f(x)=\max_{a\in A}\skal{a,x}$. To avoid trivialities, we
always assume that $A\neq\emptyset$ and $\vnul\not\in A$.

These problems are ``constant-free'' in that they are
\emph{completely} specified by their sets $A$ of feasible solutions:
there are no additional constant terms.  In contrast, since tropical
circuits can have constant inputs, the optimization problems actually
solved by such circuits (exactly) may be not constant-free: they may
have additional constant terms; see \cref{eq:trop-pol}.

However, as the following lemma shows, when dealing with tropical
circuits approximating (constant-free) optimization problems, we can
safely restrict ourselves to constant-free circuits.  Recall that
constant-free circuits only use the variables $x_1,\ldots,x_n$ as
inputs.

\begin{lem}[Eliminating constant inputs]\label{lem:constant-free}
  If an optimization problem on a set $A\subset\NN^n$ can be
  $r$-approx\-i\-ma\-ted by a tropical circuit of size $t$, then this
  problem can also be $r$-approximated by a constant-free tropical
  circuit of size~$t$.
\end{lem}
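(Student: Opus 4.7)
The key observation is that the optimization problem $f$ is $1$-homogeneous: $f(\lambda x) = \lambda f(x)$ for all $\lambda \geq 0$. This should let us ``scale out'' the additive constants introduced by constant inputs. The plan is, first, to extract from the given circuit $\Phi$ a constant-free function $\Phi^{\mathrm{cf}}$ that still $r$-approximates $f$, and second, to realize $\Phi^{\mathrm{cf}}$ by a constant-free circuit of size at most $t$ through a purely syntactic transformation.

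For the first step, by \cref{prop:function} and \cref{eq:trop-pol}, $\Phi$ computes $\Phi(x) = \min_{b \in B}\bigl(\skal{b,x} + \const{b}\bigr)$, where $B$ is the set produced by $\Phi$ and $\const{b} \geq 0$. I would define $\Phi^{\mathrm{cf}}(x) := \min_{b \in B} \skal{b,x}$. Substituting $\lambda x$ into the $r$-approximation inequality and using $1$-homogeneity of $f$ gives $f(x) \leq \Phi(\lambda x)/\lambda \leq r\, f(x)$; letting $\lambda \to \infty$, each term $\skal{b,x} + \const{b}/\lambda$ tends to $\skal{b,x}$, so $\Phi(\lambda x)/\lambda \to \Phi^{\mathrm{cf}}(x)$, yielding $f(x) \leq \Phi^{\mathrm{cf}}(x) \leq r\, f(x)$. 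The $(\max,+)$ case is completely analogous.

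For the second step, let $\Phi_0$ be obtained from $\Phi$ by replacing every constant input by $\vienas = 0$. Since the set produced by a circuit depends only on its DAG structure (Section~\ref{sec:produced}), $\Phi_0$ produces the same $B$, and an easy induction on gates shows that $\Phi_0$ computes $\Phi^{\mathrm{cf}}$: variable inputs contribute $\skal{\vec{e}_i,x}$, $0$-constant inputs contribute $\skal{\vnul,x} = 0$, and the two tropical operations combine these correctly. I would then repeatedly apply two local rewrites: a $+$ gate with a $0$-constant input collapses to its other input, and a $\min$ gate with a $0$-constant input collapses to~$0$ (in the $(\max,+)$ case, both rules collapse to the other input since $\max(0,u) = u$). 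Each rewrite strictly decreases the number of gates, so the procedure terminates.

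The main subtle point, specific to the $(\min,+)$ case, is to rule out that the cascading $0$'s from the second rewrite reach the output gate. The argument is by contradiction: if the output of the simplified circuit is the constant $0$, then $\Phi^{\mathrm{cf}} \equiv 0$, so the $r$-approximation forces $f \equiv 0$; but $A \neq \emptyset$ and $\vnul \notin A$ imply $f(x) > 0$ for any strictly positive $x$, a contradiction. Hence the simplified output is a gate or a variable, and any leftover $0$-constants are orphaned inputs that can be deleted without changing the gate count. The result is a constant-free tropical circuit of size $\leq t$ computing $\Phi^{\mathrm{cf}}$, which $r$-approximates $f$, as required.
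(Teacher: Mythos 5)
Your proof is correct and follows essentially the same route as the paper: replace the constant inputs by $0$, eliminate the zeros by the same local rewrites (ruling out a constant-$0$ output because $f$ is positive on strictly positive weightings), and exploit the $1$-homogeneity of $f$ under scaling to show that the constant-free function still $r$-approximates. The only cosmetic difference is that you derive $f \leq \Phi^{\mathrm{cf}} \leq r\, f$ by letting the scaling factor tend to infinity, whereas the paper plugs in $x=\vec{0}$ in the $(\max,+)$ case and picks one explicit scalar to reach a contradiction in the $(\min,+)$ case; both arguments rest on the same observation that the additive constants become negligible after scaling the input.
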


\begin{proof}
  Let $\Phi$ be a tropical $(\max,+)$ or $(\min,+)$ circuit, and
  $B\subset\NN^n$ the set of vectors produced by $\Phi$. By
  \cref{prop:function}, the circuit computes the maximum or the
  minimum, over all vectors $b\in B$, of linear functions
  $\skal{b,x}+\const{b}$, where $\const{b}\in\RR_+$ are some
  constants.

  We obtain the \emph{constant-free version} $\Phi^*$ of $\Phi$ as
  follows. First, replace every constant input by~$0$. Then eliminate
  zeros by repeatedly replacing gates $u+0$ and $\max(u,0)$ by the
  gate $u$, and a gate $\min(u,0)$ by an input node holding $0$.
  Since $\Phi(x)\neq 0$ must hold for at least one $x\in\RR_+^n$, the
  constant $0$ input also disappears at the end of this replacement.
  Since constant inputs can only affect the additive constant terms
  $\const{b}$, the constant-free version $\Phi^*$ computes the maximum
  or the minimum of linear functions $\skal{b,x}$ \emph{without} any
  constant terms. Our goal is to show that $\Phi^*$ still
  $r$-approximates our optimization problem $f$ on the set~$A$.

  \case{1} $\Phi$ is a $(\max,+)$ circuit; hence, $f(x)=\max_{a\in A}
  \skal{a,x}$. In this case, we have that $\Phi^*(x)=\max_{b\in
    B}\skal{b,x}$, and $\Phi(x)=\max_{b\in B}\skal{b,x}+\const{b}$ for
  some nonnegative constants $\const{b}\in\RR_+$. Since $\Phi$
  approximates $f$, $\Phi(x)\leq f(x)$ must hold for all input
  weightings $x\in\RR_+^n$. Taking $x=\vnul$, we obtain
  $\Phi(\vnul)\leq f(\vnul)=0$ and, hence, $\const{b}=0$ for all $b\in
  B$. Thus, in the case of maximization, the constant-free version of
  the circuit solves just the same problem as the original circuit,
  and we are done.

  \case{2} $\Phi$ is a $(\min,+)$ circuit; hence, $f(x)=\min_{a\in A}
  \skal{a,x}$. Since $\Phi$ $r$-approximates $f$, we know that the
  inequalities $f(x)\leq \Phi(x)\leq r\cdot f(x)$ must hold for all
  $x\in\RR_+^n$.  We have to show that $\Phi^*$ also satisfies these
  inequalities. We know that $\Phi^*(x)=\min_{b\in B}\skal{b,x}$, and
  $\Phi(x)=\min_{b\in B}\skal{b,x}+\const{b}$ for some nonnegative
  constants $\const{b}\in\RR_+$.

  Since the constants $\const{b}$ are nonnegative, we clearly have
  $\Phi^*(x)\leq \Phi(x)$ and, hence, also $\Phi^*(x)\leq r\cdot f(x)$
  for all $x\in\RR_+^n$.  So, it remains to show that $\Phi^*(x)\geq
  f(x)$ holds for all $x\in\RR_+^n$, as well. We know that
  $\Phi(x)\geq f(x)$ holds for all $x\in\RR_+^n$.

  Assume contrariwise that $\Phi^*(\xx)< f(\xx)$ holds for some input
  weighting $\xx\in\RR_+^n$. Then the difference
  $\dff=f(\xx)-\Phi^*(\xx)$ is positive. We also know that
  $\const{}:=\max_{b\in B}\const{b}$ is positive, for otherwise, there
  would be nothing to prove. So, take the constant
  $\scalar:=2\const{}/\dff>0$, and consider the input weighting
  $\zz:=\scalar\cdot \xx$. Since $\Phi^*(\xx)=f(\xx)-\dff$, and since
  $\Phi(x)\leq \Phi^*(x)+\const{}$ holds for all weightings
  $x\in\RR_+^n$, the desired contradiction follows:
  \begin{align*}
    \Phi(\zz)&=\Phi(\scalar\cdot \xx)\leq \Phi^*(\scalar\cdot
    \xx)+\const{}=\scalar\cdot\Phi^*(\xx)+\const{}
    =\scalar\cdot [f(\xx)-\dff]+\const{}\\
    &=\scalar\cdot f(\xx)-\scalar\cdot\dff+\const{} = f(\scalar\cdot
    \xx)-\const{} =f(\zz)-\const{} < f(\zz)\,.
  \end{align*}
\end{proof}

\section{Approximation limitations of (min,+) circuits}
\label{sec:minplus}

In this section, we first prove a general ``boolean bound'' for
approximating $(\min,+)$ circuits: if the boolean (decision) version
of a minimization problem requires monotone boolean $(\lor,\land)$
circuits of size at least $t$, then no $(\min,+)$ circuit of size $<t$
can approximate the problem within any finite factor
(\cref{thm:bool}). Together with known lower bounds on the monotone
boolean circuit complexity, this gives us explicit minimization
problems which are hard to approximate by $(\min,+)$ circuits and,
hence, by pure DP algorithms; three selected examples are given in
\cref{sec:minplus-explicit}. Then, in \cref{sec:min-vs-greedy}, we
combine the boolean bound (\cref{thm:bool}) with counting arguments to
show that greedy algorithms can ``hardly'' beat pure DP algorithms:
there exist many minimization problems solvable by the greedy
algorithm exactly, while polynomial-size $(\min,+)$ circuits cannot
approximate any of them within any finite factor.

\subsection{The boolean bound for approximating $(\min,+)$ circuits}
\label{sec:bool-bound}
Recall that the \emph{support} of a vector $a\in\NN^n$ is the set
$\supp{a}=\{i\colon a_i\neq 0\}$ of its nonzero positions. Every
finite set $A\subset\NN^n$ of vectors \emph{defines} the monotone
boolean function
\[
f_A(x)= \bigvee_{a\in A}~\bigwedge_{i\in \supp{a}}x_i\,.
\]
Note that, for every input $x\in\{0,1\}^n$, we have
\begin{equation}\label{eq:bool}
  \mbox{$\bool{A}(x)=1$ if and only if
    $\supp{x}\supseteq \supp{a}$ for some $a\in A$.}
\end{equation}
For example, if $A$ is the set of characteristic $0$-$1$ vectors of
perfect matchings in $K_{m,m}$, then $f$ accepts a subgraph $G$ of
$K_{m,m}$ if and only if $G$ contains a perfect matching.

Two sets $A,B\subseteq\NN^n$ are \emph{similar} if the support of
every vector $b\in B$ contains the support of at least one vector
$a\in A$, and vice versa.  That is, $A$ and $B$ are similar if and
only if
\begin{equation}\label{eq:similar}
  \mbox{$\forall b\in B\ \exists a\in A\colon\, \supp{b}\supseteq\supp{a}$ and $\forall a\in A\ \exists b\in B\colon\, \supp{a}\supseteq\supp{b}$.}
\end{equation}
Observation~\cref{eq:bool} immediately yields the following.

\begin{prop}\label{prop:bool-str}
  Two sets of vectors define the same boolean function if and only if
  these sets are similar.
\end{prop}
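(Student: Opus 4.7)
The plan is to derive both directions directly from the characterization in equation \cref{eq:bool}, without any extra machinery. The proof is essentially a two-line application of \cref{eq:bool} combined with the symmetric definition of similarity, so I would keep it short.

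For the ``if'' direction, suppose $A$ and $B$ are similar. I would fix an arbitrary $x\in\{0,1\}^n$ and show $f_A(x)=f_B(x)$. If $f_A(x)=1$, then by \cref{eq:bool} there is some $a\in A$ with $\supp{x}\supseteq\supp{a}$; by similarity there is some $b\in B$ with $\supp{a}\supseteq\supp{b}$; chaining the inclusions gives $\supp{x}\supseteq\supp{b}$, so $f_B(x)=1$ by \cref{eq:bool} again. The reverse implication is identical after swapping the roles of $A$ and $B$.

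For the ``only if'' direction, suppose $f_A=f_B$. I would pick an arbitrary $b\in B$ and evaluate both functions on the characteristic vector $x\in\{0,1\}^n$ of $\supp{b}$. Since $\supp{x}=\supp{b}$, \cref{eq:bool} (applied to $B$) yields $f_B(x)=1$, hence $f_A(x)=1$, hence by \cref{eq:bool} (applied to $A$) there exists $a\in A$ with $\supp{b}=\supp{x}\supseteq\supp{a}$. This gives the first condition in \cref{eq:similar}, and exchanging the roles of $A$ and $B$ yields the second.

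There is no real obstacle here; the only thing to be a little careful about is to not conflate the two asymmetric halves of the similarity condition, so I would explicitly invoke symmetry rather than try to fold both halves into a single paragraph.
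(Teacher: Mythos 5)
Your argument is correct and is exactly the ``immediate'' consequence of \cref{eq:bool} that the paper has in mind (the paper gives no further proof, stating only that the observation yields the proposition). Both directions are handled properly, including the key step of evaluating on the characteristic vector of $\supp{b}$ in the ``only if'' direction, so nothing is missing.
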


The main connection between approximating $(\min,+)$ circuits and
monotone boolean circuits is given by the following lemma.  The
\emph{boolean version} of a constant-free tropical $(\min,+)$ circuit
is the monotone boolean $(\lor,\land)$ circuit obtained by replacing
each $\min$-gate by an $\lor$-gate, and each $+$-gate by an
$\land$-gate.

\begin{lem}\label{lem:minA}
  If a constant-free $(\min,+)$ circuit $\Phi$ approximates the
  minimization problem on a set $A\subset \NN^n$ within a finite
  factor~$r=r(n)\geq 1$, then the boolean version of $\Phi$ computes
  the boolean function defined by~$A$.
\end{lem}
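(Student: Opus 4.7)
The plan is to let $B\subset\NN^n$ be the set of vectors produced by $\Phi$. Since $\Phi$ is constant-free, \cref{prop:function} tells us that, over the tropical $(\min,+)$ semiring, $\Phi$ computes the polynomial $\min_{b\in B}\skal{b,x}$. The boolean version $\Phi^*$ of $\Phi$ shares the same underlying DAG, hence produces the very same set $B$ (the set produced by a circuit depends only on its structure, not on the semiring). Applying \cref{prop:function} a second time, now over the boolean semiring, yields that $\Phi^*$ computes the monotone boolean function whose set of exponent vectors is exactly $B$, which is $\bool{B}$. By \cref{prop:bool-str}, proving $\bool{A}=\bool{B}$ is therefore equivalent to checking that $A$ and $B$ are similar in the sense of~\cref{eq:similar}.

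For the first half of similarity, I would argue by contradiction: suppose some $b\in B$ satisfies $\supp{a}\not\subseteq\supp{b}$ for every $a\in A$. Consider the input weighting $x\in\RR_+^n$ with $x_i=0$ for $i\in\supp{b}$ and $x_i=1$ otherwise. Then $\skal{b,x}=0$, so $\Phi(x)=\min_{b'\in B}\skal{b',x}=0$. On the other hand, every $a\in A$ has a coordinate $i\in\supp{a}\setminus\supp{b}$ with $a_i\geq 1$ and $x_i=1$, whence $\skal{a,x}\geq 1$ and $f(x)\geq 1>0=\Phi(x)$. This contradicts the lower inequality $\Phi(x)\geq f(x)$ of the $r$-approximation, and notably does not use any finiteness assumption on~$r$.

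For the second half of similarity, I would dually suppose that some $a\in A$ has $\supp{b}\not\subseteq\supp{a}$ for every $b\in B$, and consider the input weighting $x$ with $x_i=0$ for $i\in\supp{a}$ and $x_i=1$ otherwise. Now $f(x)\leq\skal{a,x}=0$ and $f(x)\geq 0$ force $f(x)=0$; but every $b\in B$ has some coordinate $i\in\supp{b}\setminus\supp{a}$ with $b_i\geq 1$ and $x_i=1$, so $\Phi(x)=\min_{b\in B}\skal{b,x}\geq 1$. This contradicts the upper inequality $\Phi(x)\leq r\cdot f(x)=r\cdot 0=0$ of the $r$-approximation.

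The second direction is the main (and really only) subtle point in the proof, because it is the sole place where the finiteness of $r=r(n)$ is actually used: were $r$ allowed to be $\infty$, the upper bound $\Phi\leq r\cdot f$ would be vacuous on inputs with $f(x)=0$, and the containment $\supp{b}\subseteq\supp{a}$ could legitimately fail for some $a\in A$. So the lemma's hypothesis that $r$ be finite cannot be dropped, and this is precisely what makes the argument go through in the second step.
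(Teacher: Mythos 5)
Your proof is correct and follows the same overall strategy as the paper's: invoke \cref{prop:function} over both semirings, reduce to showing that $B$ is similar to $A$ via \cref{prop:bool-str}, and prove each half of similarity by exhibiting a fooling weighting. The first half is verbatim the paper's argument. In the second half you depart from the paper with a cleaner weighting: you set $x_i=0$ on $\supp{a}$ and $x_i=1$ elsewhere, which forces $f(x)=0$ yet $\Phi(x)\geq 1$, directly contradicting $\Phi(x)\leq r\cdot f(x)=0$. The paper instead takes $x_i=1$ on $\supp{a}$ and $x_i=rM+1$ elsewhere, where $M=\max\{\skal{a,a}\colon a\in A\}$, so that $f(x)\leq M$ while $\Phi(x)\geq rM+1>r\cdot f(x)$; the paper's choice keeps all entries strictly positive, whereas yours uses zero weights on $\supp{a}$. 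Both versions use finiteness of $r$ exactly in this second half --- the paper needs $rM+1$ to be a finite real weight, and you need $r\cdot 0$ to evaluate to $0$ --- and your closing diagnosis of why $r<\infty$ cannot be dropped is accurate and matches the paper's own commentary in \cref{rem:semdeg}.
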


\begin{proof}
  Let $B\subset \NN^n$ be the set of vectors produced by $\Phi$. Since
  the circuit $\Phi$ is constant-free, it solves the minimization
  problem $\Phi(x)=\min_{b\in B}\skal{b,x}$ defined by this set
  $B$. The minimization problem on $A$ is $f(x)=\min_{a\in
    A}\skal{a,x}$.  We know that $f(x)\leq \Phi(x)\leq r\cdot f(x)$
  must hold for all input weightings $x\in\RR_+^n$.  The boolean
  version $\bphi$ of $\Phi$ also produces the same set $B$. By
  \cref{prop:bool-str}, it remains to show that the set $B$ is similar
  to $A$; see~\cref{eq:similar}.

  For the sake of contradiction, suppose first that there is a vector
  $b\in B$ such that $\supp{a}\setminus \supp{b}\neq\emptyset$ holds
  for all vectors $a\in A$.  Consider the assignment $x\in\{0,1\}^n$
  of weights such that $x_i=0$ for $i\in\supp{b}$, and $x_i=1$ for
  $i\not\in\supp{b}$.  On this weighting, we have $\Phi(x)\leq
  \skal{b,x}=0$. But since every vector $a\in A$ has a position
  $i\not\in\supp{b}$ with $a_i\neq 0$, $\skal{a,x}\geq 1$ holds for
  all $a\in A$ and, hence, also $f(x)\geq 1$, contradicting the
  inequality $f(x)\leq \Phi(x)$.

  Now suppose that there is a vector $a\in A$ such that
  $\supp{b}\setminus \supp{a}\neq\emptyset$ holds for all vectors
  $b\in B$. Let $M=\max\{\skal{a,a}\colon a\in A\}$, and consider the
  weighting $x\in\{1,rM+1\}^n$ such that $x_i=1$ for all $i\in
  \supp{a}$ and $x_i=rM+1$ for all $i\not\in\supp{a}$ (note that
  $rM+1$ is a finite number, because both the approximation factor $r$
  and the set $A$ are finite.  Then $f(x)\leq
  \skal{a,x}=\skal{a,a}\leq M$. But since every vector $b\in B$ has a
  position $i\not\in\supp{a}$ such that $b_i\geq 1$, we have
  $\Phi(x)\geq rM+1 > r\cdot f(x)$, contradicting the inequality
  $\Phi(x)\leq r\cdot f(x)$.
\end{proof}

For a set $A\subseteq\NN^n$ of vectors, let $\Bool{A}$ denote the
minimum size of a monotone boolean $(\lor,\land)$ circuit computing
the boolean function $\bool{A}$ defined by~$A$. Let also $\Min{A}{r}$
denote the minimum size of a tropical $(\min,+)$ circuit approximating
the minimization problem on $A$ within the factor~$r$.

\begin{thm}[Boolean bound]\label{thm:bool}
  For every finite set $A\subset\NN^n$ and every finite factor
  $r=r(n)\geq 1$, we have $\Min{A}{r}\geq \Bool{A}$.
\end{thm}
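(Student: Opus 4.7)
The plan is to assemble the theorem directly from the two lemmas already proved in the excerpt: \cref{lem:constant-free} (eliminating constant inputs) and \cref{lem:minA} (the boolean-version lemma). The statement is essentially a ``packaging'' result, so no new ideas are needed beyond these.

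First I would take a tropical $(\min,+)$ circuit $\Phi$ of minimum size $t=\Min{A}{r}$ that $r$-approximates the minimization problem on $A$. By \cref{lem:constant-free}, I can replace $\Phi$ by its constant-free version $\Phi^*$ of size at most $t$ that still $r$-approximates the same problem. So it suffices to prove the bound under the additional assumption that the circuit is constant-free.

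Next I would invoke \cref{lem:minA} on $\Phi^*$: since $\Phi^*$ is a constant-free $(\min,+)$ circuit $r$-approximating the minimization problem on $A$ within a \emph{finite} factor (this finiteness is exactly what the hypothesis $r=r(n)\geq 1$ gives us), the boolean version $\bphi$ of $\Phi^*$---obtained by replacing each $\min$-gate by $\lor$ and each $+$-gate by $\land$---computes the boolean function $\bool{A}$ defined by~$A$. The boolean version is produced by a purely local gate-by-gate substitution, so it has the same number of gates as $\Phi^*$, namely at most $t$. Therefore $\Bool{A}\leq t=\Min{A}{r}$, which is the desired inequality.

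The only thing to be careful about is bookkeeping: verifying that the hypotheses of the two lemmas are met (in particular that approximation is preserved by \cref{lem:constant-free}, and that the factor $r$ is finite, so that \cref{lem:minA} applies), and noting that the gate-for-gate substitution $\min\mapsto\lor$, $+\mapsto\land$ does not change the circuit size. I do not expect any genuine obstacle here; the substantive arguments have already been carried out in the two preceding lemmas.
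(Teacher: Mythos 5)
Your proposal is correct and is essentially identical to the paper's own proof: it combines \cref{lem:constant-free} with \cref{lem:minA} and observes that the gate-for-gate boolean substitution preserves size. Nothing is missing.
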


\begin{proof}
  Take a $(\min,+)$ circuit $\Phi$ of size $t=\Min{A}{r}$
  approximating the minimization problem on $A$ within the factor $r$,
  and let $B\subset\NN^n$ be the set of vectors produced by $\Phi$. By
  \cref{lem:constant-free}, we can assume that the circuit $\Phi$ is
  constant-free.  Hence, by \cref{lem:minA}, the boolean version
  $\bphi$ of $\Phi$ (which has the same size) computes the boolean
  function defined by the set~$A$, as desired.
\end{proof}

\begin{rem}
  Note that \cref{thm:bool} does \emph{not} exclude that, using
  \emph{more} than $\Bool{A}$ gates, $(\min,+)$ circuits could achieve
  finite (and even small) approximation factors. The boolean bound
  $\Bool{A}$ is just an ``absolute'' lower bound below which no
  approximation is possible at all.
\end{rem}

\begin{rem}\label{rem:semdeg}
  The proof of \cref{thm:bool} is so direct and elementary, because it
  totally ignores the given approximation factor $r$: it only must be
  finite and, hence, can be used in input weightings to fool too small
  $(\min,+)$ circuits. Using more involved arguments (based on Farkas'
  lemma), we will show in \cref{sec:bool-tight} (\cref{thm:minexact})
  that, under an appropriate definition of the ``semantic degree'' of
  monotone boolean circuits, \cref{thm:bool} has also a
  \emph{converse}: a minimization problem can be approximated within a
  factor $r$ by a tropical $(\min,+)$ circuit of size~$t$ if and only
  if the boolean version of this problem can be computed by a monotone
  boolean $(\lor,\land)$ circuit of size $t$ and semantic degree at
  most~$r$.  Thus, the \emph{approximation} power of tropical
  $(\min,+)$ circuits is \emph{captured}, not only lower bounded, by
  the \emph{computational} power of monotone boolean circuits.
\end{rem}

\subsection{Explicit lower bounds}
\label{sec:minplus-explicit}

Together with lower bounds on the monotone boolean circuit complexity,
the boolean bound (\cref{thm:bool}) immediately yields the same lower
bounds on the size of approximating $(\min,+)$ circuits.  Let us
mention some examples.

In the \emph{lightest triangle} problem, we are given an assignment of
nonnegative weights to the edges of $K_n$, and the goal is to compute
the minimum weight of a triangle.

\begin{cor}\label{cor:triangle}
  The lightest triangle problem in $K_n$ can be solved by a $(\min,+)$
  circuit using only $n^3$ gates, but no $(\min,+)$ circuit with
  $n^{3-\Omega(1)}$ gates can approximate this problem within any
  finite factor.
\end{cor}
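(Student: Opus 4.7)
The plan is to use the boolean bound of \cref{thm:bool} to reduce the lower bound to a monotone circuit lower bound for the well-studied triangle detection function.

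For the upper bound, I would use the obvious enumeration circuit: for each of the $\binom{n}{3}$ triangles of $K_n$, compute the sum of its three edge weights with two $+$-gates, and then take the minimum of the resulting $\binom{n}{3}$ sums via a tree of $\min$-gates. This uses $O(n^3)$ gates and already solves the problem exactly (factor $r=1$), which is stronger than what the corollary requires.

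For the lower bound, let $A\subset\{0,1\}^{\binom{n}{2}}$ be the set of characteristic vectors of the edge sets of triangles in $K_n$. By \cref{eq:bool}, the monotone boolean function $\bool{A}$ accepts a subgraph $G$ of $K_n$ iff $\supp{x^G}\supseteq \supp{a}$ for some $a\in A$, i.e.\ iff $G$ contains some triangle as a subgraph. Thus $\bool{A}$ is the classical (monotone) triangle detection function, and the known monotone circuit lower bound for this function, obtained via Razborov's approximation method and its Alon--Boppana refinement for $k$-clique, gives $\Bool{A}\geq n^{3-o(1)}$ and in particular $\Bool{A}\geq n^{3-\Omega(1)}$.

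Applying \cref{thm:bool} then yields $\Min{A}{r}\geq \Bool{A}\geq n^{3-\Omega(1)}$ for every finite factor $r=r(n)\geq 1$, which is the corollary. The real substance is the cited monotone circuit lower bound; both the upper-bound construction and the reduction via \cref{thm:bool} are routine. The main obstacle, to the extent there is one, is merely to locate the right monotone lower bound for triangle detection in the literature---the tropical-circuit side of the argument is immediate from the boolean bound.
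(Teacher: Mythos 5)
Your proposal is correct and follows essentially the same route as the paper: the trivial enumeration circuit for the upper bound, and the boolean bound (\cref{thm:bool}) combined with the Alon--Boppana monotone circuit lower bound for triangle detection (the paper cites \cite[Lemma~3.14]{AB87}, which gives $\Omega(n^3/\log^4 n)$, i.e.\ $n^{3-o(1)}$) for the lower bound. Nothing further is needed.
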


\begin{proof}
  Since we only have $\binom{n}{3}$ triangles, a trivial $(\min,+)$
  circuit of size at most $n^3$ (taking the minimum over all
  triangles) solves this problem exactly. On the other hand, it is
  known (\cite[Lemma~3.14]{AB87}) that the decision version of this
  problem requires monotone boolean circuits with
  $\Omega(n^3/\log^4n)$ gates. \Cref{thm:bool} gives the same lower
  bound for approximating $(\min,+)$ circuits.
\end{proof}

Recall that the $n$-\emph{assignment problem} is: given an assignment
of nonnegative real weights to the edges of the complete bipartite
$n\times n$ graph, compute the minimum weight of a perfect matching in
this graph.  The corresponding family of feasible solutions is here
the family of all perfect matchings, viewed as sets of their edges.

\begin{cor}\label{cor:match}
  Every $(\min,+)$ circuit approximating the $n$-assignment problem
  within any finite factor must have at least $n^{\Omega(\log n)}$
  gates.
\end{cor}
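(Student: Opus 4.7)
The plan is to combine the general boolean bound of \cref{thm:bool} with Razborov's classical monotone circuit lower bound for the logical permanent. First I would identify the set $A\subset\{0,1\}^{n^2}$ defining the assignment problem: the ground set consists of the $n^2$ edges of $K_{n,n}$, and $A$ is the set of characteristic vectors of the $n!$ perfect matchings. By the equivalence~\cref{eq:bool}, the boolean function $\bool{A}$ associated with $A$ accepts a $0$-$1$ input $x\in\{0,1\}^{n^2}$, viewed as the edge-indicator of a bipartite subgraph $G\subseteq K_{n,n}$, if and only if $\supp{x}$ contains the edge set of some perfect matching, i.e., precisely when $G$ contains a perfect matching. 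Thus $\bool{A}$ is the logical permanent, equivalently, the monotone bipartite perfect matching function.

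Next, I would invoke Razborov's theorem~\cite{razb-perm}, which gives the monotone circuit lower bound $\Bool{A}\geq n^{\Omega(\log n)}$ for this function. Combining it with \cref{thm:bool} immediately yields
\[
\Min{A}{r}\;\geq\;\Bool{A}\;\geq\;n^{\Omega(\log n)}
\]
for every finite approximation factor $r=r(n)\geq 1$, which is precisely the claim.

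There is no real obstacle here: the corollary is simply a composition of \cref{thm:bool} (already proved in the excerpt) with Razborov's classical lower bound. The only thing that requires verification is the identification of $\bool{A}$ with the logical permanent, and this is immediate from~\cref{eq:bool} together with the definition of $A$ as the set of characteristic vectors of perfect matchings of $K_{n,n}$.
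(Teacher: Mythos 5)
Your proposal matches the paper's proof exactly: identify the boolean version of the assignment problem as the logical permanent function via \cref{eq:bool}, invoke Razborov's monotone circuit lower bound~\cite{razb-perm}, and conclude via \cref{thm:bool}. No difference in approach.
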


\begin{proof}
  The boolean function defined by the family of feasible solutions of
  the assignment problem is the boolean permanent function which, as
  proved by Raz\-bo\-rov~\cite{razb-perm}, requires monotone boolean
  circuits of size $n^{\Omega(\log n)}$.
\end{proof}

Let $n$ be a prime power, and $1\leq d\leq n$ an integer.  The
\emph{polynomial $(n,d)$-design} is the family of all $|\f|=n^d$
$n$-element subsets $\{(a,p(a))\colon a\in \gf{n}\}$ of the grid
$\gf{n}\times\gf{n}$, where $p=p(x)$ ranges over all $n^d$ univariate
polynomials of degree at most $d-1$ over $\gf{n}$.

\begin{cor}\label{cor:andreev}
  If $d\leq (n/4\ln n)^{1/2}$, then every $(\min,+)$ circuit
  approximating the minimization problem on the polynomial
  $(n,d)$-design within any finite factor must have at least
  $n^{\Omega(d)}$ gates.
\end{cor}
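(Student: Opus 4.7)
The plan is to reduce the corollary directly to the boolean bound (\cref{thm:bool}) combined with a classical monotone circuit lower bound of Andreev. Let $A\subset\{0,1\}^{n^2}$ be the set of characteristic vectors of the polynomial $(n,d)$-design on the grid $\gf{n}\times\gf{n}$. By \cref{thm:bool}, for every finite factor $r=r(n)\geq 1$ we have $\Min{A}{r}\geq\Bool{A}$, so it suffices to prove that the monotone boolean function $\bool{A}$ defined by $A$ requires monotone $(\lor,\land)$ circuits of size $n^{\Omega(d)}$.

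Next, I would identify $\bool{A}$ explicitly. The support of the characteristic vector of the design element indexed by $p$ is the set $\{(a,p(a))\colon a\in\gf{n}\}$, so
\[
\bool{A}(x)=\bigvee_{p}\,\bigwedge_{a\in\gf{n}} x_{a,p(a)},
\]
with $p$ ranging over all univariate polynomials over $\gf{n}$ of degree at most $d-1$. This is exactly the monotone function for which Andreev proved, in the indicated regime $d\leq(n/4\ln n)^{1/2}$, that any monotone boolean $(\lor,\land)$ circuit computing it has size at least $n^{\Omega(d)}$. The argument is the standard ``approximator'' method: no monotone circuit that is much smaller than $n^{\Omega(d)}$ can separate the positive test inputs (characteristic vectors of graphs of degree-$<d$ polynomials) from the carefully chosen negative test inputs (characteristic vectors of subsets of the grid containing no full graph of a degree-$<d$ polynomial), the key quantitative input being the $n$-wise independence enjoyed by degree-$<d$ polynomials over $\gf{n}$.

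Combining $\Min{A}{r}\geq\Bool{A}$ with the Andreev bound $\Bool{A}\geq n^{\Omega(d)}$ then gives the desired conclusion. The only point that needs checking is that the family of supports arising from $A$ coincides with the family used in Andreev's construction, which is immediate from the definition of the polynomial $(n,d)$-design. There is no genuine obstacle here: the entire force of the corollary lies in the boolean bound, which converts Andreev's monotone circuit lower bound into an approximation lower bound for $(\min,+)$ circuits \emph{with no loss and for every finite approximation factor}.
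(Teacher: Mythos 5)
Your proposal is correct and follows exactly the paper's route: apply the boolean bound (\cref{thm:bool}) to reduce the claim to a monotone $(\lor,\land)$ circuit lower bound for the function $\bool{A}$ defined by the polynomial $(n,d)$-design, and then invoke the known $n^{\Omega(d)}$ monotone lower bound for that function. The only nuance is attribution: the stated range $d\leq(n/4\ln n)^{1/2}$ with the bound $n^{\Omega(d)}$ is the numerically improved version due to Alon and Boppana~\cite{AB87}, refining Andreev's original argument~\cite{andreev85}.
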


\begin{proof}
  By (numerically) improving the earlier result of
  Andreev~\cite{andreev85}, Alon and Boppana~\cite{AB87} have shown
  that, at least for such values of $d$, any monotone boolean circuit
  computing the boolean function defined by the corresponding family
  of feasible solutions requires $n^{\Omega(d)}$ gates.
\end{proof}

\subsection{Greedy can beat approximating (min,+) circuits}
\label{sec:min-vs-greedy}
Our goal now is to show that there \emph{exist} many (combinatorial)
minimization problems which are solvable by the greedy algorithm
exactly (within the factor $r=1$), but no $(\min,+)$ circuit with a
polynomial in the number $n$ of ground elements number of gates can
approximate any of these problems within \emph{any} finite factor
$r=r(n)$.

We identify matroids with their families of bases. Under this proviso,
a family $\f$ is a matroid if and only if $\f$ is uniform (all sets
have the same cardinality) and the \emph{basis exchange axiom} holds:
if $A\neq B\in \f$, then for every $a\in A\setminus B$ there is a
$b\in B\setminus A$ such that the set $(A\setminus\{a\})\cup\{b\}$
belongs to $\f$.

It is well known (see, for example, \cite[Theorem~1.8.4]{oxley}) that
an optimization problem on an antichain $\f$ can be solved by the
greedy algorithm exactly if and only if $\f$ is a matroid. This fact
is usually called the Rado--Edmonds theorem~\cite{Rado,Edmonds}.  In
contrast, we will now show that most matroids require $(\min,+)$
circuits of exponential size to be even only approximated within any
finite factor. We will do this by counting, so we need a lower bound
on the number of matroids.

The following simple construction of matroids was implicit in several
papers, starting from those of Piff and Welsh~\cite{piff}, and
Knuth~\cite{knuth}, and was made explicit by Bansal, Pendavingh and
Van der Pol~\cite[Lemma~8]{bansal}. Let $\binom{[n]}{m}$ denote the
family of all $m$-element subsets of $[n]=\{1,\ldots,n\}$. The Hamming
distance between two sets $A$ and $B$ is $\dist{A,B}=|A\setminus
B|+|B\setminus A|$. A~family $\h$ is \emph{separated} if
$\dist{A,B}>2$ holds for all $A\neq B\in\h$.

\begin{prop}\label{prop:sep}
  If $\h\subseteq \binom{[n]}{m}$ is separated, then
  $\f=\binom{[n]}{m}\setminus \h$ is a matroid.
\end{prop}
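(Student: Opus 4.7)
The plan is to verify the two defining properties of a matroid for $\f$: uniformity is immediate since $\f\subseteq\binom{[n]}{m}$, so the entire task reduces to checking the basis exchange axiom. Given $A\neq B\in\f$ and $a\in A\setminus B$, I would produce $b\in B\setminus A$ such that $C:=(A\setminus\{a\})\cup\{b\}$ lies in $\f$. Note $|C|=m$ automatically (since $b\notin A\setminus\{a\}$), so $C\in\f$ is equivalent to $C\notin\h$.

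The argument splits into two cases according to the size of $A\setminus B$. If $A\setminus B=\{a\}$, then $|B\setminus A|=1$ as well, say $B\setminus A=\{b\}$, and directly $C=(A\setminus\{a\})\cup\{b\}=B\in\f$. Otherwise $|B\setminus A|\geq 2$, and I would argue by contradiction: suppose that $C_b:=(A\setminus\{a\})\cup\{b\}$ belongs to $\h$ for \emph{every} $b\in B\setminus A$. Picking any two distinct $b_1,b_2\in B\setminus A$, the corresponding sets $C_{b_1}$ and $C_{b_2}$ are distinct elements of $\h$ whose symmetric difference is exactly $\{b_1,b_2\}$, so $\dist{C_{b_1},C_{b_2}}=2$. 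This contradicts the separation hypothesis $\dist{\cdot,\cdot}>2$ for pairs in $\h$, so some $b\in B\setminus A$ must yield $C_b\notin\h$, i.e., $C_b\in\f$, as required.

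I do not anticipate any real obstacle: the only subtlety is that the degenerate case $|A\setminus B|=1$ must be handled separately, since the separation argument requires two distinct candidate exchanges $b_1\neq b_2$ to produce a distance-$2$ pair in $\h$. In that degenerate case one cannot invoke the separation of $\h$ at all, and instead one uses directly that the unique candidate set is $B$ itself, which is in $\f$ by assumption.
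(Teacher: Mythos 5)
Your proof is correct and follows essentially the same argument as the paper: the paper phrases it as a single contradiction (assuming the exchange axiom fails, deducing $|B\setminus A|\geq 2$, and exhibiting two sets of $\h$ at Hamming distance $2$), while you verify the exchange axiom directly with the degenerate case $|A\setminus B|=1$ handled separately. The mathematical content is identical.
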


\begin{proof}
  Suppose contrariwise that $\f$ is not a matroid. Since the family
  $\f$ is uniform, there must be two sets $A\neq B\in\f$ violating the
  basis exchange axiom: there is an $a\in A\setminus B$ such that
  $(A\setminus\{a\})\cup\{b\}\not\in\f$ for all $b\in B$.  Observe
  that $B\setminus A$ must have at least two elements: held
  $B\setminus A=\{b\}$ then, since both $A$ and $B$ have the same
  cardinality, the set $(A\setminus\{a\})\cup\{b\}$ would coincide
  with $B$ and, hence, would belong to $\f$. So, take $b\neq c\in
  B\setminus A$ and consider the sets $S=(A\setminus\{a\})\cup\{b\}$
  and $T=(A\setminus\{a\})\cup\{c\}$. Since the basis exchange axiom fails
  for $A$ and $B$, neither $S$ nor $T$ can belong to $\f$; hence,
  \emph{both} sets $S$ and $T$ belong to the family
  $\binom{[n]}{m}\setminus \f=\h$.  But $\dist{S,T}=|\{b,c\}|=2$, a
  contradiction with the family $\h$ being separated.
\end{proof}

\begin{prop}\label{prop:graham}
  There are $2^{\binom{n}{m}/n}$ matroids $\f\subseteq \binom{[n]}{m}$
  such that $\binom{[n]}{m}\setminus \f$ is separated.
\end{prop}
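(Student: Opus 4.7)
The plan is to count separated subfamilies $\h\subseteq\binom{[n]}{m}$ and apply \cref{prop:sep}: every separated $\h$ yields the matroid $\f_\h=\binom{[n]}{m}\setminus\h$, and distinct $\h$'s yield distinct~$\f_\h$'s. So it suffices to produce at least $2^{\binom{n}{m}/n}$ separated subfamilies. The key observation is that for two sets of equal cardinality $m$, the Hamming distance $\dist{A,B}=2|A\setminus B|$ is always even, so the separation condition ``$\dist{A,B}>2$'' is equivalent to ``$|A\setminus B|\geq 2$'', i.e.\ $A$ and $B$ cannot differ by a single swap of one element.

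To exploit this, I would use the standard Graham--Sloane style hashing. Define $h\colon\binom{[n]}{m}\to\ZZ_n$ by $h(A):=\sum_{i\in A}i\bmod n$, and for each residue $c\in\ZZ_n$ let $\h_c:=\{A\in\binom{[n]}{m}:h(A)=c\}$. The claim is that each class $\h_c$ is already separated: if $A\neq B$ lie in $\h_c$ and differed by just a single swap, say $A\setminus B=\{a\}$ and $B\setminus A=\{b\}$, then $h(A)=h(B)$ would force $a\equiv b\pmod n$, and since $a,b\in[n]$ this gives $a=b$, contradicting $A\neq B$. So each $\h_c$ is separated, and hence so is every one of its $2^{|\h_c|}$ subsets.

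Since the classes $\h_0,\ldots,\h_{n-1}$ partition $\binom{[n]}{m}$ into $n$ blocks, by pigeonhole at least one of them satisfies $|\h_c|\geq\binom{n}{m}/n$. Its $2^{|\h_c|}\geq 2^{\binom{n}{m}/n}$ subfamilies are all separated, each yielding (via \cref{prop:sep}) a distinct matroid $\f=\binom{[n]}{m}\setminus\h$ with $\binom{[n]}{m}\setminus\f$ separated. This gives the required count.

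The only non-routine point is the choice of hash. The ``sum modulo $n$'' map is tailored precisely to detect a single element swap: swapping $a$ for $b$ changes the residue by $b-a\bmod n$, which is nonzero whenever $a\neq b$ in $[n]$. Any hash with this property would work; if one used, say, a modulus smaller than $n$, single swaps could preserve the residue and the color classes would no longer be separated. Everything else is pigeonhole plus the elementary parity remark on Hamming distances between equicardinal sets.
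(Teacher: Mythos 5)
Your proof is correct and follows essentially the same route as the paper: the Graham--Sloane hash $\sum_{i\in S}i\bmod n$, the observation that each residue class is separated because a single swap changes the residue, pigeonhole to get a class of size at least $\binom{n}{m}/n$, and the fact that subfamilies of separated families are separated. The only cosmetic difference is that you spell out the (correct and worth noting) parity remark and the injectivity of $\h\mapsto\binom{[n]}{m}\setminus\h$, which the paper leaves implicit.
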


\begin{proof}
  Since subfamilies of separated families are also separated, it is
  enough, by \cref{prop:sep} to show that a separated family
  $\h\subseteq \binom{[n]}{m}$ of size $|\h|\geq \binom{n}{m}/n$
  exists.

  The following amazingly simple argument was suggested by Graham and
  Sloane~\cite{graham}. For $l\in\{0,1,\ldots,n-1\}$, let $\h_l$ be
  the family of all sets $S\in \binom{[n]}{m}$ such that $\sum_{i\in
    S}i=l\mod{n}$.  We claim that each such family $\h_l$ is
  separated. Suppose contrariwise that $\dist{S,T}=2$ holds for some
  two sets $S\neq T$ of $\h_l$. Then $S=A\cup\{s\}$ and $T=A\cup\{t\}$
  for some $(m-1)$-element set $A$, and $s\neq t$ are distinct numbers
  in $[n]\setminus A$. But then for $a=\sum_{i\in A}i$, we have
  $a+s=l\mod{n}$ and $a+t=l\mod{n}$, which is impossible because both
  numbers $s$ and $t$ are at most $n$.  Thus, every family $\h_l$ is
  separated. Since there are only $n$ such families, and they exhaust
  the entire family $\binom{[n]}{m}$, there must be an $l$ for which
  $|\h_l|\geq \binom{n}{m}/n$ holds.
\end{proof}

\begin{thm}\label{thm:min-gap}
  There are at least $2^{2^n/n^3}$ matroids $\f\subseteq 2^{[n]}$ such
  that every $(\min,+)$ circuit approximating the minimization problem
  on any of them within any finite factor $r=r(n)\geq 1$ must have at
  least $2^n/n^3$ gates.
\end{thm}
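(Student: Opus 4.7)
The plan is to combine the boolean bound \cref{thm:bool} with the matroid construction of \cref{prop:graham} via a Shannon-style counting argument. Take $m=\lfloor n/2\rfloor$ and let $\mathcal{M}_n$ be the collection of matroids $\f\subseteq\binom{[n]}{m}$ provided by \cref{prop:graham}; then $|\mathcal{M}_n|\geq 2^{\binom{n}{m}/n}$, and by Stirling's estimate $\binom{n}{\lfloor n/2\rfloor}=\Theta(2^n/\sqrt{n})$ one obtains $|\mathcal{M}_n|\geq 2^{\Omega(2^n/n^{3/2})}$, already vastly exceeding the target $2^{2^n/n^3}$.

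I would first observe that distinct matroids $\f_1\neq\f_2$ in $\mathcal{M}_n$ give rise to distinct monotone boolean functions when one passes to the sets $A_i\subseteq\{0,1\}^n$ of characteristic vectors of $\f_i$. By \cref{prop:bool-str}, $\bool{A_1}=\bool{A_2}$ is equivalent to $A_1$ and $A_2$ being similar; but all vectors involved have supports of the same cardinality $m$, so the support-containment demanded by \cref{eq:similar} collapses to support-equality, and similarity reduces to $\f_1=\f_2$.

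Set $T:=2^n/n^3$. The counting step bounds the number $M$ of monotone $(\lor,\land)$ circuits on $n$ inputs of size at most $T$. After a topological ordering of the gates, each gate is specified by a type (2 choices) and an ordered pair of predecessors drawn from at most $n+T$ earlier nodes, together with a choice of output gate, yielding
\[
M \;\leq\; T\cdot\bigl(2(n+T)^2\bigr)^T \;=\; 2^{O(T\log T)} \;=\; 2^{O(2^n/n^2)}.
\]
By \cref{thm:bool}, any matroid $\f\in\mathcal{M}_n$ that admits a $(\min,+)$ circuit of size $<T$ approximating its minimization problem within some finite factor $r(n)\geq 1$ must have its boolean version computed by a monotone $(\lor,\land)$ circuit of size $<T$. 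Since distinct matroids in $\mathcal{M}_n$ produce distinct such boolean functions, at most $M$ of them are ``easy'' in this sense. The deficit
\[
|\mathcal{M}_n|-M \;\geq\; 2^{\Omega(2^n/n^{3/2})}-2^{O(2^n/n^2)} \;\gg\; 2^{2^n/n^3}
\]
then gives the claimed $2^{2^n/n^3}$ matroids $\f$ on which every $(\min,+)$ circuit approximating the minimization problem within any finite factor requires at least $2^n/n^3$ gates. The only mildly delicate point is to keep track of the gap between the $2^n/n^{3/2}$ rate of the matroid count and the $2^n/n^2$ rate of the circuit count, which the Stirling estimate handles cleanly; everything else is routine bookkeeping once \cref{thm:bool} and \cref{prop:graham} are in hand.
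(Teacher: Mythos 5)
Your proposal is correct and follows essentially the same route as the paper: take $m=\lfloor n/2\rfloor$, invoke \cref{prop:graham} for the $2^{\binom{n}{m}/n}$ matroids, count monotone $(\lor,\land)$ circuits of size $t=2^n/n^3$ to get the $2^{O(2^n/n^2)}$ bound, and subtract, then transfer to $(\min,+)$ via \cref{thm:bool}. Your small extra remark that distinct $m$-uniform matroids yield distinct boolean functions (via \cref{prop:bool-str} and the collapse of support-containment to equality within $\binom{[n]}{m}$) makes explicit a point the paper's proof leaves implicit, but is otherwise the same argument.
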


\begin{proof}
  The number of monotone boolean $(\lor,\land)$ circuits of size $t$
  on $n$ input variables is at most $L(n,t)=2^t(t+n)^{2t}$. This is,
  actually, an upper bound on the number of constant-free circuits
  over any semiring $(R,\suma,\daug)$.  Indeed, each gate in such a
  circuit is assigned a semiring operation (two choices) and acts on
  two previous nodes. Each previous node can either be a previous gate
  (at most $t$ choices) or an input variable ($n$ choices). Thus, each
  single gate has at most $N=2(t+n)^2$ choices, and the number of
  choices for a circuit is at most $N^t$.

  When applied with $m=\lfloor n/2\rfloor$, \cref{prop:graham} gives
  us at least $M(n)=2^{\binom{n}{m}/n}\geq 2^{2^{n}/2n^{3/2}}$
  matroids $\f\subseteq \binom{[n]}{m}$. On the other hand, at most
  $L(n,t)$ families $\f\subseteq 2^{[n]}$ can have monotone boolean
  circuit complexity at most~$t$. For $t:=2^n/n^3$, we have $\log
  L(n,t)=2^n/n^3+(2^{n+1}/n^3)\log(n+2^n/n^3)=O(2^n/n^2)\ll \log
  M(n)=2^n/2n^{3/2}$.  Since every circuit computes only one function,
  at least $M(n)-L(n,t)\geq L(n,t)$ matroids require monotone boolean
  circuits of size at least~$t=2^n/n^3$.  \Cref{thm:bool} yields the
  same lower bound for approximating $(\min,+)$ circuits.
\end{proof}

\section{Approximation limitations of (max,+) circuits}
\label{sec:maxplus}

Given a family $\f\subseteq 2^{[n]}$ of feasible solutions, and an
approximation factor $r\geq 1$, we will denote by $\Max{\f}{r}$ the
minimum number of gates in a $(\max,+)$ circuit approximating the
maximization problem $f(x)=\max_{S\in \f}\sum_{i\in S}x_i$ on $\f$
within the factor~$r$.

In \cref{sec:minplus}, we have shown that there are (even explicit)
families $\f\subseteq 2^{[n]}$, the \emph{minimization} problems on
which cannot be approximated by small (polynomial in $n$) size
$(\min,+)$ circuits within any finite factor $r=r(n)$.  On the other
hand, in the case of \emph{maximization} problems, the approximation
factor is always finite.  Namely, we always have $\Max{\f}{n}\leq
n-1$: since the weights are nonnegative, we can just use the trivial
$(\max,+)$ circuit $\max\{x_1,\ldots,x_n\}$.

\subsection{Counting fails for approximating $(\max,+)$ circuits}
\label{sec:count-fails}

There is an even more substantial difference between approximating
$(\min,+)$ and $(\max,+)$ circuits than just the ``bounded versus
unbounded approximation factors'' phenomenon: unlike for $(\min,+)$
circuits, even \emph{counting} arguments are unlikely to yield large
lower bounds on the size of approximating $(\max,+)$ circuits, even
for very small approximation factors $r=1+o(1)$.

Say that a family $\f\subseteq 2^{[n]}$ is $k$-\emph{dense} if every
$k$-element subset of $[n]$ is contained in at least one set of~$\f$.
The \emph{top $k$-of-$n$ selection} problem
$\sel{n}{k}(x_1,\ldots,x_n)$ outputs the sum of the $k$ largest input
numbers.

\begin{prop}\label{prop:dense}
  The top $k$-of-$n$ selection problem $\sel{n}{k}$ can be solved by a
  $(\max,+)$ circuit of size $2kn$, and this circuit approximates the
  maximization problem on every $k$-dense family
  $\f\subseteq\binom{[n]}{m}$ within the factor $r=m/k$.
\end{prop}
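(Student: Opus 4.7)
The plan is to do the two statements separately: first exhibit a $(\max,+)$ circuit of size $2kn$ computing $\sel{n}{k}$, and then, once we have such a circuit, do a short averaging argument that uses only the $k$-density of~$\f$.

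For the circuit, I would use the classical ``pick one or skip one'' DP. Let $g(i,j)$ denote the largest sum of any $j$ of the inputs $x_1,\ldots,x_i$. Then for $i\geq j\geq 1$,
\[
 g(i,j)=\max\bigl(g(i-1,j),\,g(i-1,j-1)+x_i\bigr),
\]
with the base cases $g(i,0)=0$ (which need no gate, since a constant input $0$ can be reused) and $g(j,j)=x_1+\cdots+x_j$. The output gate is $g(n,k)$, and by definition $g(n,k)=\sel{n}{k}(x)$. For the gate count, each entry $g(i,j)$ with $1\leq j\leq k$ and $j\leq i\leq n$ uses one $+$ gate and one $\max$ gate, so at most $2\cdot k\cdot n$ gates in total; after \cref{lem:constant-free} (or by inspection), the $0$-constant input and the trivial initial sums $g(j,j)$ cost no extra gates within this budget.

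Now fix a $k$-dense family $\f\subseteq\binom{[n]}{m}$ and an input weighting $x\in\RR_+^n$, and write $f(x)=\max_{S\in\f}\sum_{i\in S}x_i$ for the maximization problem on~$\f$. I need to verify the two inequalities $f(x)/r\leq \sel{n}{k}(x)\leq f(x)$ with $r=m/k$. For the upper bound $\sel{n}{k}(x)\leq f(x)$, let $T$ be a set of $k$ indices realizing the $k$ largest inputs, so $\sel{n}{k}(x)=\sum_{i\in T}x_i$. By $k$-density of $\f$, some $S\in\f$ contains~$T$; since weights are nonnegative, $f(x)\geq \sum_{i\in S}x_i\geq \sum_{i\in T}x_i=\sel{n}{k}(x)$.

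For the lower bound $f(x)\leq (m/k)\sel{n}{k}(x)$, let $S^\star\in\f$ attain the optimum, so $|S^\star|=m$ and $f(x)=\sum_{i\in S^\star}x_i$. The sum of the $k$ largest values among $\{x_i:i\in S^\star\}$ is at least $(k/m)\sum_{i\in S^\star}x_i=(k/m)f(x)$, simply because averaging the $m$ values and taking the $k$ largest always gives at least a $k/m$ fraction of the total. Since this sum of $k$ elements from $S^\star$ is a candidate value for the top $k$-of-$n$ selection over all of $[n]$, we get $\sel{n}{k}(x)\geq (k/m)f(x)$, i.e., $f(x)\leq r\cdot \sel{n}{k}(x)$. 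The only place where I anticipate any care is the bookkeeping for the $2kn$ gate bound, but nothing deeper than the inductive layout of the DP table is needed.
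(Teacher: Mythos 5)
Your proposal is correct and is essentially the paper's proof: the same Pascal-style DP recursion (with $g(i,j)$ playing the role of $\sel{i}{j}$) to get the $2kn$-gate circuit, the same use of $k$-density for the upper bound $\sel{n}{k}(x)\leq f(x)$, and the same averaging argument (top $k$ of an $m$-element solution contribute at least a $k/m$ fraction) for the lower bound. You just spell out the averaging step a bit more explicitly than the paper does.
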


\begin{proof}
  The family of feasible solutions of $\sel{n}{k}$ consists of all
  $k$-element subsets of $[n]=\{1,\ldots,n\}$. In particular,
  $\sel{n}{1}(x)=\max\{x_1,\ldots,x_n\}$ and $\sel{n}{n}(x)=
  x_1+\cdots+x_n$. The Pascal identity
  $\binom{n+1}{k}=\binom{n}{k}+\binom{n}{k-1}$ for binomial
  coefficients gives us the recursion
  \[
  \sel{n+1}{k}(x_1,\ldots,x_{n+1})=\max\{\sel{n}{k}(x_1,\ldots,x_{n}),
  \sel{n}{k-1}(x_1,\ldots,x_{n})+x_{n+1}\}\,.
  \]
  So, $\sel{n}{k}$ can be solved by a $(\max,+)$ circuit with only
  $2kn$ $(\max,+)$ gates.

  Now let $\f\subseteq\binom{[n]}{m}$ be a $k$-dense family.  The
  maximization problem on $\f$ is $f(x)=\max_{S\in\f}\sum_{i\in
    S}x_i$. Since the weights are nonnegative, the $k$-denseness of
  $\f$ ensures that $f(x)\geq \sel{n}{k}(x)$.  On the other hand,
  since no solution has more than $m$ elements, the optimal weight
  $f(x)$ of a feasible solution cannot exceed $m/k$ times the sum of
  weights of $k$ heaviest elements in this solution. Hence, $f(x)\leq
  (m/k)\cdot \sel{n}{k}(x)$, as desired.
\end{proof}

\begin{prop}\label{prop:counting1}
  There exist doubly-exponentially many in $n$ families $\f\subseteq
  2^{[n]}$ such that $\Max{\f}{1}=2^{\Omega(n)}$ but
  $\Max{\f}{1+o(1)}\leq n^2$.
\end{prop}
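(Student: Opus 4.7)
The plan is to combine the approximation upper bound from \cref{prop:dense} with a counting argument comparing the number of $m$-uniform families to the number of small $(\max,+)$ circuits. First I would fix $m=\lfloor n/2\rfloor$ and $k=m-\lceil\log n\rceil$, and let $\f_0\subseteq\binom{[n]}{m}$ consist of all $m$-subsets of $[n]$ that contain the distinguished element $n$, so $|\f_0|=\binom{n-1}{m-1}$. A short check shows that $\f_0$ is $k$-dense: any $k$-subset $K$ can be padded with $n$ (if $n\notin K$) and with $\lceil\log n\rceil-1$ further elements of $[n]$ to produce an $m$-superset lying in $\f_0$. Consequently every superfamily $\f$ with $\f_0\subseteq\f\subseteq\binom{[n]}{m}$ is also $k$-dense, and by \cref{prop:dense} the top $k$-of-$n$ selection circuit has size $2kn\leq n^2$ and approximates the maximization problem on each such $\f$ within factor $m/k=1+O(\log n/n)=1+o(1)$. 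This gives the desired upper bound $\Max{\f}{1+o(1)}\leq n^2$.

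Next I would argue that these superfamilies produce doubly-exponentially many pairwise distinct maximization problems. Their total number is $2^{\binom{n}{m}-\binom{n-1}{m-1}}=2^{\binom{n-1}{m}}=2^{\Theta(2^n/\sqrt n)}$. To see that two distinct $m$-uniform families $\f\neq\f'$ define different maximization problems, I would pick $S\in\f\setminus\f'$ and evaluate at the characteristic vector $x\in\{0,1\}^n$ of $S$: then $\max_{T\in\f}\skal{T,x}\geq\skal{S,x}=m$, whereas any $T\in\f'$ with $\skal{T,x}=m$ would force $S\subseteq T$ and hence $T=S$ by $m$-uniformity, contradicting $S\notin\f'$; so $\max_{T\in\f'}\skal{T,x}\leq m-1$. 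Thus the $2^{\binom{n-1}{m}}$ superfamilies yield that many distinct maximization problems.

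To finish, I would reuse the circuit-counting bound from the proof of \cref{thm:min-gap}: by \cref{lem:constant-free}, any $(\max,+)$ circuit exactly solving one of our problems may be assumed constant-free, and the number of constant-free $(\max,+)$ circuits of size at most $t$ on $n$ variables is at most $L(n,t)=(2(t+n)^2)^t$. Choosing $t=2^{n/10}$ gives $\log L(n,t)=O(n\cdot 2^{n/10})$, which is dwarfed by $\log 2^{\binom{n-1}{m}}=\Theta(2^n/\sqrt n)$. Hence all but an exponentially small fraction of the $2^{\binom{n-1}{m}}$ superfamilies $\f$ satisfy $\Max{\f}{1}\geq 2^{n/10}=2^{\Omega(n)}$, producing the desired doubly-exponential collection.

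The only delicate step is the distinctness claim for $m$-uniform families: without it the circuit-counting bound would merely limit the number of maximization \emph{problems} solvable by small $(\max,+)$ circuits, not the number of families $\f$, and the counting argument would collapse. With $m$-uniformity in hand, the characteristic-vector argument is short and clean. All remaining steps amount to routine binomial estimates and the standard circuit-counting trick already used in \cref{thm:min-gap}.
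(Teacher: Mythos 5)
Your proposal is correct, and its overall architecture is the same as the paper's: an upper bound via $k$-denseness and \cref{prop:dense} with $k$ close to $m$, and a lower bound via the constant-free-circuit counting from \cref{thm:min-gap}. The one substantive difference is the source of the doubly-exponential collection of $k$-dense families. The paper takes the families $\f\subseteq\binom{[n]}{m}$ whose complements in $\binom{[n]}{m}$ are separated (\cref{prop:graham}); these are $(m-1)$-dense and, crucially, are \emph{matroids}, which is what the remark following the proposition (and the later discussion in the conclusion about maximization on matroids) relies on. Your star construction --- all superfamilies of the family of $m$-sets through a fixed element --- is more elementary and still yields $k$-denseness for $k=m-\lceil\log n\rceil$, hence factor $1+O(\log n/n)=1+o(1)$, but the resulting families are not matroids, so it proves the proposition as literally stated while losing the extra structural feature the paper wants. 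On the other hand, your explicit verification that distinct $m$-uniform families define distinct maximization problems (via characteristic vectors) is a point the paper leaves implicit in its appeal to ``the same counting argument,'' and spelling it out is a genuine improvement in rigor; as you note, without it the counting bounds functions rather than families. All the quantitative steps (the size bound $2kn\leq n^2$, the count $2^{\binom{n-1}{m}}=2^{\Theta(2^n/\sqrt{n})}$, and the comparison with $L(n,t)$ for $t=2^{n/10}$) check out.
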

The families $\f$ are here \emph{matroids}, and the upper bound is
achieved by \emph{one single} $(\max,+)$ circuit.

\begin{proof}
  Let $n$ be a sufficiently large even integer, and $m=n/2$.
  \Cref{prop:graham} gives us at least $M:=2^{\binom{n}{m}/n}$
  families $\f\subseteq\binom{[n]}{m}$ (which are matroids) with the
  property that the Hamming distance between any two distinct sets
  $A\neq B\in \binom{[n]}{m}\setminus\f$ is $>2$. We claim that each
  such family $\f$ is $k$-dense for $k:=m-1$. To see this, take any
  set $T\in \binom{[n]}{m-1}$, any two distinct elements $a\neq b$
  outside $T$, and consider the $m$-element sets $A=T\cup\{a\}$ and
  $B=T\cup\{b\}$.  Since the Hamming distance between $A$ and $B$ is
  $2$, they cannot \emph{both} lie outside the family $\f$. So, at
  least one of them must belong to $\f$, as desired.

  We thus have at least $M$ families $\f\subseteq\binom{[n]}{m}$ which
  are $k$-dense for $k=m-1$.  By \cref{prop:dense}, one $(\max,+)$
  circuit of size at most $2kn\leq n^2$ for the top $k$-of-$n$ problem
  $\sel{n}{k}$ approximates the maximization problem on \emph{each} of
  these $M$ families within the factor $r=m/(m-1)=1+1/(m-2)=1+o(1)$.
  On the other hand, by \cref{lem:constant-free}, we can consider only
  constant-free $(\max,+)$ circuits, and the same counting argument as
  in the proof of \cref{thm:min-gap} yields the lower bound
  $\Max{\f}{1}=2^{\Omega(n)}$ for doubly-exponentially many of these
  families~$\f$.
\end{proof}

\begin{rem}[Boolean bound fails for approximating $(\max,+)$]\label{rem:bool-max}
  The standard counting (as in the proof of \cref{thm:min-gap}) shows
  that the boolean function defined by some of the families $\f$ given
  by \cref{prop:counting1} (actually, by many of these families)
  requires monotone boolean circuits of size $2^{\Omega(n)}$, but (by
  \cref{prop:counting1}) $\Max{\f}{r}\leq n^2$ holds already for a
  factor $r=1+o(1)$.
\end{rem}

Actually, small $(\max,+)$ circuits can approximate even \emph{random}
maximization problems quite well.  For an even integer $n\geq 4$ and
$m=n/2$, let $\xf$ be a random family of $m$-element subsets of $[n]$
with each $m$-element subset being included in $\xf$ independently
with probability $1/2$.

\begin{prop}\label{prop:counting2}
  With probability $1-o(1)$, $\Max{\xf}{1+o(1)}\leq n^2$ holds.
\end{prop}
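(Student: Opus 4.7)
The plan is to apply Proposition~\ref{prop:dense} to a slightly smaller value of $k$ than in the proof of \cref{prop:counting1}: we lose a tiny bit in the approximation factor but gain enormously in the probability that the density condition holds.

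First I would choose the parameter $k:=m-2$. With this choice, the top $k$-of-$n$ circuit has size $2kn\leq 2mn=n^2$, and the approximation factor guaranteed by \cref{prop:dense} is $m/k=m/(m-2)=1+2/(m-2)=1+o(1)$. So it suffices to show that, with probability $1-o(1)$, the random family $\xf$ is $k$-dense, meaning every $(m-2)$-element subset of $[n]$ is contained in at least one set of $\xf$.

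Next I would perform a union bound over all $(m-2)$-element subsets $T\subseteq[n]$. For a fixed such $T$, the number of its $m$-element supersets in $[n]$ is
\[
\binom{n-(m-2)}{m-(m-2)}=\binom{m+2}{2}=\tfrac{1}{2}(m+2)(m+1)=\Theta(n^2).
\]
Since each $m$-element subset is included in $\xf$ independently with probability $1/2$, the probability that no $m$-element superset of $T$ lies in $\xf$ is $2^{-\binom{m+2}{2}}$. Hence the probability that $\xf$ fails to be $(m-2)$-dense is at most
\[
\binom{n}{m-2}\cdot 2^{-\binom{m+2}{2}}\leq 2^n\cdot 2^{-\Theta(n^2)}=o(1).
\]

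Finally, on the event that $\xf$ is $(m-2)$-dense (which also forces $\xf\neq\emptyset$), \cref{prop:dense} yields a single $(\max,+)$ circuit of size at most $n^2$ that approximates the maximization problem on $\xf$ within the factor $1+o(1)$, giving $\Max{\xf}{1+o(1)}\leq n^2$ as claimed. There is no real obstacle here; the only delicate point is balancing the two competing requirements---the approximation factor $m/k$ must remain $1+o(1)$, which forces $k$ close to $m$, while the number of $m$-supersets of a $k$-set must grow faster than $n$, which forces $k$ sufficiently below $m$. The choice $k=m-2$ hits both requirements simultaneously because $\binom{m+2}{2}$ is already quadratic in $n$.
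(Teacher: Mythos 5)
Your proof is correct and follows essentially the same route as the paper's: the same choice $k=m-2$, the same union bound over $k$-element sets using that each has $\binom{m+2}{2}=\Theta(n^2)$ many $m$-element supersets, and the same invocation of \cref{prop:dense}. The only cosmetic difference is that you make the binomial coefficients fully explicit where the paper writes $\binom{n-k}{2}=\Omega(n^2)$.
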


\begin{proof}
  Let $k:=m-2$. Since each $k$-element set is contained in
  $l=\binom{n-k}{2}=\Omega(n^2)$ sets of $\binom{[n]}{m}$, the
  probability that a fixed $k$-element set will be contained in
  \emph{none} of the sets of $\xf$ is $(1/2)^l=2^{-\Omega(n^2)}$.  So,
  by the union bound, the family $\xf$ is \emph{not} $k$-dense with
  probability at most $\binom{n}{m}\cdot
  2^{-\Omega(n^2)}=2^{-\Omega(n^2)}$.  That is, the family $\xf$ is
  $k$-dense with probability at least $1-2^{-\Omega(n^2)}$.  By
  \cref{prop:dense}, with this probability, the $(\max,+)$ circuit for the top $k$-of-$n$ selection problem (see \cref{prop:dense})  approximates the maximization problem on a random family
  $\xf\subseteq \binom{[n]}{n/2}$ within the factor $r=m/k=1+o(1)$.
\end{proof}

\Cref{prop:counting1,prop:counting2} only show the mere
\emph{existence} of maximization problem that are hard to solve by
$(\max,+)$ circuits exactly (with factor $r=1$), but can be
approximated by small $(\max,+)$ circuits within a slightly larger
factor $r>1$.  Still, there are also \emph{explicit} maximization
problems exhibiting a similar gap.

A family $\f\subseteq 2^{[n]}$ is a \emph{Sidon family} if the set
$A\subseteq\{0,1\}^n$ of the characteristic $0$-$1$ vectors of sets in
$\f$ has the following property for all vectors $a,b,c,d\in A$: if
$a+b=c+d$, then $\{a,b\}=\{c,d\}$ (the addition is over the reals, not
over $\gf{2}$). That is, knowing the sum $a+b\in\{0,1,2\}^n$ of two
vectors $a,b\in A$, we know which vectors were added.

\begin{thm}[Explicit gaps]\label{thm:sidon}
  Let $m$ be an odd integer, and $n=4m$. Then there is an explicit
  Sidon family $\f\subseteq 2^{[n]}$ such that $\Max{\f}{1}\geq
  2^{n/4}$ but $\Max{\f}{2}\leq n$.
\end{thm}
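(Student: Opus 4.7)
The plan is to exhibit a concrete Sidon family $\f$ with the required properties and then separately prove the two bounds. I would partition $[n] = L \cup R$ with $L = \{1, \ldots, 2m\}$ and $R = \{2m+1, \ldots, 4m\}$, and construct an explicit family $\f \subseteq \binom{[n]}{2m}$ satisfying: (i) both half-sets $L$ and $R$ belong to $\f$, (ii) $|\f| = 2^m = 2^{n/4}$, and (iii) the characteristic vectors of $\f$ form a Sidon set in $\{0,1\}^n$. The construction should exploit the oddness of $m$, for instance via a parity-based identification of the $2^m$ characteristic vectors with codewords of a suitable linear code on $\FF_2^m$. The main combinatorial obstacle is arranging all three properties simultaneously: the Sidon condition severely restricts which uniform families of $n/2$-subsets can be chosen, and ensuring that the two half-sets lie in the family on top of the Sidon rigidity is delicate. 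The oddness of $m$ enters exactly at this point to make the parity argument go through.

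For the upper bound $\Max{\f}{2} \leq n$, consider the constant-free $(\max,+)$ circuit
\[
C(x) = \max\!\Bigl(\sum_{i\in L}x_i,\; \sum_{i\in R}x_i\Bigr),
\]
which uses only $n-1$ gates: $2(2m-1)$ additions to form the two partial sums, followed by one $\max$-gate. Since $L, R \in \f$, both arguments of the $\max$ are weights of feasible solutions, so $C(x) \leq f(x)$. Conversely, every $S \in \f$ has $|S|=2m$, so
\[
\sum_{i\in S}x_i = \sum_{i\in S\cap L}x_i + \sum_{i\in S\cap R}x_i \leq \sum_{i\in L}x_i + \sum_{i\in R}x_i \leq 2\,C(x),
\]
giving $f(x) \leq 2\,C(x)$. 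Thus $C$ is a valid factor-$2$ approximation of size at most~$n$.

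For the lower bound $\Max{\f}{1} \geq 2^{n/4}$, \cref{lem:constant-free} allows me to restrict to constant-free $(\max,+)$ circuits $\Phi$, which compute $\Phi(x) = \max_{b\in B}\skal{b,x}$ for the set $B \subseteq \NN^n$ produced by $\Phi$. An ``identifying weighting'' argument shows $A \subseteq B$, where $A$ is the set of characteristic vectors of $\f$: testing $\Phi$ at unit vectors $e_i$ with $\Phi(e_i) \leq f(e_i) \leq 1$ forces every $b \in B$ to be $0$-$1$, and then for each $a \in A$ the weighting placing $1$ on $\supp{a}$ and a tiny positive weight elsewhere makes $a$ the unique optimum of $f$, forcing some $b \in B$ with $\supp{b} = \supp{a}$, i.e.\ $b = a$. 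The Sidon property of $A$ is then leveraged to deduce $|\Phi| \geq |A| = 2^{n/4}$ via a Jerrum--Snir-style argument: the unique-decomposition property of a Sidon set prevents distinct elements of $A$ from sharing their ``producing structure'' in $\Phi$, so each of the $|A|$ elements must be associated with its own dedicated gate. The main subtlety of this step is that naive size-counting gives only $|\Phi| \geq \log|A| = n/4$, so the Sidon-specific argument is indispensable to obtain the exponential bound $2^{n/4}$.
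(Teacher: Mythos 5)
Your proposal contains the right high-level skeleton (a uniform explicit Sidon family of size $2^{n/4}$, an $n$-gate factor-$2$ circuit, and a Sidon-based exponential lower bound for exact computation), but two essential pieces are missing, and the first of them is the actual content of the theorem. You never construct the family: you list three desiderata (uniformity, $|\f|=2^{n/4}$, the Sidon condition, plus $L,R\in\f$) and state that a ``parity-based'' construction exploiting the oddness of $m$ ``should'' satisfy them, explicitly flagging that reconciling them is the main obstacle. Since the theorem asserts an \emph{explicit} family, this is not a deferred detail but the core of the proof. The paper's construction is $\f=\{(a,a^3,\bar{a},\overline{a^3})\colon a\in\{0,1\}^m\}$: Sidon-ness comes from Lindstr\"om's theorem on the cubic parabola $\{(z,z^3)\}$, uniformity comes from padding with componentwise complements, and the oddness of $m$ is used not in the construction at all but in the \emph{upper} bound, to show that $z\mapsto z^3$ is a bijection of $\gf{2^m}$ (because $\gcd(3,2^m-1)=1$). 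Note also that the paper's family does \emph{not} contain the two half-sets $L$ and $R$, and its $2$-approximating circuit is the sum of pairwise maxima $\sum_{i=1}^{2m}\max(x_i,x_{2m+i})$ rather than your $\max$ of two sums; its correctness is verified through \cref{lem:max} ($B$ lies below $\conv{A}$ and $\tfrac12 A$ lies below $\conv{B}$), which is where the bijectivity of cubing is needed. Your alternative circuit would be fine \emph{if} a Sidon family containing both $L$ and $R$ existed, but you have not exhibited one, and the extra constraint $L,R\in\f$ is exactly the kind of requirement the Sidon rigidity makes delicate.

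The second gap is in the lower bound. Your argument correctly forces $A\subseteq B$ for the produced set $B$ of an exact constant-free circuit, but the step from there to $2^{n/4}$ gates is only gestured at (``each element must be associated with its own dedicated gate''). The circuit produces $B$, which may strictly contain $A$ and need not itself be a Sidon set, so one cannot directly apply a producing-complexity bound for $A$. The paper closes this by combining two nontrivial external results: Gashkov--Sergeev's theorem that any Minkowski $(\cup,+)$ circuit producing a Sidon set $A$ has at least $|A|$ gates, and Jerrum--Snir's theorem that for \emph{uniform} $A$ one has $\Max{A}{1}\geq \Un{A}$ (this is precisely where uniformity of the family, arranged by the complement-padding, is indispensable). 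Your sketch neither proves these facts nor invokes them precisely, and without them the claimed bound $|\Phi|\geq|A|$ does not follow.
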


The lower bound here follows from known lower bounds on the monotone
arithmetic circuit complexity of polynomials whose sets of exponent
vectors are Sidon sets, but the proof of the upper bound is somewhat
technical. So, since we are mainly interested in proving \emph{lower}
bounds, we postpone the entire proof of \cref{thm:sidon} to
\cref{app:sidon}.

The message of \cref{prop:counting1,prop:counting2} is: while
\emph{most} problems are hard to solve exactly, they are trivially
approximable by just \emph{one} small $(\max,+)$ circuit within a
small factor $r=1+o(1)$. Together with \cref{thm:sidon} (and
\cref{rem:bool-max}), this serves as a serious indication that the
task of proving lower bounds on the size of approximating $(\max,+)$
circuits is by far more difficult than for $(\min,+)$ circuits:
monotone \emph{boolean} circuits cannot help then, and even counting
arguments are unlikely to work against $(\max,+)$ circuits.

Still, by looking more carefully into the structure of vectors
\emph{produced} by approximating $(\max,+)$ circuits
(\cref{lem:maxA}), and using structural restrictions of such sets
given by a ``decomposition lemma'' (\cref{lem:decomp}), we will be
able to derive a general ``rectangle bound'' for approximating
$(\max,+)$ circuits (\cref{thm:rect}).

\subsection{Structure of approximating $(\max,+)$ circuits}
\label{sec:structure}
Since we are interested in the structure of sets of \emph{vectors}
produced by (approximating) circuits, it will be convenient to turn to
the language of \emph{vectors}.

\begin{lem}\label{lem:maxA}
  If $\Phi$ is a $(\max,+)$ circuit approximating the maximization
  problem on a set $A\subseteq \{0,1\}^n$ within a factor $r$, then
  the set $B\subset \NN^n$ of vectors produced by $\Phi$ has the
  following two properties:
  \begin{itemize}
  \item[\mbox{\rm (i)}] if $b\in B$, then $b\leq a$ for some $a\in A$;
  \item[\mbox{\rm (ii)}] if $a\in A$, then $\skal{a,b}\geq
    \frac{1}{r}\skal{a,a}$ for some $b\in B$.
  \end{itemize}
\end{lem}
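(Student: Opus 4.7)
The plan is to mimic the structure of the proof of \cref{lem:minA}, but adapted to the maximization setting: extract constants from $\Phi$, then separately falsify each of (i) and (ii) by exhibiting an input weighting $x\in\RR_+^n$ on which the approximation bound $f(x)/r\leq \Phi(x)\leq f(x)$ would be violated.

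First, I would eliminate the additive constants. By \cref{prop:function}, the circuit computes $\Phi(x)=\max_{b\in B}\skal{b,x}+\const{b}$ with $\const{b}\in\RR_+$. Plugging $x=\vnul$ into the approximation hypothesis $\Phi(x)\leq f(x)$ yields $\max_{b\in B}\const{b}\leq f(\vnul)=0$, so $\const{b}=0$ for every $b\in B$ and thus $\Phi(x)=\max_{b\in B}\skal{b,x}$ (this is essentially the observation made in Case~1 of \cref{lem:constant-free}, so \cref{lem:constant-free} itself need not be invoked).

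For (i), assume toward a contradiction that some $b\in B$ satisfies $b\not\leq a$ for every $a\in A$. I may assume $b\neq\vnul$, since $\vnul\leq a$ would hold trivially for any $a\in A\subseteq\{0,1\}^n$. Let $S=\supp{b}$ and pick $x$ to be the characteristic vector $\chi_S\in\{0,1\}^n$. Because $b\not\leq a$ and $a_i\in\{0,1\}$, there is for each $a\in A$ an index $i\in S$ with $a_i=0$, so $\skal{a,\chi_S}=|\supp{a}\cap S|\leq |S|-1$; on the other hand $\skal{b,\chi_S}=\sum_{i\in S}b_i\geq |S|$ because every $b_i\geq 1$ on $S$. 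Hence $\Phi(\chi_S)\geq |S|>|S|-1\geq f(\chi_S)$, contradicting $\Phi(x)\leq f(x)$.

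For (ii), assume toward a contradiction that some $a\in A$ satisfies $\skal{a,b}<\tfrac{1}{r}\skal{a,a}$ for every $b\in B$. Take the input weighting $x=a\in\{0,1\}^n\subseteq\RR_+^n$. Since $a\in A$, $f(a)\geq \skal{a,a}$; meanwhile $\Phi(a)=\max_{b\in B}\skal{b,a}<\tfrac{1}{r}\skal{a,a}\leq \tfrac{1}{r}f(a)$, contradicting the lower approximation bound $\Phi(x)\geq f(x)/r$.

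Neither step looks genuinely hard; the only subtlety is the bookkeeping with constants at the very start and the observation that $b\in B$ has all nonzero entries $\geq 1$ (which is automatic since $B\subset\NN^n$), allowing the use of $\chi_S$ rather than a weighted variant. The argument mirrors \cref{lem:minA}, and the fact that $A\subseteq\{0,1\}^n$ is used crucially in (i) to convert $b\not\leq a$ into $S\not\subseteq\supp{a}$, and in (ii) via $\skal{a,a}=|\supp{a}|$.
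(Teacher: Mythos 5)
Your constant elimination and Part~(ii) are correct and match the paper's argument exactly (the paper routes the constant elimination through \cref{lem:constant-free}, but your direct computation with $x=\vnul$ is precisely Case~1 of that lemma). Part~(i), however, has a genuine gap.

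You assert that $b\not\leq a$ together with $a\in\{0,1\}^n$ forces some $i\in S=\supp{b}$ with $a_i=0$, i.e., $S\not\subseteq\supp{a}$. That implication is false without knowing that $b$ is a $0$-$1$ vector. Take $a=(1,1)$ and $b=(2,1)$: here $b\not\leq a$ and $a\in\{0,1\}^2$, but $\supp{b}=\{1,2\}\subseteq\supp{a}$, so there is no $i\in S$ with $a_i=0$, and the claimed bound $\skal{a,\chi_S}\leq|S|-1$ fails ($\skal{a,\chi_S}=2=|S|$). The issue is that $b\not\leq a$ guarantees an index $j$ with $b_j>a_j$, and if $a_j=1$ you only learn $b_j\geq 2$, which says nothing about supports. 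The paper avoids this precisely by first proving $B\subseteq\{0,1\}^n$ (using the test inputs $x=\vec{e}_i$: if $b_i>1$ then $\Phi(\vec{e}_i)\geq b_i>1\geq f(\vec{e}_i)$), after which $b\not\leq a$ for a $0$-$1$ vector $b$ really does give $\supp{b}\not\subseteq\supp{a}$. Alternatively, one can rescue your argument without that preliminary step: for every $a\in A$ the offending index $j$ with $b_j\geq a_j+1$ automatically lies in $S$ (since $b_j\geq 1$), and $b_i\geq 1\geq a_i$ for every other $i\in S$, so $\skal{b,\chi_S}\geq\skal{a,\chi_S}+1$; this gives $\Phi(\chi_S)>f(\chi_S)$ without ever isolating a zero entry of $a$. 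Either patch closes the gap, but as written the proof is not valid.
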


\begin{proof}
  By \cref{lem:constant-free}, we can assume that the circuit $\Phi$
  is constant-free.  By \cref{prop:function}, the circuit $\Phi$
  solves the maximization problem $\Phi(x)=\max_{b\in B}\skal{b,x}$.
  The maximization problem on $A$ is of the form $f(x)=\max_{a\in A}\
  \skal{a,x}$.  Since the circuit $r$-approximates the maximization
  problem on $A$, we know that $\frac{1}{r}\cdot f(x)\leq \Phi(x)\leq
  f(x)$ must hold for all input weightings $x\in\RR_+^n$.

  Had some vector $b\in B$ a position $i$ with $b_i>1$, then on the
  input $x=\vec{e}_i$, we would have $\Phi(x)\geq \skal{b,x}=b_i>1$
  but $f(x)\leq 1$, since all vectors in $A$ are $0$-$1$ vectors. So,
  $B\subseteq\{0,1\}^n$, that is, the set $B$ also consists of only
  $0$-$1$ vectors.

  To show item (i), suppose contrariwise that there is a vector $b\in
  B$ such that $b\not\leq a$ holds for all vectors $a\in A$.  Since
  (as we have just shown) $b$ is a $0$-$1$ vector, this means that,
  for every vector $a\in A$, there is a position $i$ where $b_i=1$ but
  $a_i=0$.  Hence, on the weighting $x:=b$, we have $\Phi(x)\geq
  \skal{b,x}=\skal{b,b}$, but $\skal{a,x}=\skal{a,b}\leq \skal{b,b}-1$
  for all $a\in A$, a contradiction with $\Phi(x)\leq f(x)$.

  To show item (ii), assume contrariwise that there is some vector
  $a\in A$ such that $\skal{a,b}< m/r$ holds for all vectors $b\in B$,
  where $m=\skal{a,a}$. Then, on the input $x:=a$, we have $\Phi(x) <
  m/r$ but $f(x)\geq \skal{a,a}= m$, a contradiction with
  $\frac{1}{r}\cdot f(x)\leq \Phi(x)$.
\end{proof}

\begin{rem}\label{rem:boolean-weights}
  Note that \cref{lem:maxA} holds even when the circuit $\Phi$ is only
  required to $r$-approximate the given minimization problem on input
  weightings $x\in\{0,1\}^n$.  Indeed, to eliminate constant inputs
  from $(\max,+)$ circuits in \cref{lem:constant-free} we only used
  the input weighting $x=\vnul$, and the proof of \cref{lem:maxA}
  itself also uses only boolean $0$-$1$ weightings. This implies that
  the rectangle bound (\cref{thm:rect}), whose proof will use
  \cref{lem:maxA}, holds also when the $(\max,+)$ circuits must
  approximate a given problem only on boolean $0$-$1$ weightings.
\end{rem}

\subsection{Minkowski circuits}
\label{sec:minkowski}
As we already mentioned in \cref{sec:produced}, unlike the function
\emph{computed} by a circuit $\Phi$ over a semiring $(R,\suma,\daug)$,
the set $B\subset\NN^n$ of vectors \emph{produced} by $\Phi$ does not
depend on the underlying semiring---it only depends on the circuit
itself.  That is, $B$ depends only on what the underlying graph of
$\Phi$ is, and what of the two semiring operations are associated with
gates.  This independence of produced sets from actual semirings is
captured by the model of ``Minkowski circuits.'' These circuits allow
one to analyze the \emph{structure} of sets produced by circuits over
arbitrary semirings in a uniform and mathematically clean way.

A \emph{Minkowski circuit} $\Phi$ is a directed acyclic graph with
$n+1$ input (indegree zero) nodes holding single-element sets
$\{\vec{0}\},\{\vec{e}_1\},\ldots,\{\vec{e}_n\}$. Every other node, a
\emph{gate}, has indegree two, and performs either the set-theoretic
union ($\cup$) or the Minkowski sum ($+$) operation on its two inputs.

The sets $\pr{v}\subset \NN^n$ of vectors \emph{produced} at the gates
$v$ of $\Phi$ are obtained as follows. If $v$ is an input node, then
$\pr{v}$ is one of the single-element sets
$\{\vec{0}\},\{\vec{e}_1\},\ldots,\{\vec{e}_n\}$, depending on which
of these sets is held by the node~$v$.  Then $\pr{v}=\pr{u}\cup\pr{w}$
if $v=u\cup w$ is a union gate, and $\pr{v}=\pr{u}+\pr{w}$ if $v=u+w$
is a Minkowski sum gate. The set $B\subset \NN^n$ produced by the
entire circuit $\Phi$ is the set $\pr{v}$ produced at the output
gate~$v$.

The \emph{Minkowski version} of a circuit $\Phi$ over an arbitrary
semiring $(R,\suma,\daug)$ is obtained by replacing each input
constant $\const{}\in R$ by the singleton $\{\vnul\}$, each input
variable $x_i$ by the singleton $\{\vec{e}_i\}$, each ``addition''
$(\suma)$ gate by the union $(\cup)$ gate, and each ``multiplication''
$(\daug)$ gate by the Minkowski sum $(+)$ gate.

The model of Minkowski circuits is justified by the following trivial
observation, which follows directly from the definition of sets
produced by circuits over semirings: the set produced by a circuit
over \emph{any} semiring is the set produced by the Minkowski version
of this circuit.

\subsection{Decomposition lemma for Minkowski circuits}
\label{sec:covering}
We will prove lower bounds for approximating $(\max,+)$ circuits using
a general ``decomposition lemma'' for Minkowski circuits.  The
\emph{sumset} defined by two sets $X\subseteq\NN^n$ and $Y\subseteq
\NN^n$ of vectors is the Minkowski sum $X+Y=\{x+y\colon x\in X, y\in
Y\}$ of these two sets.

Sumsets naturally emerge in every Minkowski circuit $\Phi$.  At each
Minkowski sum gate following a gate $v$ (if there is any), the set
$\pr{v}$ of vectors produced at $v$ is ``enlarged'' by adding at least
one vector to \emph{all} vectors in $\pr{v}$.  So, when we arrive at
the output gate $w$, the entire translates $\pr{v}+y=\{x+y\colon
x\in\pr{v}\}$ of $\pr{v}$ by some vectors $y\in\NN^n$ must lie in the
set $\pr{w}=B$ produced by the entire circuit. This observation
motivates to associate with every gate $v$ its \emph{residue}
\[
\compl{v}=\{y\in\NN^n\colon \pr{v}+y\subseteq B\}
\]
which collects all vectors $y\in\NN^n$, the translates of $\pr{v}$ by
which lie in the set $B$ produced by the entire circuit.  For example,
if $v$ is the output gate, then $\pr{v}=B$ and
$\compl{v}=\{\vec{0}\}$. If $v$ is an input node, then either
$\pr{v}=\{\vec{0}\}$ and $\compl{v}=B$, or $\pr{v}=\{\vec{e}_i\}$ and
$\compl{v}=\{b-\vec{e}_i\colon b\in B, b_i\geq 1\}$.

Note that neither $\pr{v}$ nor $\compl{v}$ needs lie in $B$, but
$\pr{v}+\compl{v}\subseteq B$ already holds for every gate~$v$. Thus,
if the circuit $\Phi$ has $t$ gates, then we obtain a covering of the
set $B$ by $t$ sumsets of the form $\pr{v}+\compl{v}$

A~\emph{norm-measure} is any assignment $\Nor:\NN^n\to\RR_+$ of
nonnegative real numbers to vectors in $\NN^n$ such that every $0$-$1$
vector with at most one $1$ gets norm at most $1$, and the norm is
monotone and subadditive: $\nor{x}\leq \nor{x+y}\leq \nor{x}+\nor{y}$
holds for all vectors $x,y\in\NN^n$.

\begin{lem}[Decomposition lemma~\cite{juk-SIDMA}]\label{lem:decomp}
  If a set $B\subset\NN^n$ can be produced by a Minkowski $(\cup,+)$
  circuit of size $t$, then $B$ is a union of $t$ sumsets
  $X+Y\subseteq B$ with the following property.
  \begin{enumerate}
  \item[$(\ast)$] \label{itm:decom} For every norm-measure
    $\Nor:\NN^n\to\RR_+$, for every vector $b\in B$ of norm
    $\nor{b}>1$, and every $1/\nor{b}\leq \bbal < 1$ at least one of
    these sumsets $X+Y$ contains vectors $x\in X$ and $y\in Y$ such
    that $x+y=b$ and
    \[
    \tfrac{\bbal}{2}\cdot\nor{b}<\nor{x}\leq \bbal\cdot \nor{b}\,.
    \]
  \end{enumerate}
\end{lem}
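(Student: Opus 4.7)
I would prove the lemma by induction on the circuit size~$t$, attaching one sumset to each gate. The base case $t=0$ is vacuous, since a size-$0$ circuit is a single input node whose unique produced vector has $\mu$-norm at most~$1$, so the hypothesis $\nor{b}>1$ in $(\ast)$ never fires.

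For the inductive step let the output be $v=u\oplus w$, with subcircuits $\Phi_u,\Phi_w$ of sizes $s_u,s_w$. If $\oplus=\cup$, then the inductive sumsets for $\pr{u}$ and $\pr{w}$ together form at most $s_u+s_w\le t-1$ sumsets inside $B=\pr{u}\cup\pr{w}$, and every $b\in B$ lies in one of the two pieces and inherits $(\ast)$ from the corresponding side. If $\oplus=+$, so that $B=\pr{u}+\pr{w}$ and every $b\in B$ has a canonical split $b=x^*+y^*$ with $x^*\in\pr{u}$, $y^*\in\pr{w}$, I would take each $\Phi_u$-sumset $X'+Y'\subseteq\pr{u}$ and extend it to $X'+(Y'+\pr{w})\subseteq B$, analogously for the $\Phi_w$-sumsets, and add one new top-level sumset $\pr{u}+\pr{w}$ associated with~$v$, for a total of at most $s_u+s_w+1\le t$ sumsets. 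Internally to each recursive call the role of $\pr{w}$ is played by the cumulative Minkowski translate of everything that sits above the current subcircuit, which matches the residue $\compl{v}$ in the notation of the lemma, so that the sumsets produced by the recursion have the form $\pr{v_L}+(\pr{v_R}+\compl{v})$ at the topmost $+$-gate of each recursion.

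To verify $(\ast)$ at a $+$-gate for given $\mu$, $b$, and $\bbal\in[1/\nor{b},1)$, I would case-split on $\nor{x^*}$. If $\bbal\nor{b}/2<\nor{x^*}\le\bbal\nor{b}$, the top-level sumset witnesses $(\ast)$ with $x=x^*$. If $\nor{x^*}>\bbal\nor{b}$, then $\nor{x^*}>\bbal\nor{b}\ge 1$ and the rescaled parameter $\bbal_u:=\bbal\nor{b}/\nor{x^*}$ lies in $[1/\nor{x^*},1)$; the induction hypothesis applied to $\Phi_u$ at scale $\bbal_u$ produces a $\Phi_u$-sumset $X'+Y'$ and a decomposition $x^*=x+y'$ with $x\in X'$ and $\nor{x}\in(\bbal_u\nor{x^*}/2,\bbal_u\nor{x^*}]=(\bbal\nor{b}/2,\bbal\nor{b}]$; the extended sumset $X'+(Y'+\pr{w})$ then gives $b=x+(y'+y^*)$ with $x$ in the required range. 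The remaining case $\nor{x^*}\le\bbal\nor{b}/2$ forces $\nor{y^*}>\bbal\nor{b}/2$ by subadditivity (since $\bbal<1$), and is treated symmetrically on the $\Phi_w$ side, or when $\nor{y^*}$ itself lies in the target interval, by the top-level sumset read with $X$ and $Y$ swapped.

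\textbf{Main obstacle.} The delicate point is bookkeeping the labeling of $X$ and $Y$ at $+$-gates: the lemma demands that the ``small'' side of the decomposition sit in the $X$-factor of some witnessing sumset, and one has to ensure that the case analysis on the pair $(\nor{x^*},\nor{y^*})$ is exhaustive while keeping the total number of sumsets at most~$t$. The rescaling $\bbal_u=\bbal\nor{b}/\nor{x^*}$ is what aligns the inductive parameter with the new scale; beyond this, the whole argument uses only the monotonicity and subadditivity of $\mu$, the bound $\mu(\vec e_i)\le 1$ at input nodes, and the trivial inclusion $\pr{v}+\compl{v}\subseteq B$.
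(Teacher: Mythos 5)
Your plan is in the same spirit as the paper's proof --- one sumset per gate, with the halving step coming from subadditivity of $\mu$ and a rescaling of the balance parameter --- but the specific induction you set up has two genuine gaps.

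First, the counting breaks for circuits that are not formulas. In a DAG the subcircuits $\Phi_u$ and $\Phi_w$ feeding the output gate may share gates, so $s_u+s_w+1$ can exceed $t$; worse, a gate reachable from the output along several paths receives a \emph{different} extended sumset for each path (the accumulated translate $Y'+\pr{w}+\cdots$ depends on the sequence of siblings along that path), so the number of sumsets can grow with the number of paths, not the number of gates. Your remark that the accumulated translate ``matches the residue $\compl{v}$'' points at exactly the right fix, but as written it is an overstatement: the accumulated translate is only \emph{contained} in $\compl{v}$ and is path-dependent. The paper sidesteps the issue by defining the sumset at each gate $v$ directly and canonically as $\pr{v}+\compl{v}$ (one per gate, no recursion), and then establishing $(\ast)$ by walking a single path backwards from the output gate.

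Second, your case analysis at a $+$-gate is not exhaustive under the lemma's asymmetric statement. When $\nor{x^*}\le \bbal\nor{b}/2$ and $\bbal\nor{b}/2<\nor{y^*}\le\bbal\nor{b}$ --- which is consistent with monotonicity and subadditivity whenever $\bbal\ge 2/3$, e.g.\ $\nor{b}=10$, $\bbal=0.8$, $\nor{x^*}=3$, $\nor{y^*}=8$ --- the only available piece whose norm lies in the target interval $(4,8]$ is $y^*$, and it sits in the $Y$-factor $\pr{w}$ of your top-level sumset $\pr{u}+\pr{w}$. ``Reading the sumset with $X$ and $Y$ swapped'' is not licensed: the lemma demands $x\in X$ with $\nor{x}$ in the interval, and the two factors play asymmetric roles downstream (compare the different bounds on $\skal{a,x}$ and $\skal{a,y}$ in \cref{lem:rect1}). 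Recursing into $\Phi_w$ does not help either, since the rescaled parameter $\bbal_w=\bbal\nor{b}/\nor{y^*}$ would have to be $\ge 1$. You could repair this by adding both orderings $\pr{u}+\pr{w}$ and $\pr{w}+\pr{u}$ at every $+$-gate, but that doubles the count and no longer yields $t$ sumsets. The paper's backward walk avoids the problem structurally: at a $+$-gate \emph{both} children carry decompositions of $b$ (namely $(x_u,x_w+y)$ at $u$ and $(x_w,x_u+y)$ at $w$), the walk descends into whichever child has the larger first component, and since every gate's sumset has its produced set as the $X$-factor, the witness always lands on the correct side.
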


The lemma was originally proved in \cite[Theorem~D]{juk-SIDMA}. Here
we give a simpler proof.

\begin{proof}
  Let $\Phi$ be a Minkowski $(\cup,+)$ circuit of size $t$ producing
  the set $B$.  Since we have only $t$ gates in the circuit, it is
  enough to show that the collection of sumsets $\pr{v}+\compl{v}$
  associated with the gates $v$ of $\Phi$ has the desired
  property~$(\ast)$.  So, fix some norm-measure $\mu:\NN^n\to\RR_+$,
  some vector $b\in B$ of norm $\mm:=\nor{b}>1$, and a real number
  $1/\mm\leq \bbal < 1$.

  By a \emph{decomposition} of the vector $b$ (or just a
  \emph{decomposition}, because the vector $b$ is fixed) at a gate $v$
  we will mean a pair $(x,y)\in\pr{v}\times\compl{v}$ of vectors (if
  there is one) such that $x+y=b$. The \emph{norm} of such a
  decomposition is the norm $\nor{x}$ of the first vector (that in the
  set $\pr{v}$).  Note that at the output gate, we have the unique
  decomposition $(x,y)=(b,\vec{0})$ of $b$ of
  norm~$\nor{x}=\nor{b}=\mm$.

\begin{clm}
  Let $v$ be a gate entered from gates $u$ and $w$. If there is a
  decomposition $(x,y)$ of vector $b$ at gate $v$, then there is a
  decomposition $(x',y')$ of $b$ at $u$ or $w$ such that
  $\tfrac{1}{2}\cdot\nor{x}\leq \nor{x'}\leq \nor{x}$.
\end{clm}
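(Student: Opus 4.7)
The plan is to split into two cases according to whether the gate $v$ is a union gate $v = u \cup w$ or a Minkowski sum gate $v = u + w$, and in each case to exhibit an appropriate decomposition at one of the input gates. The two ingredients needed from the norm-measure are exactly monotonicity and subadditivity; the only subtle point is checking that the constructed second coordinate actually lies in the residue of $u$ or $w$.

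First I would handle the \emph{union} case. If $v = u \cup w$, then $\pr{v} = \pr{u} \cup \pr{w}$, so the first coordinate $x$ of the given decomposition $(x,y)$ lies in $\pr{u}$ or in $\pr{w}$; say $x \in \pr{u}$. Since $\pr{u} \subseteq \pr{v}$, we have $\pr{u} + y \subseteq \pr{v} + y \subseteq B$, so $y \in \compl{u}$. Hence $(x,y)$ is itself a decomposition at $u$, and its norm is trivially $\nor{x}$, which lies in the required range $[\tfrac{1}{2}\nor{x},\nor{x}]$.

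Next I would handle the \emph{Minkowski sum} case, which is where the factor of two enters. If $v = u + w$, then $x = x_u + x_w$ for some $x_u \in \pr{u}$ and $x_w \in \pr{w}$. By subadditivity of $\Nor$, $\nor{x_u} + \nor{x_w} \geq \nor{x}$, so at least one of the summands has norm $\geq \tfrac{1}{2}\nor{x}$; say $\nor{x_u} \geq \tfrac{1}{2}\nor{x}$. By monotonicity, $\nor{x_u} \leq \nor{x_u + x_w} = \nor{x}$. It remains to verify that $(x_u,\, x_w + y)$ is a legal decomposition at $u$: the first coordinate $x_u$ lies in $\pr{u}$ by construction, their sum is $x_u + x_w + y = x + y = b$, and
\[
\pr{u} + (x_w + y) = (\pr{u} + \pr{w}) + y \subseteq \pr{v} + y \subseteq B,
\]
so $x_w + y \in \compl{u}$. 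Thus the norm of this decomposition is $\nor{x_u} \in [\tfrac{1}{2}\nor{x},\nor{x}]$, as required.

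There is essentially no obstacle beyond carefully unpacking the definition of the residue $\compl{v}$ and invoking the two assumed properties of the norm-measure. The only place where a genuine choice is made is the WLOG selection of the ``heavier half'' $x_u$ in the sum case, and this is exactly what forces the lower bound $\tfrac{1}{2}\nor{x}$ (rather than $\nor{x}$) in the conclusion. Once the Claim is established, iterating it from the output gate down along a path in $\Phi$ produces decompositions whose norms form a sequence that at least halves between consecutive steps, which is precisely what is needed to land in the window $(\tfrac{\bbal}{2}\nor{b},\bbal\nor{b}]$ at some gate.
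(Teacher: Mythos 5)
Your proof is correct and follows essentially the same route as the paper's: the union case keeps the pair $(x,y)$ at whichever input contains $x$, and the Minkowski sum case splits $x=x_u+x_w$, uses subadditivity of $\Nor$ for the lower bound $\tfrac12\nor{x}$ and monotonicity for the upper bound $\nor{x}$, and pushes the other summand into the residue of the chosen input. One cosmetic slip: the displayed $\pr{u}+(x_w+y)=(\pr{u}+\pr{w})+y$ should be an inclusion $\subseteq$, since the left side translates $\pr{u}$ by the single vector $x_w+y$ rather than by all of $\pr{w}+y$; the chain of inclusions down to $B$, and hence $x_w+y\in\compl{u}$, still goes through.
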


 \begin{proof}
   If $v=u\cup w$ is a union gate, then $\pr{v}=\pr{u}\cup \pr{w}$
   and, hence, $\compl{v}=\compl{u}\cap \compl{w}$.  So, the same pair
   $(x,y)$ is a decomposition at the gate $u$ (if $x\in\pr{u}$) or at
   the gate $w$ (if $x\in\pr{w}$), and the claim is trivial in this
   case.

   Assume now that $v=u+w$ is a Minkowski sum gate.  Then $x=x_u+x_w$
   for some vectors $x_u\in\pr{u}$ and $x_w\in\pr{w}$. Since vector
   $y$ belongs to the residue $\compl{v}$ of gate $v$, we know that
   $\pr{u}+\pr{w}+y\subseteq B$ holds. In particular, both inclusions
   $\pr{u}+(x_w+y)\subseteq B$ and $\pr{w}+(x_u+y)\subseteq B$ must
   hold. So, vector $x_w+y$ belongs to the residue $\compl{u}$ of gate
   $u$, and vector $x_u+y$ belongs to the residue $\compl{w}$ of gate
   $w$.  This implies that the pair $(x_u,x_w+y)$ is a decomposition
   of $b$ at the gate $u$, and the pair $(x_w,x_u+y)$ is a
   decomposition of $b$ at the gate $w$. Since $x=x_u+x_w$, the
   monotonicity of the norm implies that both $\nor{x_u}$ and
   $\nor{x_w}$ are at most $\nor{x}$, while the subadditivity of the
   norm implies that one of the norms $\nor{x_u}$ and $\nor{x_w}$ of
   these decompositions must be at least $\tfrac{1}{2}\cdot\nor{x_u+x_w}=\tfrac{1}{2}\cdot\nor{x}$,
   and we can take that input $u$ or $w$ at which the decomposition
   has larger norm.
 \end{proof}

 We now start at the output gate with the unique decomposition
 $(x,y)=(b,\vec{0})$ of vector $b$, and traverse an input-output path
 $P$ in the circuit \emph{backwards} by using the following rule: if
 $v$ is a currently reached gate, and $(x,y)$ is a decomposition at
 this gate, then go to that of the two inputs of $v$ which has a
 decomposition $(x',y')$ of norm $\nor{x'}\geq
 \tfrac{1}{2}\cdot\nor{x}$ (if both input gates have this property,
 then go to any of them). The claim above ensures that we will
 eventually reach some input node.

 If this input node holds the set $\{\vec{0}\}$, then the only
 decomposition $(x,y)=(\vec{0},b)$ of vector $b$ at this gate has norm
 $\nor{x}=\nor{\vec{0}}\leq 1$, and if this gate holds
 $\{\vec{e}_i\}$, then the only decomposition $(x,y)=(\vec{e}_i,
 b-\vec{e}_i)$ of $b$ at this gate has also  norm
 $\nor{x}=\nor{\vec{e}_i}\leq 1$. In both cases, we have that
 $\nor{x}\leq 1$, which is at most $\bbal\mm$, because $\bbal\geq
 1/\mm$.

 On the other hand, the (also unique) decomposition
 $(x,y)=(b,\vec{0})$ of the vector $b$ at the output gate has norm
 $\nor{x}=\nor{b}= \mm$, which is strictly larger than $\bbal \mm$,
 because $\bbal <1$.  So, there must be an edge $(u,v)$ in the path
 $P$ at which the jump from $\leq\! \bbal \mm$ to $>\! \bbal \mm$
 happens.  That is, there must be a decomposition $(x,y)$ at the gate
 $v$ and a decomposition $(x',y')$ at the gate $u$ such that $\nor{x}
 >\bbal \mm$ but $\nor{x'} \leq \bbal \mm$. By the above claim, we
 have $\nor{x'}\geq \tfrac{1}{2}\cdot\nor{x}$.  We have thus found a
 sumset $\pr{u}+\compl{u}$ and vectors $x'\in \pr{u}$ and
 $y'\in\compl{u}$ such that $x'+y'=b$ and $\tfrac{1}{2}\bbal \mm <
 \nor{x'}\leq \bbal \mm$, as desired.
\end{proof}

\subsection{Decomposition lemma for approximating $(\max,+)$ circuits}
In the following lemma, $0<\bal<1$ is an arbitrary fixed ``balance''
parameter. For a $0$-$1$ vector~$a$, let $|a|:=\skal{a,a}$ denote the
number of $1$s in~$a$.

\begin{lem}\label{lem:rect1}
  Let $\Phi$ be a $(\max,+)$ circuit of size $t$ approximating the
  maximization problem on a set $A\subseteq \{0,1\}^n$ within
  factor~$r\geq 1$, and let $B\subset\NN^n$ be the set of vectors
  produced by $\Phi$.  Then there exist $t$ or fewer sumsets
  $X+Y\subseteq B$ such that:
  \begin{enumerate}

  \item if $x\in X$ and $y\in Y$, then $x+y\leq a$ for some $a\in A$;

  \item $\skal{x,y}=0$ holds for all $x\in X$ and $y\in Y$;

  \item for every vector $a\in A$ with $|a|\geq r/\bal$ ones, at least
    one of these sumsets $X+Y$ contains vectors $x\in X$ and $y\in Y$
    such that:
    \begin{equation}\label{eq:bal-vectors}
      \skal{a,x+y}\geq \tfrac{1}{r}\cdot |a|\,,\ \skal{a,x} > \tfrac{\bal}{2r}\cdot|a| \ \mbox{ and }\ \skal{a,y}\geq
      \tfrac{1-\bal}{r}\cdot |a|\,.
    \end{equation}
  \end{enumerate}
\end{lem}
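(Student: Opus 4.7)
The plan is to apply the decomposition lemma (\cref{lem:decomp}) to the Minkowski version of $\Phi$ and to read off items~(1)--(3) by combining this with the structural description of $B$ given by \cref{lem:maxA}. First, by \cref{lem:constant-free}, I may assume $\Phi$ is constant-free, so its Minkowski version has the same size $t$ and produces exactly the same set $B\subset\NN^n$. Applying \cref{lem:decomp} gives $t$ sumsets $X+Y\subseteq B$ enjoying property~$(\ast)$; these are the sumsets I will use.

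Items~(1) and (2) should fall out almost immediately. Any vector $x+y$ with $x\in X$ and $y\in Y$ lies in $B$, so \cref{lem:maxA}(i) provides some $a\in A$ with $x+y\leq a$, which is item~(1). Since $a\in\{0,1\}^n$ and $x,y\in\NN^n$, this coordinatewise inequality forces $x_i+y_i\leq 1$, hence at most one of $x_i,y_i$ is nonzero at each coordinate~$i$; therefore $\skal{x,y}=\sum_i x_iy_i=0$, establishing item~(2).

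The real work is item~(3). Fix $a\in A$ with $|a|\geq r/\bal$ and define $\nor{v}:=\skal{a,v}$. This is a valid norm-measure: $\nor{\vnul}=0$ and $\nor{\vec{e}_i}=a_i\in\{0,1\}$, while monotonicity and in fact additivity of $\Nor$ follow from linearity of $\skal{a,\cdot}$ combined with $a\geq\vnul$. By \cref{lem:maxA}(ii), some $b\in B$ satisfies $\nor{b}=\skal{a,b}\geq |a|/r$; set $\mm:=\nor{b}$. The hypothesis $|a|\geq r/\bal$ now forces $\mm\geq 1/\bal>1$, so I may invoke property~$(\ast)$ with $\Nor$, $b$, and the choice $\bbal:=\bal$, which lies in the required range $[1/\mm,1)$. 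This yields a sumset $X+Y$ from the collection and vectors $x\in X$, $y\in Y$ with $x+y=b$ and $\tfrac{\bal}{2}\mm<\nor{x}\leq \bal\mm$.

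The three inequalities in~\cref{eq:bal-vectors} then follow directly: $\skal{a,x+y}=\mm\geq |a|/r$; $\skal{a,x}=\nor{x}>\tfrac{\bal}{2}\mm\geq \tfrac{\bal|a|}{2r}$; and by additivity, $\nor{y}=\mm-\nor{x}\geq (1-\bal)\mm\geq \tfrac{(1-\bal)|a|}{r}$. The only delicate point I foresee is the verification that the parameter $\bbal=\bal$ actually lies in the admissible range $[1/\mm,1)$ required by the decomposition lemma---this is precisely what the hypothesis $|a|\geq r/\bal$ is tailored to guarantee. Everything else is bookkeeping.
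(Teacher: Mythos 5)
Your proposal is correct and follows essentially the same route as the paper: reduce to a constant-free circuit, pass to its Minkowski version, apply the decomposition lemma (\cref{lem:decomp}) with the norm-measure $\nor{v}=\skal{a,v}$ and $\bbal=\bal$, and combine with \cref{lem:maxA} to get items (1)--(3). Your bookkeeping with $\mm:=\nor{b}$ (rather than $|a|/r$) is in fact the cleaner way to extract the inequalities in \cref{eq:bal-vectors}, and the range check $\bal\in[1/\nor{b},1)$ is exactly where the hypothesis $|a|\geq r/\bal$ enters, as in the paper.
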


\begin{proof}
  By \cref{lem:maxA}, we know that the set $B$ has the following two
  properties:
  \begin{itemize}
  \item[\mbox{\rm (i)}] if $b\in B$, then $b\leq a$ for some $a\in A$;
  \item[\mbox{\rm (ii)}] if $a\in A$, then $\skal{a,b}\geq
    \frac{1}{r}\cdot|a|$ for some $b\in B$.
  \end{itemize}

  The Minkowski $(\cup,+)$ version $\Phi'$ of the circuit $\Phi$ has
  the same size $t$ and produces the same set~$B$.  When applied to
  the Minkowski circuit $\Phi'$, \cref{lem:decomp} gives us a
  collection of $t$ sumsets $X+Y\subseteq B$ with the following
  property holding for every norm-measure $\Nor:\NN^n\to\RR_+$, for
  every vector $b\in B$ of norm $\nor{b}>1$, and every real
  number~$\bbal$ satisfying $1/\nor{b}\leq \bbal<1$:
  \begin{enumerate}
  \item[$(\ast)$] \label{itm:decom1} at least one of the sumsets $X+Y$
    contains vectors $x\in X$ and $y\in Y$ such that $x+y=b$ and
    $\tfrac{\bbal}{2}\cdot\nor{b}<\nor{x}\leq \bbal\cdot \nor{b}$.
  \end{enumerate}

  Since the set $A$ consists of only $0$-$1$ vectors, property (i)
  implies that the set $B$ also consists of $0$-$1$ vectors. So,
  $X+Y\subseteq B$ implies that each of our sumsets $X+Y$ has the
  first two properties (1) and (2) claimed in \cref{lem:rect1}. It
  remains to show the third ``balancedness'' property~(3).

  Fix an arbitrary vector $a\in A$ with $|a|\geq r/\bal$ ones.
  Property (ii) of the set $B$ suggests to associate with $a$ the
  norm-measure $\nor{x}=\nnor{x}{a}:=\skal{a,x}$.  Then, by (ii),
  there is a vector $b\in B$ of norm $\nor{b}=\skal{a,b}\geq
  \mm:=|a|/r$, which is $>1$ since $|a|\geq r/\bal$, $r\geq 1$ and
  $\bal<1$. We also have $\bal\geq r/|a|\geq 1/\nor{b}$.  Hence, by
  the property $(\ast)$, at least one of our sumsets $X+Y$ contains
  vectors $x\in X$ and $y\in Y$ such that $x+y=b$ and the following
  inequalities hold:
  \begin{equation}\label{eq:rect}
    \tfrac{1}{2}\bal\cdot \mm < \nor{x}=\skal{a,x}\leq \bal\cdot  \mm\,.
  \end{equation}
  Now, the first inequality in \cref{eq:bal-vectors} follows since
  $\skal{a,x+y}=\skal{a,b}\geq \mm$, the second inequality in
  \cref{eq:bal-vectors} is the first inequality in \cref{eq:rect} and,
  since $\skal{x,y}=0$, the third inequality in \cref{eq:bal-vectors}
  follows from $\skal{a,x+y}\geq \mm$ and the second inequality in
  \cref{eq:rect}.
\end{proof}

\subsection{The rectangle bound}
\label{sec:set-theoretic}

A~\emph{rectangle} is a family of sets specified by giving a pair
$\A,\B$ of families of sets which is \emph{cross-disjoint} in that
$A\cap B=\emptyset$ holds for all sets $A\in \A$ and $B\in\B$.  The
rectangle $\R=\A\lor \B$ itself consists of all sets $A\cup B$ with
$A\in \A$ and $B\in\B$.  A~rectangle $\R$ \emph{lies below} a family
$\f$ if every set of $\R$ is contained in at least one set of~$\f$.

For real numbers $r\geq 1$ and $0<\bal<1$, we say that a set $F$
\emph{appears $(r,\bal)$-balanced} in a rectangle $\R=\A\lor\B$ if
there are sets $A\in\A$ and $B\in\B$ such that
\begin{equation}\label{eq:bal-sets}
  |F\cap (A\cup B)|\geq
  \tfrac{1}{r}\cdot |F|,\ \ |F\cap A| > \tfrac{\bal}{2r}\cdot |F|\ \mbox{ and }\ \ |F\cap B|\geq \tfrac{1-\bal}{r}\cdot |F|\,.
\end{equation}
Thus, a set $F$ appears balanced in a rectangle $\R$, if for some pair
of disjoint sets whose union belongs to $\R$, the union as well as the
sets themselves contain ``many enough'' elements of~$F$. In our
applications, we will only use the last two inequalities in
\cref{eq:bal-sets}, but the first inequality may also be important in
some applications.

The following lemma is just a translation of \cref{lem:rect1} from the
language of vectors to the language of sets. Recall that $\Max{\f}{r}$
denotes the minimum size of a $(\max,+)$ circuit approximating the
maximization problem on $\f$ within the factor~$r$.

\begin{lem}\label{lem:rect0}
  Let $r\geq 1$ and $0<\bal<1$.  If $\Max{\f}{r}\leq t$, then there
  are~$t$ or fewer rectangles lying below $\f$ such that every set of
  $\f$ with at least $r/\bal$ elements appears $(r,\bal)$-balanced in
  at least one of these rectangles.
\end{lem}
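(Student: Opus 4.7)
The plan is to apply \cref{lem:rect1} to the set $A\subseteq\{0,1\}^n$ consisting of the characteristic vectors of the sets in $\f$, and then translate each of the sumsets $X+Y\subseteq B$ it produces into a rectangle in the set-theoretic language. Concretely, for each such sumset, set $\A:=\{\supp{x}\colon x\in X\}$ and $\B:=\{\supp{y}\colon y\in Y\}$ and consider the rectangle $\R=\A\lor\B$. Since \cref{lem:rect1} yields at most $t$ sumsets, we obtain at most $t$ rectangles; it remains to verify cross-disjointness, that each $\R$ lies below $\f$, and that every sufficiently large set of $\f$ appears $(r,\bal)$-balanced in some $\R$.

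First, cross-disjointness of $\A$ and $\B$ is immediate from property~(2) of \cref{lem:rect1}: for all $x\in X$ and $y\in Y$ we have $\skal{x,y}=0$, and since (by the proof of \cref{lem:maxA}) both $X$ and $Y$ consist of $0$-$1$ vectors, this is exactly $\supp{x}\cap\supp{y}=\emptyset$. Second, $\R$ lies below $\f$ by property~(1): for $x\in X$ and $y\in Y$, the inequality $x+y\leq a$ for some $a\in A$, combined with the disjointness of supports, says precisely that $\supp{x}\cup\supp{y}\subseteq F$ for the set $F\in\f$ whose characteristic vector is~$a$.

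Third, the balancedness clause is the translation of property~(3). Fix $F\in\f$ with $|F|\geq r/\bal$ and let $a\in A$ be its characteristic vector, so $|a|=|F|$. \Cref{lem:rect1} gives a sumset $X+Y$ (hence a rectangle $\R=\A\lor\B$) and vectors $x\in X$, $y\in Y$ satisfying \cref{eq:bal-vectors}. Put $A':=\supp{x}\in\A$ and $B':=\supp{y}\in\B$. Since $a$, $x$, $y$ are $0$-$1$ vectors with $\supp{x}\cap\supp{y}=\emptyset$, we have $\skal{a,x}=|F\cap A'|$, $\skal{a,y}=|F\cap B'|$, and $\skal{a,x+y}=|F\cap(A'\cup B')|$. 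Substituting these identities into the three inequalities of \cref{eq:bal-vectors} yields exactly the three inequalities of \cref{eq:bal-sets}, so $F$ appears $(r,\bal)$-balanced in~$\R$.

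There is no real obstacle here; the statement is essentially a dictionary between $0$-$1$ vectors (with Minkowski sums and scalar products) and subsets of $[n]$ (with unions, intersections and cardinalities). The only care required is to note that the vectors produced by a $(\max,+)$ circuit approximating an objective over $A\subseteq\{0,1\}^n$ are themselves $0$-$1$ vectors, which is already part of \cref{lem:maxA}; this is what makes the identifications $\skal{a,x}=|F\cap\supp{x}|$ and $x+y=\chi_{\supp{x}\cup\supp{y}}$ valid and lets \cref{lem:rect1} transfer verbatim into the statement about rectangles.
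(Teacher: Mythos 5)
Your proof is correct and follows essentially the same route as the paper: apply \cref{lem:rect1} to the set of characteristic vectors, map each sumset $X+Y$ to the rectangle of supports, and read off cross-disjointness, the lies-below condition, and balancedness from properties (2), (1) and (3) respectively. The translation identities $\skal{a,x}=|F\cap\supp{x}|$ and the observation that all vectors involved are $0$-$1$ vectors are exactly the points the paper relies on as well.
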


\begin{proof}
  Take a $(\max,+)$ circuit $\Phi$ of size $t=\Max{\f}{r}$
  approximating the maximization problem on $\f$ within factor~$r$.
  Let $A\subseteq\{0,1\}^n$ be the set of characteristic $0$-$1$
  vectors of the sets in $\f$. Hence, the circuit $r$-approximates the
  maximization problem on $A$.  Let $B\subset\NN^n$ be the set of
  vectors produced by $\Phi$.

  \Cref{lem:rect1} gives us $t$ or fewer sumsets $X+Y\subseteq B$ with
  properties (1)--(3) listed in this lemma.  By property (1), the set
  $B$ and, hence, each of our $t$ sumsets $X+Y$ consists of only
  $0$-$1$ vectors. So, each sumset $X+Y$ translates into the rectangle
  $\R=\A\lor\B$, where $\A$ is the family of supports
  $\supp{x}=\{i\colon x_i=1\}$ of vectors $x\in X$, and $\B$ is the
  family of supports $\supp{y}=\{i\colon y_i=1\}$ of vectors $y\in
  Y$. After this translation, property (1) of \cref{lem:rect1} implies
  that each of these rectangles $\R=\A\lor\B$ lies below our family
  $\f$, property (2) yields the cross-disjointness condition ($A\cap
  B=\emptyset$ for all sets $A\in \A$ and $B\in\B$), and property (3)
  implies that every set $F\in\f$ with $|F|\geq r/\bal$ elements
  appears $(r,\bal)$-balanced in at least one of these rectangles.
\end{proof}

In applications, we will use a direct consequence of \cref{lem:rect0}
formulated as a \emph{lower bound} on $\Max{\f}{r}$. Let $r\geq 1$ be
a given approximation factor, and $0<\bal <1$ any fixed real parameter
(we are free to choose this parameter).

\begin{thm}[Rectangle bound]\label{thm:rect}
  Let $\f$ be a family of feasible solutions, and $\h\subseteq \f$ some
  subfamily of feasible solutions, each of cardinality at least $r/\bal$. If for
  every rectangle $\R$ lying below $\f$, at most $h$ of the sets of
  $\h$ appear $(r,\bal)$-balanced, then $\Max{\f}{r}\geq |\h|/h$.
\end{thm}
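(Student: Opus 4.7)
The plan is to derive the rectangle bound as an almost immediate pigeonhole consequence of \cref{lem:rect0}, which already does all the structural work (converting a small $(\max,+)$ circuit into a small collection of rectangles lying below $\f$ in which all ``large enough'' feasible solutions appear balanced).

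First I would set $t:=\Max{\f}{r}$ and, aiming at the contrapositive, simply apply \cref{lem:rect0} to obtain a collection $\R_1,\ldots,\R_s$ of rectangles with $s\leq t$, each lying below $\f$, with the property that every set $F\in\f$ satisfying $|F|\geq r/\bal$ appears $(r,\bal)$-balanced in at least one $\R_i$. By the hypothesis of the theorem, every set of $\h$ has cardinality $\geq r/\bal$, so this ``balanced appearance'' property in particular holds for every set of $\h$. Thus the sets of $\h$ are covered (with multiplicities) by the $s$ rectangles in the sense of balanced appearance.

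Next I would apply a simple counting argument. The assumption of the theorem states that in any single rectangle lying below $\f$, at most $h$ sets of $\h$ appear $(r,\bal)$-balanced; this applies in particular to each $\R_i$. Hence the total number of (rectangle, set)-incidences where $F\in\h$ appears $(r,\bal)$-balanced in $\R_i$ is at most $s\cdot h$. On the other hand, each $F\in\h$ contributes at least one such incidence, yielding $|\h|\leq s\cdot h\leq t\cdot h$, and therefore $\Max{\f}{r}=t\geq |\h|/h$, as claimed.

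There is essentially no obstacle here beyond invoking \cref{lem:rect0} correctly: the only thing to double-check is that the cardinality threshold $r/\bal$ matches in both statements, so that every set of $\h$ qualifies to be covered, and that the notion of ``$(r,\bal)$-balanced appearance'' in \cref{lem:rect0} is literally the one from the rectangle definition used in the theorem statement; both are true by construction. All the real work—eliminating constant inputs, extracting sumsets via the decomposition lemma, and translating from vectors to set-theoretic rectangles—has already been done in \cref{lem:constant-free,lem:maxA,lem:decomp,lem:rect1,lem:rect0}, so the final step is a one-line pigeonhole.
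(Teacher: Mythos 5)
Your proposal is correct and matches the paper exactly: the paper states \cref{thm:rect} as a direct consequence of \cref{lem:rect0} without writing out a proof, and the one-line pigeonhole you give (every set of $\h$ has cardinality at least $r/\bal$, hence appears $(r,\bal)$-balanced in one of the at most $t=\Max{\f}{r}$ rectangles, each of which accounts for at most $h$ sets of $\h$, so $|\h|\leq th$) is precisely the intended argument.
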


In particular, \cref{res:rect} stated in \cref{sec:results}
corresponds to the balance parameter $\bal=2/3$. In the following
applications, we will always take $\f=\h$, but the possibility to
choose special subfamilies $\h\subseteq \f$ of feasible solutions may
be useful in other applications.  Note that, besides of the
cross-disjointness and balancedness of rectangles, the fact that the
rectangles must lie below $\f$ is also important. If, say, $\f$ is the
family of all perfect matchings in some graph, and if a rectangle $\R$
lies below $\f$, then we immediately know that every set of $\R$ must
be a union of two vertex-disjoint matchings.

\subsection{Maximization on designs}
\label{sec:designs}
We already know that there \emph{exist} many maximization problems for
which slight decrease of the allowed approximation factor from
$r=1+o(1)$ to $r=1$ can exponentially increase the size of $(\max,+)$
circuits (\cref{prop:counting1}). We also know \emph{explicit}
maximization problems for which such a jump in circuit size occurs
when decreasing the approximation factor from $r=2$ to $r=1$
(\cref{thm:sidon}).

Our goal in this section is to show that such jumps can happen for
arbitrarily \emph{large} approximation factors $r$: a slight decrease
of the allowed approximation factor $r$ can make tractable problems
intractable.  We demonstrate these jumps on maximization problems
whose families $\f$ of feasible solutions are ``combinatorial
designs'' (\cref{thm:hierarchy} below).

An $(m,d)$-\emph{design} (know also as a \emph{weak design}) is a
family $\f$ which is:
\begin{enumerate}
\item[$\circ$] $m$-\emph{uniform}: every set has exactly $m$ elements;
\item[$\circ$] $d$-\emph{disjoint}: no two distinct sets share $d$ or
  more elements in common.
\end{enumerate}

We will see soon (the upper bound in \cref{thm:hierarchy}) that, for
some $(m,d)$-designs~$\f\subseteq 2^{[n]}$ with $n=m^2$,
$\Max{\f}{r}=O(n)$ holds when the approximation factor $r=m/d$ is
allowed. On the other hand, we have the following general lower bound
on $\Max{\f}{r}$ for \emph{every} $(m,d)$ design $\f$, when the
allowed approximation factor $r$ is only slightly smaller than~$m/d$.

To state this bound, we need an auxiliary notation. For a family $\f$
of sets and a real number $l\geq 0$, let $\ddeg{\f}{l}$ denote the
maximal possible number of sets in $\f$ containing a fixed set with
$l$ (or more) elements. In other words, $\ddeg{\f}{l}$ is the maximal
possible number of sets in $\f$ whose intersection has $l$ (or more)
elements.  In particular, if $m$ is the maximum cardinality of a set
of~$\f$, then $|\f|=\ddeg{\f}{0}\geq \ddeg{\f}{1}\geq \ldots\geq
\ddeg{\f}{m}=1$, and $\ddeg{\f}{l}=0$ for all $l>m$.  Note that a
nonempty $m$-uniform family $\f$ is an $(m,d)$-design if and only if
$\ddeg{\f}{d}=1$. Also, $\ddeg{\f}{1}=1$ means that all sets of $\f$
are disjoint.

\begin{lem}\label{lem:gen-design}
  Let $\f$ be an $(m,d)$-design for $1\leq d <m$, $1/(d+1)\leq \bal <
  1$ and $l=\bal d/2$. For the factor $r=(1-\bal)m/d$, we have
  \[
  \Max{\f}{r}\geq \frac{|\f|}{\ddeg{\f}{l}}\,.
  \]
\end{lem}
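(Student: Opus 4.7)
The plan is to apply the rectangle bound (Theorem~\ref{thm:rect}) with $\h=\f$ and $h=\ddeg{\f}{l}$. I first check that every set of $\h=\f$ has cardinality at least $r/\bal$: since $|F|=m$ for $F\in\f$ and $r/\bal=(1-\bal)m/(\bal d)$, this reduces to $\bal(d+1)\ge 1$, which is the hypothesis $\bal\ge 1/(d+1)$.

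So, fix any rectangle $\R=\A\lor\B$ lying below $\f$, and let $\h_\R\subseteq\f$ be the sets that appear $(r,\bal)$-balanced in $\R$. For each $F\in\h_\R$, fix a witness $(A_F,B_F)\in\A\times\B$ satisfying \eqref{eq:bal-sets}. Plugging $r=(1-\bal)m/d$ into the first inequality of \eqref{eq:bal-sets} gives $|F\cap(A_F\cup B_F)|\ge m/r=d/(1-\bal)\ge d$. Since $\R$ lies below $\f$, the set $A_F\cup B_F$ is contained in some $F^*\in\f$, and then $|F\cap F^*|\ge d$ forces $F^*=F$ by the $(m,d)$-design property. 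Hence $A_F\cup B_F\subseteq F$, and the remaining two inequalities of \eqref{eq:bal-sets} become $|A_F|=|F\cap A_F|>\bal d/(2(1-\bal))\ge \bal d/2=l$ and $|B_F|=|F\cap B_F|\ge d$.

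The crux of the argument is to show that all sets in $\h_\R$ share a common subset of size at least $l$. Fix any $F_0\in\h_\R$ and set $A_0:=A_{F_0}$; by the previous paragraph $|A_0|>l$. For an arbitrary $F\in\h_\R$, cross-disjointness of $\R$ gives $A_0\cap B_F=\emptyset$, and since $\R$ lies below $\f$ there is some $F^{**}\in\f$ with $A_0\cup B_F\subseteq F^{**}$. Because $B_F\subseteq F$ with $|B_F|\ge d$, we get $|F^{**}\cap F|\ge d$, so the design property again forces $F^{**}=F$ and in particular $A_0\subseteq F$. Every set in $\h_\R$ therefore contains $A_0$, so by definition of $\ddeg{\f}{l}$ we obtain $|\h_\R|\le \ddeg{\f}{l}$; Theorem~\ref{thm:rect} then delivers $\Max{\f}{r}\ge |\f|/\ddeg{\f}{l}$.

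The main obstacle is this ``common witness'' step: it is not automatic that a single set $A_0$ is contained in every balanced $F\in\h_\R$. The argument exploits \emph{both} the cross-disjointness of the rectangle (so that the hybrid $A_0\cup B_F$ is a legitimate element of $\R$) \emph{and} the strong design constraint that a $d$-subset of some member of $\f$ pins down that member uniquely; either feature alone would be insufficient to collapse all balanced sets onto a single common substructure of size exceeding $l$.
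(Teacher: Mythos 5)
Your proof is correct and follows essentially the same route as the paper's: apply the rectangle bound with $\h=\f$, verify $m\geq r/\bal$ from $\bal\geq 1/(d+1)$, and then use the hybrid set $A_0\cup B_F$ (which lies below $\f$ by the rectangle's product structure) together with $d$-disjointness to force every balanced $F$ to contain a common set of size $\geq l$; the paper does the same, merely taking the union $X$ of all sets of $\A$ as the common core rather than a single witness $A_0$. One small quibble with your closing commentary: membership of the hybrid $A_0\cup B_F$ in $\R$ comes from the product structure of $\R=\A\lor\B$, not from cross-disjointness, which your argument (like the paper's, as it explicitly notes in \cref{sec:matching}) never actually uses here --- cross-disjointness only becomes essential in the matching lower bound (\cref{thm:matching}).
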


\begin{proof}
  We are going to apply the rectangle bound (\cref{thm:rect}) with the
  balance parameter $\bal$. First, observe that $|F|\geq r/\bal$ holds
  for \emph{every} set $F\in\f$: since $|F|=m$, this is equivalent to
  the inequality $1\geq (1-\bal)/\bal d$, which holds because
  $\bal\geq 1/(d+1)$.

  Take an arbitrary rectangle $\R=\A\lor \B$ lying below $\f$. Let
  $\f_{\R}\subseteq \f$ be the family of all sets $F\in\f$ such that
  \[
  |F\cap A|\geq \tfrac{\bal}{2r}\cdot m = \tfrac{\bal}{2(1-\bal)}\cdot
  d\geq \tfrac{\bal}{2}\cdot d=l\ \ \mbox{ and }\ \ |F\cap B|\geq
  \tfrac{1-\bal}{r}\cdot m=d
  \]
  hold for some $A\in\A$ and $B\in\B$. By \cref{thm:rect}, it is
  enough to show that $|\f_{\R}|\leq \ddeg{\f}{l}$.  We can assume
  that all sets $B\in\B$ have $|B|\geq d$ elements: sets $B\in\B$ of
  size $|B|<d$ cannot fulfill $|F\cap B|\geq d$ and, hence, can be
  removed from $\B$ without changing~$\f_{\R}$.  Similarly, we can
  assume that all sets $A\in\A$ have $|A|\geq l$ elements.  Let $X$ be
  the union of all sets in $\A$; hence, $|X|\geq l$.

  \begin{clm}
    All sets of $\f_{\R}$ contain the set $X$.
  \end{clm}
  \begin{proof}
    Take a set $F\in\f_{\R}$. Then $|F\cap B|\geq d$ holds for some
    set $B\in\B$. On the other hand, since the rectangle $\R$ lies
    below $\f$, every set $A\cup B$ with $A\in\A$ must lie in some set
    of $\f$. Since all these sets contain the set $B$ with $|B|\geq d$
    elements, and since the family $\f$ is $d$-disjoint, this implies
    that \emph{all} sets of $\A\lor\{B\}$ and, hence, also the set
    $X\cup B$ must be contained in \emph{one} set $F_B$ of $\f$.
    Since both sets $F$ and $F_B$ of $\f$ contain the same set $F\cap
    B$ of size $|F\cap B|\geq d$, and since the family $\f$ is
    $d$-disjoint, the equality $F=F_B$ and, hence, the desired
    inclusion $X\subseteq F$ follows.
  \end{proof}

  Since the set $X$ has $|X|\geq l$ elements, the claim yields
  $|\f_{\R}|\leq \ddeg{\f}{|X|}\leq \ddeg{\f}{l}$ and, by
  \cref{thm:rect} (applied with $\h=\f$), the desired lower bound
  $\Max{\f}{r}\geq |\f|/|\f_{\R}|\geq |\f|/\ddeg{\f}{l}$ follows.
\end{proof}

\subsection{Factor hierarchy theorem}
We will now apply the general lower bound of \cref{lem:gen-design} to
explicit designs. Let $m$ be a prime power, $1\leq d\leq m$ an
integer, and consider the grid $\gf{m}\times\gf{m}$. The
\emph{polynomial $(m,d)$-design} $\f$ consists of all $|\f|=m^d$
subsets $S$ of points in this grid of the form $S=\{(a,p(a))\colon
a\in \gf{m}\}$ for a univariate polynomial $p=p(x)$ of degree at most
$d-1$ over $\gf{m}$.  Note that no two points of any of these sets $S$
lie in the same row of the grid.

The main combinatorial property of polynomial designs is the
following.
\begin{prop}\label{prop:pol-comb}
  Let $\f$ be a polynomial $(m,d)$-design, and $1\leq d\leq m$.  For
  every set of $l\leq d$ points of the grid $\gf{m}\times\gf{m}$, with
  no two in the same row, exactly $m^{d-l}$ sets of $\f$ contain this
  set. In particular, $\ddeg{\f}{l}\leq m^{d-l}$ holds for every
  $0\leq l\leq d$.
\end{prop}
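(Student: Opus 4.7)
The plan is to use Lagrange interpolation over $\gf{m}$. The space $V$ of univariate polynomials $p(x)$ over $\gf{m}$ of degree at most $d-1$ is a $d$-dimensional $\gf{m}$-vector space, and the sets in $\f$ are exactly the graphs $\{(a,p(a))\colon a\in\gf{m}\}$ for $p\in V$. In particular, every set of $\f$ has exactly one point in each row (one value per input), so any subset of a set of $\f$ automatically has no two points in the same row. Conversely, a fixed set $T=\{(a_1,b_1),\ldots,(a_l,b_l)\}$ with $l\leq d$ and the $a_i$ pairwise distinct is contained in the set corresponding to $p\in V$ precisely when $p(a_i)=b_i$ for each $i=1,\ldots,l$.

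I would then consider the evaluation map $\mathrm{ev}\colon V\to \gf{m}^l$ defined by $\mathrm{ev}(p)=(p(a_1),\ldots,p(a_l))$. This is an $\gf{m}$-linear map from a $d$-dimensional space to an $l$-dimensional space. Since $l\leq d$, Lagrange interpolation (extend $a_1,\ldots,a_l$ arbitrarily to $d$ distinct points in $\gf{m}$, which is possible because $l\leq d\leq m$, and interpolate prescribed values) shows that $\mathrm{ev}$ is surjective. Hence the fiber over $(b_1,\ldots,b_l)$ is an affine subspace of dimension $d-l$, and so has cardinality exactly $m^{d-l}$. This proves the first (exact count) assertion.

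For the ``in particular'' part, fix $0\leq l\leq d$ and an arbitrary set $T$ with $|T|\geq l$. If $T$ is contained in no set of $\f$, it contributes $0$, so we may assume $T$ is contained in at least one set of $\f$; then $T$ consists of points with pairwise distinct first coordinates. If $|T|\leq d$, the first part gives exactly $m^{d-|T|}\leq m^{d-l}$ sets of $\f$ containing $T$. If $|T|>d$, then any two polynomials in $V$ agreeing at $|T|>d-1$ distinct points must coincide, so at most one set of $\f$ contains $T$, and $1\leq m^{d-l}$. Taking the maximum over $T$ yields $\ddeg{\f}{l}\leq m^{d-l}$.

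There is no real obstacle here; the only mild care needed is to notice that subsets of sets of $\f$ always have the ``no two in the same row'' property (so the hypothesis of the first assertion is automatically available when bounding $\ddeg{\f}{l}$), and to separately handle the degenerate case $|T|>d$ in the second assertion.
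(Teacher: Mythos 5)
Your proof is correct and takes essentially the same route as the paper: both count the polynomials of degree at most $d-1$ agreeing with $l$ prescribed values via linear algebra over $\gf{m}$ (the paper phrases it as counting solutions to the linear system in the coefficients of $p$; you phrase it as computing the fiber of the evaluation map, which you show is surjective by Lagrange interpolation). Your additional care in the ``in particular'' step, in particular the observation that subsets of members of $\f$ automatically avoid repeated rows and the separate treatment of $|T|>d$, is a harmless elaboration of what the paper leaves implicit.
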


\begin{proof}
  This is a direct consequence of a standard result in polynomial
  interpolation. For any $l\leq d$ distinct points
  $(a_1,b_1),\ldots,(a_l,b_l)$ in $\gf{m}\!\times\!  \gf{m}$, the
  number of polynomials $p(x)$ of degree at most $d-1$ satisfying
  $p(a_1)=b_1,\ldots,p(a_l)=b_l$ is either~$0$ (if $a_i=a_j$ holds for
  some $i\neq j$) or is exactly $m^{d-l}$: this latter number is exactly the number of solutions of the
  corresponding system of linear equations, with coefficients of $p$
  viewed as variables.
\end{proof}

We already know that $(\min,+)$ circuits approximating
\emph{minimization} problems on polynomial designs (within any finite
factor $r=r(m)$) must be large (\cref{cor:andreev}). Now we show that
also the \emph{maximization} problem on the polynomial $(m,d)$-design
is hard to approximate, as long as the allowed approximation factor is
smaller than~$m/d$.

\begin{thm}[Factor hierarchy theorem]\label{thm:hierarchy}
  Let $\f$ be a polynomial $(m,d)$-design for $1\leq d < m$, and
  $1/(d+1)\leq \bal < 1$. Then:
  \begin{enumerate}
  \item[\mbox{\rm (i)}] $\Max{\f}{r}\leq 3m^2$ for the factor $r=m/d$,
    but
  \item[\mbox{\rm (ii)}] $\Max{\f}{r}\geq m^{\bal d/2}$ for the factor
    $r=(1-\bal)m/d$.
  \end{enumerate}
\end{thm}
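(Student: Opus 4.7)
The plan is to treat the two parts independently. For the lower bound (ii), I would simply invoke Lemma~\ref{lem:gen-design} (which is already stated as \cref{lem:gen-design}) together with the combinatorial count of Proposition~\ref{prop:pol-comb}. The polynomial $(m,d)$-design $\f$ is an $(m,d)$-design since two univariate polynomials of degree $<d$ agreeing on $d$ points must coincide, so any two distinct sets of $\f$ share fewer than $d$ points. Taking the integer $l=\lceil\bal d/2\rceil$, \cref{prop:pol-comb} yields $\ddeg{\f}{l}\leq m^{d-l}$, while $|\f|=m^d$. Plugging these numbers into \cref{lem:gen-design} with factor $r=(1-\bal)m/d$ gives
\[
\Max{\f}{r}\ \geq\ \frac{|\f|}{\ddeg{\f}{l}}\ \geq\ \frac{m^d}{m^{d-l}}\ =\ m^{l}\ \geq\ m^{\bal d/2}\,.
\]

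For the upper bound (i), the key idea is to exploit the row structure of the grid $\gf{m}\times\gf{m}$: every set $F\in\f$ contains exactly one point from each of the $m$ rows. Let $M_a(x)=\max_{b\in\gf{m}} x_{(a,b)}$ denote the $a$th row-maximum, and let $g(x)$ be the sum of the $d$ largest values among $M_1(x),\ldots,M_m(x)$. I would build a $(\max,+)$ circuit computing $g(x)$ as follows: first compute each $M_a$ using $m-1$ max-gates (total at most $m(m-1)$ gates over all rows), and then feed the $m$ values $M_1,\ldots,M_m$ into the top-$d$-of-$m$ circuit of \cref{prop:dense}, which costs $2dm$ additional gates. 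The total is at most $m^2+2dm\leq 3m^2$.

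It remains to verify that $g$ approximates $f(x)=\max_{F\in\f}\sum_{i\in F}x_i$ within factor $r=m/d$. The inequality $g(x)\leq f(x)$ follows from \cref{prop:pol-comb}: let $a_1,\ldots,a_d$ be the indices of the $d$ largest row-maxima and, in each selected row $a_i$, pick an entry $b_i$ achieving $M_{a_i}(x)=x_{(a_i,b_i)}$. The resulting $d$ points $(a_i,b_i)$ lie in distinct rows, so by \cref{prop:pol-comb} there is at least one set $F\in\f$ containing them, giving $f(x)\geq\sum_i x_{(a_i,b_i)}=g(x)$. For the reverse inequality, since every $F\in\f$ has exactly one point per row, $f(x)\leq M_1(x)+\cdots+M_m(x)$; and if we relabel the row-maxima in decreasing order $M_{(1)}\geq\cdots\geq M_{(m)}$, the standard averaging argument $M_{(d)}\leq (M_{(1)}+\cdots+M_{(d)})/d$ yields $\sum_{i=d+1}^m M_{(i)}\leq (m-d)M_{(d)}\leq \tfrac{m-d}{d}g(x)$, hence $f(x)\leq\tfrac{m}{d}g(x)$.

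The main obstacle is really a conceptual one, not a technical one: one must notice that even though $\f$ is \emph{not} a dense family in the sense of \cref{prop:dense} (it has only $m^d$ sets, far fewer than $\binom{m^2}{m}$), it \emph{is} dense in the restricted universe of row-transversals thanks to the polynomial interpolation property of \cref{prop:pol-comb}. Once this is spotted, the two halves of the hierarchy theorem drop out of previously established machinery without further calculation.
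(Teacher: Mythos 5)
Your proof is correct and follows essentially the same route as the paper: the lower bound is obtained by plugging $|\f|=m^d$ and the interpolation bound $\ddeg{\f}{l}\leq m^{d-l}$ from \cref{prop:pol-comb} into \cref{lem:gen-design}, and the upper bound uses exactly the paper's circuit (row maxima followed by the top $d$-of-$m$ selection circuit of \cref{prop:dense}), with the same two-sided approximation argument. No gaps.
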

That is, the maximization problem on $\f$ can be approximated by a
small $(\max,+)$ circuit within the factor $r=m/d$, but the size of
approximating $(\max,+)$ circuits drastically increases when this
factor is only slightly decreased.

\begin{proof}
  To show the lower bound (ii), we just apply \cref{lem:gen-design},
  which yields the lower bound $\Max{\f}{r}\geq |\f|/\ddeg{\f}{l}$ for
  $l=\bal d/2$.  Since $|\f|=m^d$ and, by \cref{prop:pol-comb},
  $\ddeg{\f}{l}\leq m^{d-l}$, the desired lower bound $\Max{\f}{r}\geq
  m^l$ follows.

  So, it remains to show the upper bound~(i). Given an input weighting
  $x$ of the points of the grid $\gf{m}\times\gf{m}$, we can first use
  $m(m-1)$ $\max$ operations to compute $m$ numbers $y_1,\ldots,y_m$,
  where $y_i$ is the maximum weight of a point in the $i$th row of the
  grid.  We then apply the $(\max,+)$ circuit for the top $d$-of-$m$
  selection problem (see \cref{prop:dense}) to compute the sum $W$ of
  the \emph{largest} $d$ of the numbers $y_1,\ldots,y_m$ using at most
  $2dm$ additional $(\max,+)$ operations.  Hence, $W$ is a sum of
  weights of $d$ \emph{heaviest} points in the grid with no two in the
  same row: each $y_i$ picks only one point in the $i$th row. The main
  combinatorial property of polynomial designs (\cref{prop:pol-comb})
  implies that these $d$ points are contained in a (unique) set of
  $\f$. Hence, the found value $W$ cannot exceed the optimal value
  (the weights are nonnegative). On the other hand, the weight of $d$
  heaviest points of an optimal solution $F\in\f$ cannot exceed
  $W$. Since $|F|=m$, the weight of this solution cannot exceed
  $(m/d)W$, as desired.
\end{proof}

\begin{rem}\label{rem:greedy-design}
  The maximizing greedy algorithm also achieves the same approximation
  factor $m/d$ on the polynomial $(m,d)$-design: it will also first
  take the heaviest $d$ points of the grid $\gf{m}\times\gf{m}$, with
  no two lying in the same row. But this is already the best the
  greedy algorithm can do.

  To show this, take $\err>0$ arbitrarily small, and set
  $\scalar:=1/(1-\err/2)>1$. Take arbitrary two sets $A\neq B\in\f$,
  and a subset $S\subset A$ of $|S|=d$ elements. Since $\f$ is an
  $(m,d)$-design, $S$ cannot be contained in $B$.  So, give weight
  $\scalar>1$ to all elements of $S$, weight $1$ to all elements of
  $B\setminus S$, and zero weight to the rest. Then the maximizing
  (best-in) greedy algorithm picks elements of weight $\scalar$ first,
  gets all $|S|=d$ of them, but then is stuck because no element of
  weight $1$ fits; hence, the greedy algorithm achieves the total
  weight $\scalar|S|=\scalar d$. But the optimum is at least
  $|B|=m$. Hence, the approximation factor is at least $m/\scalar
  d=(1-\err/2)m/d >(1-\err)m/d$.
\end{rem}

\subsection{Greedy can beat approximating $(\max,+)$ circuits}
\label{sec:matching}
As \cref{rem:greedy-design} shows, \cref{thm:hierarchy} does
\emph{not} imply that the maximizing greedy algorithm can beat
approximating $(\max,+)$ circuits: small $(\max,+)$ circuits can also
achieve the greedy approximation factor on designs.

To show that the greedy algorithm can \emph{still} outperform
approximating $(\max,+)$ circuits, we consider another maximization
problem: maximum weight matchings in $k$-partite $k$-uniform
hypergraphs.  We have a set $V=V_1\cup \cdots\cup V_k$ of $|V|=mk$
vertices decomposed into $k$ disjoint blocks $V_1,\ldots,V_k$, each of
size $m$. Edges (called also \emph{hyperedges}) are $k$-tuples $e\in
V_1\times \cdots\times V_k$.  The ground set $E$ consists of all
$|E|=m^k$ edges.  Two edges are \emph{disjoint} if they differ in all
$k$ positions.  A \emph{matching} is a set of disjoint edges, and is a
\emph{perfect matching} if it has the maximum possible number $m$ of
edges.

The family $\f_{m,k}$ of feasible solutions of our problem consists of
all $|\f_{m,k}|=(m!)^{k-1}$ perfect matchings.  So, the maximization
problem on $\f_{m,k}$ is, given an assignment of nonnegative weights
$x_e$ to the edges $e\in E$, to compute the maximum total weight
\[
f(x)=\max\left\{x_{e_1}+\cdots+x_{e_m}\colon e_i\in E\,, \mbox{ and
    $e_i$ and $e_j$ are disjoint for all $i\neq j$ }\right\}
\]
of a perfect matching. Note that in the case $k=2$, $\f_{m,k}$
consists of perfect matchings in $K_{m,m}$, and the problem is to
compute the maximum weight of such a perfect matching.

The greedy algorithm can approximate the maximization problem on
$\f_{m,k}$ within the factor $k$ by just always picking the heaviest
of the remaining edges, untouched by the partial matching picked so
far. On the other hand, we have the following lower bound for
$(\max,+)$ circuits approximating this problem.

\begin{thm}\label{thm:matching}
  Let $m$ be a sufficiently large integer, and $k=k(m)$ be an integer
  such that $6\leq k\leq \log \sqrt{m}$. If $r\leq 2^k/9$, then
  \[
  \Max{\f_{m,k}}{r}=2^{\Omega(\sqrt{m})}\,.
  \]
\end{thm}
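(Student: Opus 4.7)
The plan is to apply the rectangle bound (\cref{thm:rect}) with $\h=\f=\f_{m,k}$ and balance parameter $\bal=2/3$. It suffices to treat the extreme case $r=2^k/9$, as monotonicity in $r$ then gives the bound for all smaller $r$. For this choice, both thresholds in \cref{eq:bal-sets} become $t:=3m/2^k$ (so $a=b=t$), and the size requirement $|F|=m\ge r/\bal=2^k/6$ is satisfied because $k\le\log\sqrt{m}$ forces $2^k\le\sqrt{m}$. It therefore remains to show that, in every rectangle $\R=\A\lor\B$ lying below $\f_{m,k}$, at most a $2^{-\Omega(\sqrt{m})}$ fraction of the perfect matchings are $(r,\bal)$-balanced in $\R$; dividing $|\f_{m,k}|$ by this bound yields the theorem.

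The first substantive step is structural. Because $\R$ lies below $\f_{m,k}$, for every $A\in\A$ and $B\in\B$ the edge-disjoint union $A\cup B$ is a sub-matching of some perfect matching, so every edge of $A$ is \emph{vertex}-disjoint from every edge of $B$. Writing $V_\A:=\bigcup_{A\in\A}V(A)$ and $V_\B:=\bigcup_{B\in\B}V(B)$, this upgrades to $V_\A\cap V_\B=\emptyset$, and per block $\alpha_i+\beta_i\le m$, where $\alpha_i:=|V_\A\cap V_i|$ and $\beta_i:=|V_\B\cap V_i|$. Let $f_\A(F)$ (respectively $f_\B(F)$) denote the number of edges of $F$ lying entirely in $V_\A$ (respectively $V_\B$). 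If $A,B$ witness the $(r,\bal)$-balance of $F$, then $f_\A(F)\ge|F\cap A|\ge t$ and $f_\B(F)\ge|F\cap B|\ge t$, so it suffices to upper-bound the number of perfect matchings satisfying $f_\A(F)\ge t$ and $f_\B(F)\ge t$.

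For a uniformly random $F\in\f_{m,k}$, a direct first-moment computation gives
\[
\mathbb{E}\!\left[\tbinom{f_\A(F)}{t}\right]=\frac{\prod_{i=1}^{k}\binom{\alpha_i}{t}}{\binom{m}{t}^{k-1}}\le\frac{P_\A^{\,t}}{t!},\qquad P_\A:=\frac{\prod_i\alpha_i}{m^{k-1}},
\]
using $\binom{\alpha_i}{t}\le\binom{m}{t}(\alpha_i/m)^t$ and $\binom{m}{t}\le m^t/t!$. Markov yields $\Pr[f_\A(F)\ge t]\le P_\A^{\,t}/t!$, and symmetrically $\Pr[f_\B(F)\ge t]\le P_\B^{\,t}/t!$ for $P_\B:=\prod_i\beta_i/m^{k-1}$. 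The block constraint $\alpha_i+\beta_i\le m$ and AM-GM give $\alpha_i\beta_i\le m^2/4$, whence $P_\A\cdot P_\B\le m^2/4^k$, so $\min(P_\A,P_\B)\le\sqrt{P_\A P_\B}\le m/2^k=t/3$. Combining,
\[
\Pr[F\text{ is }(r,\bal)\text{-balanced in }\R]\le\min\!\Bigl(\tfrac{P_\A^{\,t}}{t!},\tfrac{P_\B^{\,t}}{t!}\Bigr)\le\frac{(t/3)^{t}}{t!}\le(e/3)^t,
\]
which is $2^{-\Omega(t)}=2^{-\Omega(\sqrt{m})}$ uniformly in $\R$, since $t=3m/2^k\ge 3\sqrt{m}$. \Cref{thm:rect} then delivers $\Max{\f_{m,k}}{r}\ge 2^{\Omega(\sqrt{m})}$. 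The one step requiring care is the structural reduction, where vertex-disjointness of $V_\A$ and $V_\B$ depends crucially on the feasible solutions being matchings; once that is in hand, the rest is a routine first-moment argument driven by the product bound $P_\A P_\B\le m^2/4^k$.
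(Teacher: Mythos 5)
Your argument is correct and reaches the same conclusion as the paper's, but via a genuinely different counting step inside the rectangle bound. Both proofs take $\gamma=2/3$, and both lean on the same crucial structural observation: because the rectangle lies below $\f_{m,k}$ and every set of $\R$ must be a sub-matching of a perfect matching, the vertex sets $V_\A$ and $V_\B$ are disjoint, forcing $\alpha_i+\beta_i\le m$ in each block. Where you diverge is in bounding the number of balanced $F$. The paper fixes a rectangle and directly enumerates all witness pairs $(A,B)$ of $d$-matchings in $S$ and $T$ together with their extensions to a perfect matching, arriving at
$|\f_\R|\le\binom{m}{2d}\bigl[m!\binom{2d}{d}^{-1}\bigr]^{k-1}$
and then massages the ratio $|\f|/|\f_\R|$. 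You instead run a first-moment computation for a uniformly random $F$, apply Markov to $\binom{f_\A(F)}{t}$ and $\binom{f_\B(F)}{t}$ separately, and then exploit the per-block AM--GM estimate $\alpha_i\beta_i\le m^2/4$ to get $P_\A P_\B\le m^2/4^k$, so $\min(P_\A,P_\B)\le t/3$ and the balanced density is at most $(e/3)^t=2^{-\Omega(\sqrt m)}$. Your route decouples the $\A$- and $\B$-sides and makes the essential trade-off --- heavy on one side forces light on the other --- explicit through the single product $P_\A P_\B$; the paper's route avoids the Markov/minimum step but at the cost of a somewhat messier binomial estimate. Two small points to tidy: $t=3m/2^k$ need not be an integer, so you should formally work with $\lceil t\rceil$ (every inequality you use is monotone in the right direction, so nothing changes asymptotically); and your parenthetical ``$(\text{so }a=b=t)$'' is unclear notation and can simply be deleted. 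With those cosmetic fixes the argument is sound.
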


\begin{proof}
  We are going to apply the rectangle bound (\cref{thm:rect}) with
  $\bal:=2/3$; note that then both $\tfrac{\bal}{2}$ and $1-\bal$ in
  \cref{eq:bal-sets} are equal to $1/3$. So, take an arbitrary
  rectangle $\R=\A\lor \B$ lying below $\f=\f_{m,k}$. Hence, sets in
  $\A$ and in $\B$ are subsets of (hyper-)edges $e\in V_1\times
  \cdots\times V_k$.  Since $\R$ lies below our family $\f$, and $\f$
  consist of (perfect) matchings, all sets $A\cup B$ with $A\in\A$ and
  $B\in\B$ must also be matchings.  Take the integer $d:=\lceil
  m/3r\rceil$, and consider the family
  \[
  \f_{\R}=\{F\in\f\colon \mbox{$|F\cap A|\geq d$ and $|F\cap B|\geq d$
    for some $A\in\A$ and $B\in\B$ }\}\,.
  \]
  Our goal is to show a possibly small upper bound $|\f_{\R}|\leq h$
  on the number of sets in any such family. Then \cref{thm:rect} (for
  the balance parameter $\bal=2/3$) gives the lower bound
  $\Max{\f}{r}\geq |\f|/h$.

  Since the rectangle $\R=\A\lor \B$ is cross-disjoint, we know that
  the matchings $A\in\A$ and $B\in\B$ must be edge-disjoint, that is,
  $A\cap B=\emptyset$ must hold.  However, since the sets $A\cup B$
  are also matchings ($\R$ lies below $\f$), we actually know that
  matchings $A$ and $B$ are even \emph{vertex-disjoint}: if
  $S\subseteq V$ is the set of vertices belonging to at least one edge
  of a matching in $\A$, and $T\subseteq V$ is the set of vertices
  belonging to at least one edge of a matching in $\B$, then $S\cap
  T=\emptyset$ (this is a crucial property).  Note that in the proof
  of the lower bound for designs (\cref{lem:gen-design}), the
  cross-disjointness property of rectangles was \emph{not} used.

  So, call a matching $A\subseteq V_1\times\cdots\times V_k$ an
  $S$-\emph{matching} if $A\subseteq (V_1\cap S)\times\cdots\times
  (V_k\cap S)$ holds, that is, if edges of $A$ only match vertices of
  $S$; $T$-\emph{matchings} are defined similarly. By the definition
  of $\f_{\R}$, every perfect matching $F\in\f_{\R}$ has at least $d$
  edges lying in $(V_1\cap S)\times\cdots\times (V_k\cap S)$, and at
  least $d$ edges lying in $(V_1\cap T)\times\cdots\times (V_k\cap
  T)$. In particular, every perfect matching $F\in\f_{\R}$ must
  contain at least one matching $A\cup B$, where $A$ is an
  $S$-matching with $|A|=d$ edges and $B$ is a $T$-matching with
  $|B|=d$ edges. It therefore suffices to upper-bound the number of
  perfect matchings~$F$ with this property.

  We can pick any such pair $(A,B)$ as follows.  Let $S_i=S\cap V_i$
  and $T_i=T\cap V_i$ for $i=1,\ldots,k$. We can assume that each of
  these $2k$ sets has at least $d$ vertices, for otherwise none of the
  $S$-matchings or of the $T$-matchings could have $\geq d$ edges,
  implying that $\f_{\R}=\emptyset$.

  \begin{enumerate}
  \item[$\circ$] Pick in each $S_i$ a subset $S_i'\subseteq S_i$ of
    $|S_i'|=d$ vertices, and in each $T_i$ a subset $T_i'\subseteq
    T_i$ of $|T_i'|=d$ vertices.  There are at most
    \[
    \prod_{i=1}^k\binom{m_i}{d}\binom{m-m_i}{d} \leq \binom{m}{2d}^k
    \]
    possibilities to do this, where $m_i=|S_i|$.

  \item[$\circ$] Pick a perfect matching $A$ in $S_1'\times \cdots
    \times S_k'$ and a perfect matching $B$ in $T_1'\times \cdots
    \times T_k'$. There are only
    $\left[(d!)^{k-1}\right]^2=(d!)^{2(k-1)}$ possibilities to do
    this.
  \end{enumerate}

  After a pair $(A,B)$ of matchings is picked, there are at most
  $[(m-2d)!]^{k-1}$ possibilities to extend $A\cup B$ to a perfect
  matching.  Thus,
  \[
  |\f_{\R}| \leq \binom{m}{2d}^{k} (d!)^{2(k-1)} [(m-2d)!]^{k-1}
  = \binom{m}{2d}\left[m!\cdot \binom{2d}{d}^{-1} \right]^{k-1}\,,
  \]
  where the equality follows because $\binom{m}{2d}=m!/(2d)!(m-2d)!$
  and $(2d)!/(d!)^2=\binom{2d}{d}$.  Since there are $|\f|=(m!)^{k-1}$
  perfect matchings, the rectangle bound (\cref{thm:rect}) yields the
  following lower bound on $t=\Max{\f}{r}$:
  \begin{align*}
    t& \geq \frac{|\f|}{|\f_{\R}|} \geq
    \frac{\binom{2d}{d}^{k-1}}{\binom{m}{2d}} \geq
    \left(\frac{2^{2d}}{d}\right)^{k-1}\cdot \left(\frac{2d}{\euler
        m}\right)^{2d}=\frac{1}{d^{k-1}}\left(\frac{2^kd}{\euler
        m}\right)^{2d} \geq \frac{1}{d^{k-1}}\left(\frac{2^k}{3\euler
        r}\right)^{2d}\,,
  \end{align*}
  where the second inequality follows from the inequalities
  $\binom{m}{2d}\leq (\euler m/2d)^{2d}$ and $\binom{2d}{d}\geq
  2^{2d}/\sqrt{4d}\geq 2^{2d}/d$, and the last inequality follows
  because (by our choice) $d=\lceil m/3r\rceil\geq m/3r$.  Our
  approximation factor is $r= 2^k/9$. Since clearly $d\leq m$, we have
  a lower bound
  \[
  t\geq \left(\frac{3}{\euler}\right)^{6m/2^k}\cdot d^{-k} \geq 2^{0.8
    m/2^k - k\log m }\,.
  \]
  From our assumption $k\leq \log \sqrt{m}$, we have $m/2^k\geq
  \sqrt{m}\gg k\log m$, and the desired lower bound $t\geq
  2^{\Omega(m/2^k)}\geq 2^{\Omega(\sqrt{m})}$ follows.
\end{proof}

\section{What do approximating tropical circuits produce?}
\label{sec:structure-tight}
If we know that a tropical circuit \emph{approximates} a given
optimization (minimization or maximization) problem within a given
factor, what can then be said about the set of vectors \emph{produced}
by that circuit?  Using \emph{elementary arguments}, we partially
answered this question in \cref{lem:minA,lem:maxA}: we gave
properties, which the sets of produced vectors must \emph{necessarily}
have (these properties were already sufficient for our purposes). We
will now use convexity arguments to give properties of produced sets
that are also \emph{sufficient} for circuits to approximate given
problems.

\subsection{A version of Farkas' lemma}
Recall that a vector $c\in\RR^n$ is a \emph{convex combination} (or a
\emph{weighted average}) of vectors\footnote{We use the arrow notation
  $\vvec{b}{i}$ for vectors only when they are indexed.}
$\vvec{b}{1},\ldots,\vvec{b}{m}$ in $\RR^n$ if there are real scalars
$\lambda_1,\ldots,\lambda_m\geq 0$ such that
\[
\lambda_1+\cdots+\lambda_m=1\ \ \mbox{ and }\ \ c=
\lambda_1\cdot\vvec{b}{1}+\cdots+\lambda_m\cdot\vvec{b}{m}\,.
\]
It is easy to see the following \emph{averaging property}: for every
vector $x\in\RR^n$ and every convex combination $c$ of vectors in $B$,
we have $\min_{b\in B}\skal{b,x}\leq \skal{c,x}\leq \max_{b\in B}\
\skal{b,x}$.

We will need the following formulation of Farkas' lemma due to
Fan~\cite[Theorem~4]{Fan56}, see also
\cite[Corollary~7.1h]{schrijver}.

\begin{lem}[Farkas' lemma~\cite{Fan56}]\label{lem:fan}
  Let $u,\vvec{u}{1},\ldots,\vvec{u}{m}\in\RR^n$, and
  $\alpha,\alpha_1,\ldots,\alpha_m\in\RR$. The following two assertions are
  equivalent.
  \begin{enumerate}
  \item $\forall y\in \RR^n$ inequalities $\skal{\vvec{u}{1},y}\geq
    \alpha_1, \ldots,\skal{\vvec{u}{m},y}\geq \alpha_m$ imply $\skal{u,y}\geq
    \alpha$.

  \item $\exists\ \lambda_1,\ldots,\lambda_m\in\RR_+$ such that
    $u=\sum_i\lambda_i \vvec{u}{i}$ and $\alpha\leq \sum_i\lambda_i \alpha_i$.
  \end{enumerate}
\end{lem}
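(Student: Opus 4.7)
The plan is to prove the two implications separately. The direction $(2)\Rightarrow(1)$ is immediate: if nonnegative $\lambda_i$'s satisfy $u=\sum_i\lambda_i\vvec{u}{i}$ and $\sum_i\lambda_i\alpha_i\geq\alpha$, then for any $y$ with $\skal{\vvec{u}{i},y}\geq\alpha_i$ for all $i$, we may sum:
\[
\skal{u,y}=\sum_i\lambda_i\skal{\vvec{u}{i},y}\geq\sum_i\lambda_i\alpha_i\geq\alpha\,.
\]

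For the nontrivial direction $(1)\Rightarrow(2)$, I would argue by a separating hyperplane argument in $\RR^{n+1}$. Introduce the convex cone
\[
K=\Bigl\{\sum_{i=1}^m\lambda_i(\vvec{u}{i},\alpha_i)+t(\vnul,-1)\colon \lambda_1,\ldots,\lambda_m,t\geq 0\Bigr\}\subseteq\RR^{n+1}.
\]
As the conical hull of finitely many vectors, $K$ is closed (by Farkas--Minkowski--Weyl) and convex. The conclusion~(2) is literally equivalent to the inclusion $(u,\alpha)\in K$: given $\lambda_i$'s as in~(2), the slack $t:=\sum_i\lambda_i\alpha_i-\alpha\geq 0$ realises $(u,\alpha)$ inside $K$, and the converse direction is immediate by dropping the slack term.

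Assume for contradiction that $(u,\alpha)\notin K$. The separating hyperplane theorem for a point outside a closed convex cone supplies $(y,\beta)\in\RR^{n+1}$ satisfying $\skal{\vvec{u}{i},y}+\beta\alpha_i\geq 0$ for every $i$, $-\beta\geq 0$ (by separation against the generator $(\vnul,-1)$), and $\skal{u,y}+\beta\alpha<0$. Hence $\beta\leq 0$. If $\beta<0$, the rescaling $y':=y/(-\beta)$ satisfies $\skal{\vvec{u}{i},y'}\geq\alpha_i$ for every $i$ while $\skal{u,y'}<\alpha$, directly violating~(1). The delicate case, which I expect to be the main obstacle, is $\beta=0$: here one obtains only a ``recession direction'' $y$ with $\skal{\vvec{u}{i},y}\geq 0$ for all $i$ and $\skal{u,y}<0$. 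Taking any feasible $y_0$ (one exists whenever the hypothesis of~(1) is not vacuous), the ray $y_0+ty$ stays feasible for every $t\geq 0$ yet has $\skal{u,y_0+ty}\to-\infty$, again contradicting~(1). Thus the overall strategy is the standard closed-cone separation argument, with the recession-cone handling of the $\beta=0$ case as its only nonroutine ingredient.
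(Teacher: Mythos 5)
The paper offers no proof of this lemma at all---it is quoted from Fan \cite[Theorem~4]{Fan56} and Schrijver \cite[Corollary~7.1h]{schrijver}---so there is no internal argument to compare yours against; what you give is the standard closed-cone separation proof, and it is correct. The reduction of (2) to the membership $(u,\alpha)\in K$ is right, the cone $K$ is finitely generated and hence closed, and both cases $\beta<0$ and $\beta=0$ of the separation are handled properly. The one point worth stressing is that the feasibility you invoke in the $\beta=0$ case (``one exists whenever the hypothesis of (1) is not vacuous'') is not a removable technicality but a genuinely missing hypothesis in the statement as printed: if the system $\skal{\vvec{u}{1},y}\geq\alpha_1,\ldots,\skal{\vvec{u}{m},y}\geq\alpha_m$ has no solution, then assertion (1) holds vacuously while assertion (2) can fail---take $m=1$, $\vvec{u}{1}=\vnul$, $\alpha_1=1$ and any $u\neq\vnul$, for which no nonnegative $\lambda_1$ gives $u=\lambda_1\vvec{u}{1}$. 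Fan's theorem and Schrijver's corollary both carry this feasibility assumption explicitly, so your proof establishes the lemma in exactly the generality in which it is true and in which the paper uses it: in the only application (\cref{lem:JS}) the system is always feasible, since $x=\vnul$, $z=0$ satisfies every inequality $\skal{\vvec{a}{i},x}\geq z$ and $\skal{\vvec{e}{j},x}\geq 0$.
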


This relates optimization with convex combinations.

\begin{lem}\label{lem:JS}
  For any vectors $a,\vvec{a}{1},\ldots,\vvec{a}{m}\in\RR^n$ the
  following two assertions are equivalent.
  \begin{enumerate}
  \item $\forall x\in \RR_+^n\colon\skal{a,x}\geq
    \min_i\skal{\vvec{a}{i},x}$.

  \item $\exists\ \lambda_1,\ldots,\lambda_m\in\RR_+\colon\
    \sum_i\lambda_i=1$ and $a\geq \sum_i\lambda_i\vvec{a}{i}$.
  \end{enumerate}
\end{lem}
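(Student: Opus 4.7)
The plan is to prove (2) $\Rightarrow$ (1) by direct computation, and then obtain (1) $\Rightarrow$ (2) as an application of Fan's version of Farkas' lemma (Lemma \ref{lem:fan}) by lifting everything to one higher dimension.

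For (2) $\Rightarrow$ (1), suppose $\lambda_i \geq 0$, $\sum_i \lambda_i = 1$, and $a \geq \sum_i \lambda_i \vvec{a}{i}$. Then for every $x \in \RR_+^n$, nonnegativity of $x$ gives $\skal{a,x} \geq \sum_i \lambda_i \skal{\vvec{a}{i},x}$, and the right side, being a convex combination of the numbers $\skal{\vvec{a}{i},x}$, is at least $\min_i \skal{\vvec{a}{i},x}$. This is exactly assertion~(1).

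For the harder direction (1) $\Rightarrow$ (2), I will set up a system in $\RR^{n+1}$ designed so that Fan's Lemma (with all $\alpha_i = \alpha = 0$) pulls out precisely the coefficients I need. Define the vectors $\vvec{u}{i} = (\vvec{a}{i},1) \in \RR^{n+1}$ for $i=1,\ldots,m$ (these encode the convexity constraint $\sum \lambda_i = 1$ via the extra coordinate), the vectors $\vvec{u}{m+j} = (\vec{e}_j,0) \in \RR^{n+1}$ for $j=1,\ldots,n$ (these play the role of nonnegative slacks converting ``$=$'' into ``$\geq$''), and $u = (a,1)$. Then Fan's condition~(2) ``$u = \sum_i \lambda_i \vvec{u}{i} + \sum_j \mu_j \vvec{u}{m+j}$ with $\lambda_i,\mu_j \geq 0$'' reads componentwise as $a = \sum_i \lambda_i \vvec{a}{i} + \sum_j \mu_j \vec{e}_j$ and $1 = \sum_i \lambda_i$; the first line says $a - \sum_i \lambda_i \vvec{a}{i} \geq 0$, which together with the second is exactly our assertion~(2).

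It remains to verify that assertion~(1) of the claim is the same as assertion~(1) of Fan's lemma for this setup. Writing $y = (x,s) \in \RR^n \times \RR$, the hypotheses $\skal{\vvec{u}{i},y} \geq 0$ read $\skal{\vvec{a}{i},x} + s \geq 0$ for each $i \leq m$ and $x_j \geq 0$ for each $j$; the conclusion $\skal{u,y} \geq 0$ reads $\skal{a,x} + s \geq 0$. Thus Fan's~(1) says: for every $x \geq 0$ and every $s \in \RR$, if $\min_i \skal{\vvec{a}{i},x} \geq -s$, then $\skal{a,x} \geq -s$. Choosing $s = -\min_i \skal{\vvec{a}{i},x}$ shows this is equivalent to $\skal{a,x} \geq \min_i \skal{\vvec{a}{i},x}$ for all $x \in \RR_+^n$, i.e.\ to our~(1). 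Invoking Lemma~\ref{lem:fan} closes the loop. The only subtlety worth double-checking is the bookkeeping around the encoding of $\sum \lambda_i = 1$ and of componentwise ``$\geq$'' by slacks; everything else is mechanical.
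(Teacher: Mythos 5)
Your proposal is correct and follows essentially the same route as the paper: the easy direction by the averaging property of convex combinations, and the hard direction by lifting to $\RR^{n+1}$, encoding nonnegativity of $x$ with the padded standard basis vectors and the constraint $\sum_i\lambda_i=1$ with the extra coordinate, then invoking Fan's version of Farkas' lemma. The only difference is an immaterial sign convention in the auxiliary coordinate (the paper uses $(a,-1)$ and the variable $z$ where you use $(a,1)$ and $s=-z$).
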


\begin{proof}
  The implication (2) $\Rightarrow$ (1) follows directly from the
  aforementioned averaging property of convex combinations. To show
  the converse implication (1) $\Rightarrow$ (2), observe that the
  assertion (1) is equivalent to the assertion that the set of
  inequalities $\skal{\vvec{a}{i},x}\geq z$ and $\skal{\vvec{e}{j},x}
  \geq 0$ for $i=1,\ldots,m$ and $j=1,\ldots,n$ implies the inequality
  $\skal{a,x}\geq z$. We use the inequalities $\skal{\vvec{e}{j},x}\geq 0 $
  to ensure that we only consider vectors $x$ in $\RR_+^n$ (with
  no negative entries).

  By taking $y=(x,z)$, $u=(a,-1)$, $\vvec{u}{i}=(\vvec{a}{i},-1)$ for
  $i=1,\ldots,m$, and $\vvec{u}{m+j}=(\vvec{e}{j},0)$ for
  $j=1,\ldots,n$, the above assertion turns into the assertion that
  for every vector $y$ in $\RR^{n+1}$, the system of inequalities
  $\skal{\vvec{u}{i},y}\geq 0$ for $i=1,\ldots,m+n$, implies the
  inequality $\skal{u,y}\geq 0$. Then, by \cref{lem:fan}, there exist
  $\lambda_1,\ldots,\lambda_{m+n}\in\RR_+$ such that
  \[
  (a,-1)=\sum_{i=1}^m\lambda_i(\vvec{a}{i},-1)
  +\sum_{j=1}^n\lambda_{m+j}(\vvec{e}{j},0)\,.
  \]
  This yields $\lambda_1+\cdots+\lambda_m=1$ and $a\geq
  \sum_i\lambda_i\vvec{a}{i}$, as desired.
\end{proof}

The following direct consequence of \cref{lem:JS} compares the values
of optimization problems. For a set $U$ of real vectors, let
$\conv{U}$ denote the set of all convex combinations of vectors in
$U$, that is, the convex hull of~$U$. Say that a set $U\subseteq
\RR^n$ \emph{lies above} a set $V\subseteq\RR^n$ if $\forall u\in U\
\exists v\in V\colon\ u\geq v$, and that $U$ \emph{lies below} $V$ if
$\forall u\in U\ \exists v\in V\colon\ u\leq v$.

\begin{lem}\label{lem:farkas1}
  Let $U,V\subset\RR^{n}$ be finite sets of vectors. Then
  \begin{enumerate}
  \item $\forall x\in\RR_+^n\colon \min_{u\in U}\ \skal{u,x}\geq
    \min_{v\in V}\skal{v,x}$ if and only if $U$ lies above $\conv{V}$;
  \item $\forall x\in\RR_+^n\colon \max_{u\in U}\ \skal{u,x} \leq
    \max_{v\in V}\skal{v,x}$ if and only if $U$ lies below $\conv{V}$.
  \end{enumerate}
\end{lem}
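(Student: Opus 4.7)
The plan is to reduce each direction of the equivalence to a pointwise statement over the elements of $U$, and then invoke Lemma \ref{lem:JS} (with a sign flip for the maximization case). The key observation is that a uniform inequality between two minima (or two maxima) over $U$ and $V$ can be decoupled into separate inequalities, one for each $u\in U$.

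For part (1), I would first establish the reduction: the inequality $\min_{u\in U}\skal{u,x}\geq \min_{v\in V}\skal{v,x}$ holds for all $x\in\RR_+^n$ if and only if, for every fixed $u\in U$, the inequality $\skal{u,x}\geq \min_{v\in V}\skal{v,x}$ holds for all $x\in\RR_+^n$. The ``if'' direction is trivial (take minimum over $u$ on the left), and the ``only if'' direction follows since $\min_{u'\in U}\skal{u',x}\leq\skal{u,x}$. Now Lemma \ref{lem:JS}, applied with $a=u$ and $\{\vvec{a}{i}\}=V$, says exactly that $\skal{u,x}\geq \min_v\skal{v,x}$ holds for all $x\in\RR_+^n$ iff there exist nonnegative scalars $\lambda_v$ summing to $1$ with $u\geq \sum_{v\in V}\lambda_v v$. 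Doing this for every $u\in U$ gives: for every $u\in U$ there is some $w\in\conv{V}$ with $u\geq w$, which is precisely ``$U$ lies above $\conv{V}$''.

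For part (2), the same decoupling gives that the uniform maximum inequality is equivalent to: for every $u\in U$ and every $x\in\RR_+^n$, $\skal{u,x}\leq \max_{v\in V}\skal{v,x}$. Negating this yields $\skal{-u,x}\geq \min_{v\in V}\skal{-v,x}$ for all $x\in\RR_+^n$. Applying Lemma \ref{lem:JS} with $a=-u$ and $\{\vvec{a}{i}\}=-V:=\{-v\colon v\in V\}$ produces nonnegative $\lambda_v$ summing to $1$ with $-u\geq\sum_{v\in V}\lambda_v(-v)$, i.e., $u\leq \sum_{v\in V}\lambda_v v\in\conv{V}$. Ranging over $u\in U$ gives that $U$ lies below $\conv{V}$. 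Conversely, if $u\leq w$ for some $w=\sum\lambda_v v\in\conv{V}$, the averaging property of convex combinations together with nonnegativity of $x$ gives $\skal{u,x}\leq \skal{w,x}=\sum\lambda_v\skal{v,x}\leq \max_v\skal{v,x}$, yielding the desired direction.

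The only genuinely nontrivial ingredient is Lemma \ref{lem:JS}, which itself rests on Farkas' lemma; once that is in hand, the argument is purely mechanical. The main pitfall to watch out for is the role of $\RR_+^n$ versus $\RR^n$: the restriction to nonnegative $x$ is essential (it is what allows the extra slack variables $\lambda_{m+j}$ in the proof of Lemma \ref{lem:JS}), and one must not drop it when passing between ``min over $U$'' and ``$\forall u\in U$'' or when negating to handle the maximization case.
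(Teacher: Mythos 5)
Your proposal is correct and follows essentially the same route as the paper, which proves claim (1) directly from Lemma~\ref{lem:JS} and claim (2) via the identity $\max(x,y)=-\min(-x,-y)$; you have simply spelled out the (routine) decoupling of the uniform min/max inequality into pointwise statements over $u\in U$. No gaps.
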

Claim (1) follows directly from~\cref{lem:JS}. The second claim (2)
also follows from~\cref{lem:JS} by using the equality
$\max(x,y)=-\min(-x,-y)$.

\subsection{Consequences for tropical circuits}
\label{sec:exact}
Recall that the maximization (resp., minimization) problem on a given
set $A\subset\NN^n$ of feasible solutions is, for every input
weighting $x\in\RR_+^n$, to compute the maximum (resp., minimum)
weight $\skal{a,x}=a_1x_1+\cdots+a_nx_n$ of a feasible solution $a\in
A$.

The consequence of Farkas' lemma (\cref{lem:farkas1}) directly yields
the following complete characterization of the properties of sets of
vectors produced by approximating $(\max,+)$ circuits.

\begin{lem}[Maximization]\label{lem:max}
  Let $A\subset\NN^n$ be some finite set of vectors, $\Phi$ be a
  $(\max,+)$ circuit, and $B\subset\NN^n$ the set of vectors produced
  by $\Phi$. Then the following two assertions are equivalent.

  \begin{enumerate}
  \item $\Phi$ approximates the maximization problem on $A$ within a
    factor~$r$.

  \item $B$ lies below $\conv{A}$ and $\tfrac{1}{r}\cdot A$ lies below
    $\conv{B}$.
  \end{enumerate}
\end{lem}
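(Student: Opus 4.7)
The plan is to rewrite the $r$-approximation condition as two inequalities between maxima of linear functions, and then apply Lemma~\ref{lem:farkas1}(2) separately to each of these inequalities.

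First, I would invoke Lemma~\ref{lem:constant-free} to reduce to the case where $\Phi$ is a constant-free $(\max,+)$ circuit. By Proposition~\ref{prop:function} and \eqref{eq:trop-pol}, the circuit then computes $\Phi(x)=\max_{b\in B}\skal{b,x}$, and by definition the maximization problem on $A$ is $f(x)=\max_{a\in A}\skal{a,x}$. Note that the set $B$ produced by $\Phi$ does not change when we pass to the constant-free version, since produced sets depend only on the underlying Minkowski circuit (\cref{sec:minkowski}).

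Second, I would split the $r$-approximation guarantee $\tfrac{1}{r}f(x)\leq \Phi(x)\leq f(x)$ (which must hold for all $x\in\RR_+^n$) into its two halves. The upper bound $\Phi(x)\leq f(x)$ becomes
\[
\max_{b\in B}\skal{b,x}\leq \max_{a\in A}\skal{a,x}\qquad\forall x\in\RR_+^n\,.
\]
The lower bound $\tfrac{1}{r}f(x)\leq \Phi(x)$, after pulling $1/r$ inside the scalar product, becomes
\[
\max_{a'\in \tfrac{1}{r}\cdot A}\skal{a',x}\leq \max_{b\in B}\skal{b,x}\qquad\forall x\in\RR_+^n\,.
\]

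Third, I would apply Lemma~\ref{lem:farkas1}(2) to each inequality. The first is equivalent to ``$B$ lies below $\conv{A}$'', and the second is equivalent to ``$\tfrac{1}{r}\cdot A$ lies below $\conv{B}$''. Combining gives the equivalence of assertions (1) and (2).

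There is no serious obstacle: the statement is essentially a direct packaging of Lemma~\ref{lem:farkas1}(2). The only points that require care are (a) justifying that passing to the constant-free version of $\Phi$ does not change the produced set $B$ (needed so that the identity $\Phi(x)=\max_{b\in B}\skal{b,x}$ is available without extra additive constants), and (b) checking that the rescaling by $1/r$ on the lower-bound side really does turn the approximation inequality into a max-vs-max comparison of the form required by \cref{lem:farkas1}(2), which it does since $\skal{\lambda a,x}=\lambda\skal{a,x}$ for every scalar $\lambda\geq 0$.
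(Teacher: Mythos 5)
Your proposal is correct and follows essentially the same route as the paper: reduce to a constant-free circuit via \cref{lem:constant-free}, identify $\Phi(x)=\max_{b\in B}\skal{b,x}$ via \cref{prop:function}, split the approximation guarantee into the two inequalities $\Phi(x)\leq f(x)$ and $\tfrac{1}{r}f(x)\leq\Phi(x)$, and apply \cref{lem:farkas1}(2) with $(U,V)=(B,A)$ and $(U,V)=(\tfrac{1}{r}\cdot A,B)$ respectively. Your two points of care, including the scaling identity $\skal{\lambda a,x}=\lambda\skal{a,x}$, are exactly the right ones and match the paper's treatment.
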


\begin{proof}
  By \cref{lem:constant-free}, we can assume the circuit $\Phi$ is
  constant-free. Hence, by \cref{prop:function}, the circuit solves
  the maximization problem of the form $\Phi(x)=\max_{b\in B}\
  \skal{b,x}$.  The maximization problem on $A$ is of the form $f(x)
  =\max_{a\in A}\skal{a,x}$. The circuit approximates the maximization
  problem on $A$ within factor~$r$ if and only if $\tfrac{1}{r}\cdot
  f(x)\leq \Phi(x)\leq f(x)$ holds for all weightings $x\in\RR_+^n$.

  When applied with $U=B$ and $V=A$, \cref{lem:farkas1}(2) implies
  that the inequality $\Phi(x)\leq f(x)$ holds if and only if $B$ lies
  below $\conv{A}$.  When applied with $U=\tfrac{1}{r}\cdot A$ and
  $V=B$, this lemma implies that the inequality $\tfrac{1}{r}\cdot
  f(x) \leq \Phi(x)$ holds if and only if $\tfrac{1}{r}\cdot A$ lies
  below $\conv{B}$.
\end{proof}

We say that a set $U\subseteq \RR^n$ \emph{lies tightly above} the
convex hull $\conv{V}$ of a set $V\subseteq\RR^n$ if for every vector
$u\in U$, the inequality $u\geq c$ holds for some convex combination
$c$ of vectors $v\in V$ with the \emph{same} support as that of
$u$. That is, we now additionally have that \emph{none} of the vectors
$v\in V$ in the convex combination $c$ has a zero in a position $i$
where $u_i\neq 0$.

\begin{lem}[Minimization]\label{lem:min}
  Let $A\subset\NN^n$ be some finite set of vectors, $\Phi$ be a
  $(\min,+)$ circuit, and $B\subset\NN^n$ the set of vectors produced
  by $\Phi$.

  Then the following two assertions are equivalent.

  \begin{enumerate}
  \item $\Phi$ approximates the minimization problem on $A$ within a
    factor~$r$.

  \item $B$ lies above $\conv{A}$ and $r\cdot A$ lies above
    $\conv{B}$.
  \end{enumerate}
  If $A\subseteq\{0,1\}^n$ and $A$ is an antichain, then these
  assertions are equivalent to:
  \begin{enumerate}
  \item[\mbox{\rm 3.}] $B$ lies above $A$ and $r\cdot A$ lies tightly
    above $\conv{B}$.
  \end{enumerate}
\end{lem}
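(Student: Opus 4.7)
The approach is to mirror the proof of the maximization case (Lemma~\ref{lem:max}) via the Farkas consequence, and then upgrade to the tight version by a support analysis that uses the antichain hypothesis.

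For the equivalence $(1)\Leftrightarrow(2)$, I would proceed exactly as in Lemma~\ref{lem:max}. By Lemma~\ref{lem:constant-free} we may assume $\Phi$ is constant-free, so by Proposition~\ref{prop:function} it computes $\Phi(x)=\min_{b\in B}\skal{b,x}$. Writing $f(x)=\min_{a\in A}\skal{a,x}$, the $r$-approximation condition $f(x)\leq\Phi(x)\leq r\cdot f(x)$ for all $x\in\RR_+^n$ splits into two inequalities. Applying Lemma~\ref{lem:farkas1}(1) with $U=B$ and $V=A$ shows that $\Phi(x)\geq f(x)$ for all $x\in\RR_+^n$ is equivalent to $B$ lying above $\conv{A}$. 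Since $r\cdot f(x)=\min_{a'\in r\cdot A}\skal{a',x}$, the same lemma with $U=r\cdot A$ and $V=B$ shows that $r\cdot f(x)\geq\Phi{}(x)$ for all $x\in\RR_+^n$ is equivalent to $r\cdot A$ lying above $\conv{B}$.

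For the equivalence $(2)\Leftrightarrow(3)$ under the extra hypotheses, I would establish two support-theoretic facts. \emph{Fact~1}: when $A\subseteq\{0,1\}^n$ and $B\subseteq\NN^n$, the conditions ``$B$ lies above $\conv{A}$'' and ``$B$ lies above $A$'' are equivalent. The direction from $A$ to $\conv{A}$ is trivial. For the converse, suppose $b\in B$ and $b\geq c$ for a convex combination $c=\sum_i\lambda_ia_i$ with all $\lambda_i>0$ and $a_i\in A$. For every $j\in\supp{a_i}$ we have $c_j\geq\lambda_i>0$, so $b_j\geq 1$ (as $b\in\NN^n$); thus $\supp{a_i}\subseteq\supp{b}$, which, since $a_i\in\{0,1\}^n$, means $b\geq a_i\in A$. \emph{Fact~2}: under the first part of~(3) and the antichain hypothesis, ``$r\cdot A$ lies above $\conv{B}$'' automatically upgrades to ``$r\cdot A$ lies tightly above $\conv{B}$''. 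Fix $a\in A$ and a convex combination $c=\sum_i\lambda_ib_i$ with $\lambda_i>0$ and $r\cdot a\geq c$. For any $j\notin\supp{a}$ we have $0=r\cdot a_j\geq\sum_i\lambda_i(b_i)_j$, forcing $(b_i)_j=0$ and hence $\supp{b_i}\subseteq\supp{a}$ for every such~$i$. By the first part of~(3), each such $b_i$ dominates some $a'\in A$; the $0$-$1$ hypothesis gives $\supp{a'}\subseteq\supp{b_i}\subseteq\supp{a}$, i.e.\ $a'\leq a$. The antichain property then forces $a'=a$, so $\supp{a}\subseteq\supp{b_i}$, yielding the required equality $\supp{b_i}=\supp{a}$.

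Combining these, $(3)\Rightarrow(2)$ is immediate, since ``$B$ above $A$'' trivially implies ``$B$ above $\conv{A}$'' and ``lies tightly above'' trivially implies ``lies above'', while $(2)\Rightarrow(3)$ uses Fact~1 for the first clause and Fact~2 for the second. The main obstacle is the upgrade to the tight condition: the antichain hypothesis is essential, because otherwise some $b_i$ in the convex combination could be dominated by an $a'\in A$ strictly below $a$, blocking the inclusion $\supp{a}\subseteq\supp{b_i}$; the $0$-$1$ hypothesis is likewise essential so that a coordinate-wise inequality with a vector of $A$ translates cleanly into a support containment.
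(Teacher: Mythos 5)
Your proposal is correct and follows essentially the same route as the paper: the equivalence of (1) and (2) via \cref{lem:farkas1}(1) applied twice, and the upgrade to (3) via the same support analysis (integrality of $B$ against the $0$-$1$ vectors of $A$ for the first clause, and the antichain property forcing $\supp{\vvec{b}{i}}=\supp{a}$ for the tightness clause). No gaps.
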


\begin{proof}
  By \cref{lem:constant-free}, we can assume the circuit $\Phi$ is
  constant-free. Hence, by \cref{prop:function}, the circuit solves
  the minimization problem of the form $\Phi(x)=\min_{b\in B}\
  \skal{b,x}$.  The minimization problem on $A$ is of the form $f(x)
  =\min_{a\in A}\skal{a,x}$. The circuit approximates the minimization
  problem on $A$ within factor~$r$ if and only if $f(x)\leq
  \Phi(x)\leq r\cdot f(x)$ holds for all weightings $x\in\RR_+^n$.
  When applied with $U=B$ and $V=A$, \cref{lem:farkas1}(1) implies
  that the inequality $\Phi(x)\geq f(x)$ holds if and only if $B$ lies
  above $\conv{A}$.  When applied with $U=r\cdot A$ and $V=B$, this
  lemma implies that the inequality $r\cdot f(x) \geq \Phi(x)$ holds
  if and only if $r\cdot A$ lies above $\conv{B}$.  This shows the
  equivalence of the assertions (1) and (2).

  Suppose now that $A\subseteq\{0,1\}^n$, and that $A$ is an
  antichain. The implication (3) $\Rightarrow$ (2) is obvious. So,
  assume that the set $B$ has property~(2), i.e., that $B$ lies above
  $\conv{A}$ and $r\cdot A$ lies above $\conv{B}$.

  To show that $B$ lies above the set $A$ (not only above its convex
  hull), take an arbitrary vector $b\in B$. Since $B$ lies above
  $\conv{A}$, there must be a vector $a\in A$ and a scalar
  $0<\lambda\leq 1$ such that $b\geq \lambda\cdot a$. Since $a$ is a
  $0$-$1$ vector, and $b$ is a nonnegative \emph{integer} vector,
  $b\geq a$ must hold.

  To show that $r\cdot A$ lies tightly above $\conv{B}$, take an
  arbitrary vector $a\in A$.  Since, by (2), the set $r\cdot A$ lies
  above $\conv{B}$, the inequality $ r\cdot a\geq c$ must hold for
  some convex combination $c=\lambda_1\cdot
  \vvec{b}{1}+\cdots+\lambda_m\cdot \vvec{b}{m}$ of vectors in $B$,
  where all scalars $\lambda_i$ are positive.  It remains to show that
  then $\supp{\vvec{b}{i}}=\supp{a}$ holds for all $i=1,\ldots,m$,
  i.e., that all the vectors in this convex combination have the same
  support as our vector~$a$.

  Since (as we have just shown) the set $B$ lies above the set $A$,
  there must be (not necessarily distinct) vectors
  $\vvec{a}{1},\ldots,\vvec{a}{m}$ in $A$ such that $\vvec{b}{i}\geq
  \vvec{a}{i}$ for all $i=1,\ldots,m$ and, hence,
  $c=\sum_{i=1}^m\lambda_i\cdot \vvec{b}{i}\geq
  \sum_{i=1}^m\lambda_i\cdot \vvec{a}{i}$. The inequality $r\cdot
  a\geq c$ implies that $\supp{a} \supseteq
  \supp{\vvec{b}{i}}\supseteq \supp{\vvec{a}{i}}$ must hold for all
  $i$. Since $A$ is an antichain and consists of only $0$-$1$ vectors,
  this implies $\vvec{a}{i}=a$ for all $i$. We thus have
  $\supp{\vvec{b}{i}}=\supp{a}$ for all $i=1,\ldots,m$, as desired.
\end{proof}

\begin{rem}[Approximation using arithmetic circuits]\label{rem:min-to-arithm}
  \Cref{lem:min} implies that in order to show that the minimization
  problem on an antichain $\f\subseteq 2^{[n]}$ can be
  $r$-approximated by a $(\min,+)$ circuit of size $t$, it is enough
  to design a monotone \emph{arithmetic} $(+,\times)$ circuit $\Phi$
  of size $\leq t$ such that the polynomial computed by this circuit
  has the following two properties:
  \begin{enumerate}
  \item for every monomial $\prod_{i\in T}x_i^{d_i}$ there is a set
    $S\in\f$ with $S\subseteq T$;
  \item for every set $S\in\f$ there is a monomial $\prod_{i\in
      T}x_i^{d_i}$ with $T=S$ and all $d_i\leq r$.
  \end{enumerate}
  Indeed, property (1) ensures that the set $B$ of vectors produced by
  the arithmetic circuit $\Phi$ lies above the set $A$ of
  characteristic $0$-$1$ vectors of sets in $\f$, while property (2)
  ensures that the set $r\cdot A$ lies tightly above $\conv{B}$. By
  \cref{lem:min}, the $(\min,+)$ version of $\Phi$ $r$-approximates
  the minimization problem on~$\f$.
\end{rem}

\section{A tight boolean bound for $(\min,+)$ circuits}
\label{sec:bool-tight}
In \cref{sec:bool-bound}, we have shown (\cref{thm:bool}) that the
monotone boolean circuit complexity of the decision versions of
minimization problems is a \emph{lower bound} on the size of
$(\min,+)$ circuits approximating these problems within \emph{any}
finite approximation factor $r\geq 1$.  \Cref{lem:min} will allow us
to take the factor $r$ into account, that is, to show that
approximating $(\min,+)$ circuits and monotone boolean circuits are
even more tightly related (as given in \cref{thm:minexact} below). For
this purpose, we introduce the concept of ``semantic degree'' of
monotone boolean circuits.

\subsection{Semantic degree of boolean circuits}
\label{sec:sem-deg}

A~\emph{minterm} of a monotone boolean function $f(x_1,\ldots,x_n)$ is
a vector $a\in\{0,1\}^n$ such that $f(a)=1$, but $f(a')=0$ for any
vector $a'$ obtained by switching any single $1$-entry of $a$ to~$0$. The
boolean function \emph{defined} by a finite set $A\subset\NN^n$ of
vectors is of the form
\[
f_A(x)= \bigvee_{a\in A}~\bigwedge_{i\in \supp{a}}x_i\,,
\]
where, as before, $\supp{a}=\{i\colon a_i\neq 0\}$ is the support of
vector~$a$. In particular, if $A\subset\{0,1\}^n$ is the set of
minterms of a boolean function, then this function is of the form
$f_A$ (is defined by the set of its minterms).

A monotone boolean $(\lor,\land)$ circuit $\bphi$ for a boolean
function $f_A$ (defined by its set $A$ of minterms) not only
\emph{computes} the function $f_A$, but also \emph{produces} (purely
syntactically) some finite set $B\subset\NN^n$ of vectors, as given in
\cref{sec:produced}. By \cref{prop:function}, the circuit $\bphi$
computes the boolean function $f_B$ defined by the set $B$.  Since the
circuit $\bphi$ \emph{computes} the function $f_A$, we know that
$f_B(x)=f_A(x)$ must hold for all $x\in\{0,1\}^n$.  The ``semantic
degree'' of the circuit $\bphi$ (motivated by \cref{lem:min}) gives an
upper bound on the magnitudes of entries of particular (not all)
vectors of the set~$B$.

Namely, we define the \emph{semantic degree}, $\deg{\phi}$, of $\bphi$
as the minimum real number~$r$ such that the set $r\cdot A$ lies
tightly above the convex hull $\conv{B}$ of the set $B$ produced by
the circuit~$\bphi$. Recall that this means that for every minterm
$a\in A$ there are vectors $\vvec{b}{1},\ldots,\vvec{b}{m}\in B$ and
positive scalars $\lambda_1,\ldots,\lambda_m$ such that
$\lambda_1+\cdots+\lambda_m=1$,
$\supp{\vvec{b}{1}}=\ldots=\supp{\vvec{b}{m}}=\supp{a}$ and
\begin{equation}\label{eq:semantic}
  a\leq \lambda_1\cdot
  \vvec{b}{1}+\cdots+\lambda_m\cdot \vvec{b}{m} \leq r\cdot a\,.
\end{equation}

We use the adjective ``semantic'' because $\deg{\bphi}$ depends on the
\emph{function} computed by $\bphi$, that is, on the set $A$ of
minterms of this function.  Note that the first inequality in
\cref{eq:semantic} always holds  because $a$ is a $0$-$1$ vector, and
$\supp{\vvec{b}{i}}=\supp{a}$ holds for all vectors~$b_i$ (we included this inequality just for clarity of the concept).

\subsection{The converse of the boolean bound
  (Theorem~\ref{thm:bool})}

For a finite set $A\subset\NN^n$, let $\BBool{A}{r}$ denote the
minimum size of a monotone boolean circuit of semantic degree at most
$r$ computing the boolean function $\bool{A}$ defined by~$A$.

\begin{thm}[Tight boolean bound]\label{thm:minexact}
  If $A\subset\{0,1\}^n$ is an antichain, then $
  \Min{A}{r}=\BBool{A}{r}$ holds for every $r\geq 1$.
\end{thm}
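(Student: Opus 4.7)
The plan is to derive both inequalities from \cref{lem:min}, using the equivalence (1)$\Leftrightarrow$(3) that is available precisely when $A\subseteq\{0,1\}^n$ is an antichain. The transition between tropical $(\min,+)$ circuits and monotone boolean $(\lor,\land)$ circuits will be carried out via the ``syntactic'' correspondence: replacing $\min$-gates by $\lor$-gates and $+$-gates by $\land$-gates (and vice versa) preserves both the size and, crucially, the set of vectors \emph{produced} by the circuit (since produced sets depend only on the underlying graph and the labels of gates as ``addition'' vs.\ ``multiplication,'' not on the semiring itself).

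For the direction $\Min{A}{r}\ge\BBool{A}{r}$, I would take a $(\min,+)$ circuit $\Phi$ of size $t=\Min{A}{r}$ approximating the minimization problem on $A$ within factor~$r$. By \cref{lem:constant-free} we may assume $\Phi$ is constant-free; let $B\subset\NN^n$ be its produced set. \Cref{lem:min}(3) gives that $B$ lies above $A$ and $r\cdot A$ lies tightly above $\conv{B}$. Let $\bphi$ be the boolean version of $\Phi$; it has size $t$ and produces the same~$B$, so by \cref{prop:function} it computes $f_B$. I would then verify that $A$ and $B$ are similar in the sense of \cref{eq:similar}: ``$B$ lies above $A$'' yields, for each $b\in B$, some $a\in A$ with $b\ge a$, hence $\supp{b}\supseteq\supp{a}$; tightness yields, for each $a\in A$, vectors $\vvec{b}{i}\in B$ with $\supp{\vvec{b}{i}}=\supp{a}$, hence $\supp{a}\supseteq\supp{\vvec{b}{1}}$. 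By \cref{prop:bool-str}, $\bphi$ therefore computes $f_A$, and the ``tightly above'' condition supplied by \cref{lem:min}(3) is exactly the statement that $\deg{\bphi}\le r$.

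For the reverse inequality $\Min{A}{r}\le\BBool{A}{r}$, I would take a monotone boolean circuit $\bphi$ of size $t=\BBool{A}{r}$ and semantic degree at most $r$ computing $f_A$, let $B$ be the set it produces, and form the $(\min,+)$ version $\Phi$ by swapping $\lor\mapsto\min$ and $\land\mapsto+$. Again $\Phi$ has size $t$ and produces the same $B$. To apply \cref{lem:min}(3) in the converse direction, the tightness half ``$r\cdot A$ lies tightly above $\conv{B}$'' is immediate from $\deg{\bphi}\le r$. For ``$B$ lies above $A$,'' observe that $\bphi$ computing $f_A=f_B$ and \cref{prop:bool-str} yield, for each $b\in B$, some $a\in A$ with $\supp{b}\supseteq\supp{a}$; since $a\in\{0,1\}^n$ and $b\in\NN^n$ has $b_i\ge 1$ whenever $i\in\supp{b}$, we get $b\ge a$ coordinatewise. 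Then \cref{lem:min} gives that $\Phi$ $r$-approximates the minimization problem on $A$, so $\Min{A}{r}\le t$.

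There is no real obstacle beyond invoking the earlier machinery; the only point that needs care is the direction in which $\deg{\bphi}\le r$ is used. In one direction, the tightness property is \emph{read off} from the $(\min,+)$ approximation via \cref{lem:min}(3), and in the other direction it is \emph{fed into} \cref{lem:min}(3) to produce the $(\min,+)$ approximation; both steps are possible only because the definition of semantic degree is precisely the tightness clause of \cref{lem:min}(3), which is the reason this definition was chosen.
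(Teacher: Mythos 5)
Your proposal is correct and follows essentially the same route as the paper: both directions are obtained by passing between a constant-free $(\min,+)$ circuit and its boolean version (which share the same produced set $B$), invoking \cref{lem:min} to translate between ``$r$-approximates the minimization problem on $A$'' and ``$B$ lies above $A$ and $r\cdot A$ lies tightly above $\conv{B}$,'' and using \cref{prop:bool-str} to conclude that the boolean circuit computes $f_A$. The only cosmetic differences are that the paper notes explicitly that boolean constants $0,1$ can be trivially eliminated before forming the tropical version, and that in the upper-bound direction it feeds the weaker assertion (2) of \cref{lem:min} into the equivalence rather than assertion (3); neither affects correctness.
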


\begin{proof}
  To show $\Min{A}{r}\leq \BBool{A}{r}$, take a monotone boolean
  $(\lor,\land)$ circuit $\bphi$ of semantic degree $r$ computing the
  boolean function $\bool{A}$ defined by $A$. We can assume that the
  circuit is constant-free: $0$ and $1$ are the only possible
  constants, and they can be trivially eliminated from the
  circuit. Let $B\subset\NN^n$ be the set of vectors produced by the
  circuit $\bphi$.  By \cref{prop:function}, the circuit $\bphi$
  computes the boolean function $f_B$ defined by this set, that is,
  $\bool{B}(x)=\bool{A}(x)$ holds for all inputs $x\in\{0,1\}^n$.

  Let $\Phi$ be the (also constant-free) tropical $(\min,+)$ version
  of the boolean circuit $\bphi$ obtained from $\bphi$ by replacing
  each $\lor$-gate by a $\min$-gate, and each $\land$-gate by a
  $+$-gate.  The $(\min,+)$ circuit $\Phi$ produces the same set~$B$.

  Since $\bool{B}(x)\leq \bool{A}(x)$ must hold for all inputs
  $x\in\{0,1\}^n$, for every $b\in B$ there must be a vector $a\in A$
  with $\supp{b}\supseteq \supp{a}$. Since vectors in $A$ are $0$-$1$
  vectors, this latter inclusion yields $b\geq a$. Thus, the set $B$
  lies above $A$ and, hence, also above $\conv{A}$. Since the circuit
  $\bphi$ has semantic degree $r$, we additionally have that the set
  $r\cdot A$ lies above $\conv{B}$.  By \cref{lem:min}, the circuit
  $\Phi$ approximates the minimization problem on~$A$ within the
  factor~$r$.

  To show $\BBool{A}{r}\leq \Min{A}{r}$, take a tropical $(\min,+)$
  circuit $\Phi$ approximating the minimization problem on $A$ within
  the factor $r$, and let $B\subset\NN^n$ be the set of vectors
  produced by $\Phi$. By \cref{lem:constant-free}, we can assume that
  the circuit $\Phi$ is constant-free.  Let $\bphi$ be the boolean
  version of the tropical circuit $\Phi$ obtained by replacing each
  $\min$-gate by an $\lor$-gate, and each $+$-gate by an
  $\land$-gate. (Recall that $\min$ and $\lor$ are ``additions'' and $+$ and $\land$ are ``multiplications'' in the corresponding semirings.)
  The circuit $\bphi$ produces the same set~$B$.
  By \cref{lem:min}, we know that the set $B$ has the following two
  properties:
  \begin{enumerate}
  \item[(i)] $B$ lies above $A$ ;
  \item[(ii)] $r\cdot A$ lies tightly above $\conv{B}$.
  \end{enumerate}
  By property (ii), the semantic degree of the boolean circuit $\bphi$
  is at most~$r$.  On the other hand, property (i) implies that the
  support of every vector $b\in B$ contains the support of at least
  one vector $a\in A$, and property (ii) implies that the support of
  every vector $a\in A$ contains the support of at least one vector
  $b\in B$.  In terms of \cref{prop:bool-str}, this means that the set
  $B$ is similar to the set $A$. Since (by \cref{prop:function}) the
  circuit computes a boolean function $f_B$ defined by the set $B$,
  \cref{prop:bool-str} itself implies that $\bphi$ computes the
  boolean function defined by the set $A$.
\end{proof}

\subsection{Bounds on semantic degree}
\label{sec:sem-deg-upper}

An $r$-\emph{bounded copy} of a boolean vector $a\in\{0,1\}^n$ is an
integer vector $b\in\NN^n$ which has the same nonzero positions as
$a$, and every nonzero position of $b$ is at most~$r$.
In particular, the unique $1$-bounded copy of $a$ is the vector $a$ itself.
Recall that
$|a|=\skal{a,a}$ is the number of ones in a $0$-$1$ vector~$a$.

\begin{prop}\label{prop:deg-upper}
  Let $\bphi$ be a monotone boolean circuit computing a boolean
  function $f$, $A\subset\{0,1\}^n$ the set of minterms of~$f$, and
  $B\subset\NN^n$ the set of vectors produced by the circuit
  $\bphi$. Then the following holds.
  \begin{enumerate}
  \item $\deg{\bphi}=1$ if and only if $A\subseteq B$.
  \item $\deg{\bphi}\leq r$ holds if for every $a\in A$ the set $B$
    contains at least one $r$-bounded copy of$~a$.
  \item If $\deg{\bphi}\leq r$ holds, then for every $a\in A$ the set
    $B$ contains at least one $s$-bounded copy of$~a$ for $s\leq
    r|a|-|a|+r$.
  \end{enumerate}
\end{prop}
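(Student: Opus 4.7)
My plan is to verify the three claims sequentially, each reducing to the defining inequality \cref{eq:semantic} of semantic degree. Parts (1) and (2) are essentially one-line checks from the definition; the main content lies in part (3), where I would use an averaging argument together with the integrality of the entries of vectors in the set $B$ produced by $\bphi$.

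For part (1), the reverse direction is immediate: if $A\subseteq B$, then for each $a\in A$ the trivial one-term convex combination $(\lambda_1,\vvec{b}{1})=(1,a)$ gives $a\leq a\leq 1\cdot a$ with $\supp{\vvec{b}{1}}=\supp{a}$, so $\deg{\bphi}\leq 1$; combined with the observation that $\sum_i\lambda_i\vvec{b}{i}\geq a$ always holds on $\supp{a}$ (positive-integer entries), we obtain $\deg{\bphi}=1$. For the forward direction, $\deg{\bphi}=1$ forces $\sum_i\lambda_i\vvec{b}{i}=a$ coordinate-wise on $\supp{a}$; this is a positive-weight average of positive integers equal to $1$, so every $(\vvec{b}{i})_j=1$, yielding $\vvec{b}{i}=a\in B$. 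For part (2), taking $m=1$, $\vvec{b}{1}=b$, $\lambda_1=1$ for the supplied $r$-bounded copy $b$ of $a$ directly yields $a\leq b\leq r\cdot a$ (since $b$ has integer entries in $\{1,\ldots,r\}$ on $\supp{a}$), certifying $\deg{\bphi}\leq r$.

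For part (3), the key is averaging. Fix $a\in A$, set $k=|a|$, and let $\vvec{b}{1},\ldots,\vvec{b}{m}\in B$ together with positive scalars $\lambda_i$ summing to $1$ witness $\deg{\bphi}\leq r$; in particular $\supp{\vvec{b}{i}}=\supp{a}$ and $\sum_i\lambda_i(\vvec{b}{i})_j\leq r$ for each $j\in\supp{a}$. Summing these $k$ inequalities and swapping the order of summation gives
\[
\sum_i\lambda_i\sum_{j\in\supp{a}}(\vvec{b}{i})_j\leq rk.
\]
Since the $\lambda_i$ form a probability distribution, some index $i^*$ satisfies $\sum_{j\in\supp{a}}(\vvec{b}{i^*})_j\leq rk$. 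Because $\vvec{b}{i^*}$ has support exactly $\supp{a}$ and each of its $k$ nonzero coordinates is a positive integer at least $1$, any individual coordinate is bounded by $rk-(k-1)=r|a|-|a|+1\leq r|a|-|a|+r$, so $\vvec{b}{i^*}$ is the desired $s$-bounded copy of $a$ in $B$.

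I do not foresee any genuine obstacle: parts (1) and (2) amount to definition-chasing, and part (3) is a short averaging argument whose only subtlety is invoking integrality to convert a bound on the coordinate sum into a uniform bound on each coordinate.
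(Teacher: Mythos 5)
Your proofs of parts (1) and (2) coincide with the paper's: both are definition-chasing plus the observation that a positive-weight average of positive integers equals $1$ only if all of them equal $1$. For part (3), however, you take a genuinely different and arguably cleaner route. The paper first invokes Carath\'eodory's theorem to cut the convex combination down to $l\leq |a|+1$ terms, then pigeonholes on the weights to find an index with $\lambda_i\geq 1/(|a|+1)$, and finally bounds the entries of $\vvec{b}{i}-a$ by $(r-1)/\lambda_i\leq (r-1)(|a|+1)$, arriving at $s\leq r|a|-|a|+r$. You instead sum the coordinate inequalities $\sum_i\lambda_i(\vvec{b}{i})_j\leq r$ over $j\in\supp{a}$, swap the order of summation, and average over $i$ to find a single $\vvec{b}{i^*}$ whose coordinate sum over $\supp{a}$ is at most $r|a|$; integrality of the remaining $|a|-1$ coordinates (each at least $1$) then caps any single coordinate at $r|a|-|a|+1$. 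This avoids Carath\'eodory entirely and yields the sharper bound $s\leq r|a|-|a|+1$, which is in fact optimal: the extremal example in the remark following the proposition (vectors $\vvec{b}{i}=a+|a|(r-1)\vvec{e}{i}$ with uniform weights) attains exactly $s=r|a|-|a|+1$. So your argument is not only correct but closes the small gap between the paper's upper bound and its lower-bound example.
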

Note that (1) is a special case of (2) and (3) for $r=1$
\begin{proof}
  The ``if'' direction in (1) is obvious. The ``only if'' direction
  follows from a simple observation: a convex combination $\lambda_1
  b_1+\cdots+\lambda_m b_m$ of positive \emph{integers} $b_i$ with all
  $\lambda_i>0$ can be equal $1$ only if $b_1=\ldots=b_m=1$.

  Claim (2) is also obvious, because for every $0$-$1$ vector $a$, the
  inequality $b\leq r\cdot a$ holds for \emph{every} $r$-bounded copy
  $b$ of~$a$.

  To show claim (3), assume that $\deg{\bphi}\leq r$. Take any minterm
  $a\in A$ and let $m=|a|$ be the number of ones in $a$.  By the
  definition of the semantic degree, we know that there must be a
  convex combination $c=\sum_{i=1}^l\lambda_i\cdot\vvec{b}{i}$
  of vectors $b_i\in B$ with all supports
  $\supp{\vvec{b}{i}}=\supp{a}$ such that $c\leq r\cdot a$ holds.  By
  Carath\'eodory's theorem~\cite{caratheodory}, if a vector is in the
  convex hull of some set $P\subseteq \RR^m$ of vectors, then this
  vector can be written as a convex combination of $m+1$ or fewer
  vectors in $P$. So, by taking
  $P=\{\vvec{b}{1},\ldots,\vvec{b}{l}\}$, we can assume that $l\leq
  |a|+1=m+1$.

  Consider the vectors $\Vvec{b}{i}:=\vvec{b}{i}-a\geq \vnul$ (the vectors $\Vvec{b}{i}$ are nonnegative, because vectors $\vvec{b}{i}$ have the same support as $a$). Then
  $c=a + c'$ with $c':=\sum_{i=1}^l \lambda_i \Vvec{b}{i} = c-a \leq
  r\cdot a -a = (r-1)\cdot a$.  Since
  $\lambda_1+\cdots+\lambda_l=1$, there must be an $i$ such that
  $\lambda_i\geq 1/l\geq 1/(m+1)$. From $\lambda_i\cdot
  \Vvec{b}{i}\leq c'\leq (r-1)\cdot a$, and since $a$ is a $0$-$1$
  vector, we have that all entries of vector $\Vvec{b}{i}$ must be at
  most $(r-1)/\lambda_i\leq (r-1)(m+1)$. Hence, all entries of the
  vector $\vvec{b}{i}$ are at most $(r-1)(m+1)+1=rm-m+r$, as desired.
\end{proof}

\begin{rem}
An apparent advantage of  \cref{prop:deg-upper} is that it avoids the somewhat involved definition of
  the semantic degree via convex hulls. Items (i) and (ii) may be
  useful when proving \emph{upper} bounds, while items (i) and (iii)
  may be useful when proving \emph{lower} bounds on the size of
  monotone boolean circuits of bounded semantic degree.
\end{rem}

\begin{rem}
  Note that the upper bound $s\leq r|a|-|a|+r$ in item (iii) of
  \cref{prop:deg-upper} cannot be substantially improved. Take $m=|a|$
  vectors $\vvec{b}{i}:=a+m(r-1)\vvec{e}{i}$, and let all
  $\lambda_i:=1/m$. Then the convex combination
  $c=\sum_{i=1}^m\lambda_i\cdot\vvec{b}{i}=a+(r-1)\cdot a=r\cdot a$
  satisfies $c\leq r\cdot a$, but every vector $\vvec{b}{i}$ in this combination
   has $s=1+m(r-1) = r|a|-|a|+1$ as one of it entries.
\end{rem}

The following example shows that the semantic degree of monotone
boolean circuits can be small even when some vectors produced by the circuit
have very large entries.

\begin{ex}[Shortest paths]\label{ex:BF}
  Let $A$ be the set of characteristic $0$-$1$ vectors of all simple
  paths in $K_n$ between two fixed vertices $s$ and $t$.  Then the
  boolean function $\bool{A}$ defined by $A$ is the $s$-$t$
  connectivity function STCONN on $n$-vertex graphs.  The
  Bellman--Ford pure DP algorithm for the shortest $s$-$t$ path
  problem gives us a monotone boolean $(\lor,\land)$ circuit $\bphi$
  of size $O(n^3)$ computing the boolean function $f_{A}$.  The
  circuit has gates $\BF{l}{j}$ at which the existence of a path from
  vertex $s$ to vertex $j$ with at most $l$ edges is detected. Then
  $\BF{1}{j}=x_{s,j}$ for all $j\neq s$, and the recursion of
  Bellman--Ford is to compute $\BF{l+1}{j}$ as the OR of $\BF{l}{j}$
  and all $\BF{l}{i}\land x_{i,j}$ for $i\not\in\{s,j\}$.  The output
  gate is $\BF{n-1}{t}$.

  The vectors of the set $B\subset\NN^n$ produced by the Bellman--Ford
  circuit $\bphi$ correspond not to (simple) paths but rather to
  \emph{walks} of length at most $n-1$ from $s$ to $t$.  Since a walk
  can traverse the same edge many times, some vectors in $B$ have
  entries much larger than $1$.  Still, by \cref{prop:deg-upper}(1),
  $\deg{\bphi}=1$ holds: every (simple) $s$-$t$ path is also a walk of
  length at most $n-1$, implying that $A\subseteq B$.
\end{ex}

\subsection{Semantic versus syntactic degree}

The standard, ``syntactic'' definition of the degree is the following.
Each input node holding a variable has degree $1$. The degree of an OR
gate is the \emph{maximum} of the degrees of its input gates, and the
degree of an AND gate is the \emph{sum} of the degrees of its input
gates.  The following proposition shows that the semantic degree never
exceeds the syntactic degree.

\begin{prop}\label{prop:degree}
  Let $\bphi_1$ and $\bphi_2$ be any two monotone boolean
  circuits. Then 
  \[
  \mbox{$\deg{\bphi_1\lor \bphi_2} \leq
  \max\left\{\deg{\bphi_1}, \deg{\bphi_2}\right\}$ and
  $\deg{\bphi_1\land \bphi_2} \leq \deg{\bphi_1}+ \deg{\bphi_2}$.}
  \]
\end{prop}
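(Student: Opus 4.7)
The plan is to prove both inequalities directly from the semantic-degree certificates of $\bphi_1$ and $\bphi_2$, relying on how the sets of produced vectors compose under $\lor$ and $\land$. Write $B_i$ for the set of vectors produced by $\bphi_i$, $f_i$ for the boolean function it computes, and $A_i$ for the set of minterms of $f_i$. From the definition in \cref{sec:produced}, the set produced by $\bphi_1\lor\bphi_2$ is $B_1\cup B_2$, and the set produced by $\bphi_1\land\bphi_2$ is the Minkowski sum $B_1+B_2$.

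For the $\lor$ bound, take any minterm $a$ of $f_1\lor f_2$. Since $f_1(a)\lor f_2(a)=1$, assume WLOG that $f_1(a)=1$. I claim $a\in A_1$: otherwise some $a'<a$ obtained by flipping a single $1$-entry of $a$ to $0$ would satisfy $f_1(a')=1$, hence $(f_1\lor f_2)(a')=1$, contradicting the minimality of $a$. The semantic-degree certificate of $\bphi_1$ at $a$ then yields vectors $\vvec{b}{1},\ldots,\vvec{b}{m}\in B_1\subseteq B_1\cup B_2$ with $\supp{\vvec{b}{i}}=\supp{a}$ and positive scalars $\lambda_i$ summing to $1$ with $\sum_i\lambda_i\vvec{b}{i}\leq \deg{\bphi_1}\cdot a\leq \max\{\deg{\bphi_1},\deg{\bphi_2}\}\cdot a$, establishing the first inequality.

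For the $\land$ bound, take any minterm $a$ of $f_1\land f_2$. Since $f_1(a)=f_2(a)=1$, there exist minterms $a_1\in A_1$ and $a_2\in A_2$ with $a_1\leq a$ and $a_2\leq a$. Let $a'$ denote the $0$-$1$ vector with support $\supp{a_1}\cup\supp{a_2}$. Then $a'\leq a$ and, by monotonicity, $f_1(a')=f_2(a')=1$; minimality of $a$ forces $a'=a$, i.e., $\supp{a_1}\cup\supp{a_2}=\supp{a}$. Apply the semantic-degree certificates of $\bphi_1$ at $a_1$ and of $\bphi_2$ at $a_2$: vectors $\vvec{b}{i}\in B_1$ with $\supp{\vvec{b}{i}}=\supp{a_1}$ and positive $\lambda_i$ summing to $1$ satisfying $\sum_i\lambda_i\vvec{b}{i}\leq \deg{\bphi_1}\cdot a_1$, and vectors $\vvec{c}{j}\in B_2$ with $\supp{\vvec{c}{j}}=\supp{a_2}$ and positive $\mu_j$ summing to $1$ satisfying $\sum_j\mu_j\vvec{c}{j}\leq \deg{\bphi_2}\cdot a_2$. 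Form the tensor-style convex combination
\[
\sum_{i,j}\lambda_i\mu_j\,(\vvec{b}{i}+\vvec{c}{j})\;=\;\sum_i\lambda_i\vvec{b}{i}\,+\,\sum_j\mu_j\vvec{c}{j}\,.
\]
Its terms $\vvec{b}{i}+\vvec{c}{j}$ lie in $B_1+B_2$, each has support $\supp{a_1}\cup\supp{a_2}=\supp{a}$, and the coefficients $\lambda_i\mu_j$ are positive and sum to $1$. Finally, using $a_1,a_2\leq a$, the right-hand side is at most $\deg{\bphi_1}\cdot a_1+\deg{\bphi_2}\cdot a_2\leq (\deg{\bphi_1}+\deg{\bphi_2})\cdot a$, which yields the second inequality.

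The only nontrivial step is the combinatorial observation that a minterm of $f_1\land f_2$ decomposes as the union of supports of a minterm of $f_1$ and a minterm of $f_2$; this is what makes the tensored convex combination have the right support for tightness. Everything else is direct bookkeeping with the certificates and the product of two convex distributions.
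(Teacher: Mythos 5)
Your proof is correct and follows essentially the same route as the paper's: reduce to the certificates of $\bphi_1,\bphi_2$ at the constituent minterms, use $B_1\cup B_2$ for $\lor$ and the identity $\conv{B_1}+\conv{B_2}=\conv{B_1+B_2}$ (your tensored combination is exactly this) for $\land$. You additionally verify two facts the paper asserts without proof---that a minterm of $f_1\lor f_2$ is a minterm of some $f_i$, and that a minterm of $f_1\land f_2$ is the componentwise OR of minterms of $f_1$ and $f_2$---and you track the support (tightness) condition explicitly, which is a welcome extra level of care.
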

\begin{proof}
  For $i\in\{1,2\}$, let $A_i\subseteq\{0,1\}^n$ be the set of
  minterms of the boolean function computed by $\bphi_i$, and let
  $B_i\subset\NN^n$ be the set of vectors produced by $\bphi_i$. Let
  $r_i=\deg{\bphi_i}$ be the semantic degree of~$\bphi_i$.

  Take an arbitrary minterm $a$ of $\bphi$.  If $\bphi=\bphi_1\lor
  \bphi_2$, then $B=B_1\cup B_2$ is the union of the set $B_1$ and
  $B_2$, and $a\in A_i$ for some $i\in\{1,2\}$. We know that $c\leq
  r_i \cdot a$ must hold for some vector $c$ in $\conv{B_i}\subseteq
  \conv{B}$. So, $\deg{\bphi_1\lor \bphi_2}\leq \max\{r_1, r_2\}$ in
  this case.

  If $\bphi=\bphi_1\land \bphi_2$, then $B=B_1+B_2$ is the Minkowski
  sum of the sets $B_1$ and $B_2$, and $a=\vvec{a}{1}\lor \vvec{a}{2}$
  is a componentwise OR of some minterms $\vvec{a}{1}\in A_1$ and
  $\vvec{a}{2}\in A_2$.  We know that $\vvec{c}{1}\leq r_1\cdot
  \vvec{a}{1}$ and $\vvec{c}{2}\leq r_i\cdot \vvec{a}{2}$ must hold
  for some vectors $\vvec{c}{1}\in\conv{B_1}$ and
  $\vvec{c}{2}\in\conv{B_2}$.  A well-known property of Minkowski sums
  is that $\conv{B_1}+\conv{B_2}=\conv{B_1+B_2}$ always holds. Hence,
  the vector $c=\vvec{c}{1}+\vvec{c}{2}$ belongs to $\conv{B}$ and
  satisfies $c=\vvec{c}{1}+\vvec{c}{2}\leq r_1\cdot \vvec{a}{1} +
  r_2\cdot \vvec{a}{2}\leq r_1\cdot a + r_2\cdot a =(r_1+r_2)\cdot a$.
  So, $\deg{\bphi_1\land \bphi_2}\leq r_1+r_2$ holds in this case.
\end{proof}

The following example illustrates that, together with
\cref{prop:degree}, the upper bound $\Min{A}{r}\leq \BBool{A}{r}$
given by \cref{thm:minexact} allows one to show that some minimization
problems \emph{can} be approximated by small $(\min,+)$ circuits
within (large) but finite factors by proving upper bounds for monotone
\emph{boolean} circuits of bounded semantic degree. Recall that
\emph{some} minimization problems cannot be approximated by $(\min,+)$
circuits of polynomial size within \emph{any} finite factor $r=r(n)$
at all (\cref{sec:minplus-explicit}).

\begin{ex}[Spanning trees]\label{ex:ST}
  In the \emph{minimum weight spanning tree} problem $\spanntree_n$,
  we are given an assignment of nonnegative real weights to the edges
  of $K_n$, and the goal is to compute the minimum weight of a
  spanning tree in $K_n$; the weight of a subgraph is the sum of
  weights of its edges. We have shown in~\cite{JS19} that
  $\Min{\spanntree_n}{1}=2^{\Omega(\sqrt{n})}$.

  On the other hand, the decision version of this problem is the graph
  connectivity problem.  Using the (pure) DP algorithm of Bellman and
  Ford, for every pair $(s,t)$ of vertices, the $s$-$t$ connectivity
  problem can be solved by a monotone boolean circuit $\bphi_{s,t}$ of
  size $O(n^3)$ and semantic degree $\deg{\bphi_{s,t}}=1$
  (see~\cref{ex:BF}).  So, the connectivity problem can be solved by
  the circuit $\bphi_{1,2}\land \bphi_{1,3}\land\cdots\land
  \bphi_{1,n}$ of size $O(n^4)$. By \cref{prop:degree}, the circuit
  has semantic degree $r\leq n-1$.  \Cref{thm:minexact} implies that
  $\Min{\spanntree_n}{r}=O(n^4)$ holds for some \emph{finite}
  factor~$r\leq n-1$.
\end{ex}

\section{Conclusion and open problems}
\label{sec:concl}
Developing a workable taxonomy of existing algorithmic paradigms in
\emph{rigorous} mathematical terms is an important long-term goal.
When pursuing this goal, the main difficulty is to prove unconditional
\emph{lower bounds} on the complexity of algorithms from particular
classes, that is, to prove lower bounds not relying on unproven
complexity assumptions like {\bf P} $\neq$ {\bf NP}.

In this paper, we consider the class of all pure DP algorithms, take
tropical circuits as their natural mathematical model, and prove the
first non-trivial (even super-polynomial) unconditional lower bounds
for \emph{approximating} pure DP algorithms in this model.  The
results imply that the approximation powers of greedy and pure DP
algorithms are incomparable.  Some interesting questions still remain
open.

Given a family $\f\subseteq 2^{[n]}$ of feasible solutions, and an
approximation factor $r\geq 1$ let, as before, $\Max{\f}{r}$ denote
the minimum number of gates in a $(\max,+)$ circuit approximating the
maximization problem $f(x)=\max_{S\in \f}\sum_{i\in S}x_i$ on $\f$
within the factor~$r$.  In the case of minimization problems and
$(\min,+)$ circuits, the corresponding complexity measure is $\Min{\f}{r}$.

\subsection{Minimization}
We have shown in \cref{thm:min-gap} that there \emph{exist} a lot of
monotone boolean functions $f$ such that minterms of $f$ are bases of
a matroid, and $f$ requires monotone boolean circuits of exponential
size. But we do not know of any \emph{explicit} matroid for which the
corresponding boolean function requires large monotone boolean
circuits.

\begin{problem}\label{prob:matroid}
  Prove a super-polynomial lower bound on the monotone boolean circuit
  complexity of an \emph{explicit} boolean function whose minterms are
  bases of a matroid.
\end{problem}

Let $\spanntree_n$ be the family of all spanning trees in a complete
$n$-vertex graph $K_n$. Since $\spanntree_n$ is a matroid, both
minimization and maximization problems can be solved exactly (within
factor $r=1$) by the greedy algorithm. On the other hand, we know that
$\Min{\spanntree_n}{1}=2^{\Omega(\sqrt{n})}$~\cite{JS19}. We also know
that $\Min{\spanntree_n}{r}=O(n^4)$ holds if factor $r=n-1$ is allowed
(\cref{ex:ST}).

\begin{problem}
  Is $\Min{\spanntree_n}{2}$ polynomial in $n$?
\end{problem}

\subsection{Maximization}
The next question concerns the \emph{maximization} problem on the
matroid $\spanntree_n$ of spanning trees in $K_n$. We know that, for
factor $r=1$, we have
$\Max{\spanntree_n}{1}=2^{\Omega(\sqrt{n})}$~\cite{JS19}.
\begin{problem}
  Is $\Max{\spanntree_n}{2}$ polynomial in $n$?
\end{problem}

In \cref{thm:matching}, we considered the maximum weight problem on
$k$-partite \emph{hypergraphs}.  For $k=2$, the calculations made in
the proof of \cref{thm:matching} result in a trivial bound.  This
rises a natural question: does a similar lower bound hold also for
matchings in \emph{bipartite} graphs? Let $\matching_n$ be the family
of all perfect matchings in a complete bipartite $n\times n$
graph. The greedy algorithm can approximate the maximization problem
on $\matching_n$ within the factor~$2$.

\begin{problem}
  Is $\Max{\matching_n}{2}$ polynomial in $n$?
\end{problem}

We have shown in \cref{thm:min-gap} that the \emph{minimization}
problem on many matroids cannot be efficiently approximated by pure DP
algorithms within any finite factor~$r$.  But what happens with
\emph{maximization} problems? By \cref{prop:counting1}, we know that
there are a lot of matroids $\f\subseteq 2^{[n]}$ such that
$\Max{\f}{1}=2^{\Omega(n)}$ but $\Max{\f}{r}\leq n^2$ holds already
for $r=1+o(1)$.

\begin{problem}
  Are there matroids, on which the maximization problem cannot be
  efficiently approximated by pure DP algorithms within some factor
  $r\geq 1+\epsilon $ for a constant $\epsilon >0$?
\end{problem}
Note that here we only ask for the mere \emph{existence}.  By
\cref{prop:counting1}, the answer is ``yes'' for $r=1$. But this
proposition and \cref{prop:counting2} indicate that direct counting
arguments may fail to answer this question for slightly larger
approximation factors~$r$.

\subsection{Tradeoffs between minimization and maximization}
If a family $\f$ of feasible solutions is \emph{uniform} (all sets of
$\f$ have the same cardinality), then $\Min{\f}{1}=\Max{\f}{1}$ (see,
for example, \cite[Lemma~2]{juk-SIDMA}).  That is, if we consider
\emph{exactly} solving tropical circuits (factor $r=1$), then there is
no difference between the tropical circuit complexity of the
minimization and the maximization problem on the \emph{same} (uniform)
set $\f$ of feasible solutions.

But the situation is entirely different if we consider
\emph{approximating} circuits: \cref{thm:min-gap,prop:counting1} give
us doubly-exponentially many in $n$ matroids $\f\subseteq 2^{[n]}$
such that $\Max{\f}{1+o(1)}\leq n^2$, but $\Min{\f}{r}=2^{\Omega(n)}$
for any finite factor $r=r(n)\geq 1$.

\begin{problem}
  Are there uniform families $\f$ for which the gap
  $\Max{\f}{r}/\Min{\f}{s}$ is exponential for $r\geq s>1$?
\end{problem}

Note that the separating family $\f$ is here required to be uniform
(or at least to form an antichain): without this requirement, the gap
can be \emph{artificially} made large. To see this, take an arbitrary
uniform family $\h\subseteq 2^{[n]}$ with large $\Max{\h}{r}$ (as in
Theorems~\ref{thm:hierarchy} and \ref{thm:matching}), and extend it to
a nonuniform family $\f$ by adding all single element sets. Then
$\Min{\f}{1}\leq n$ (just compute the minimum weight of a single
element), but $\Max{\f}{r}$ still remains large.

\subsection{Pure DP algorithms with subtraction}
\label{sec:subtraction}
Can the size of tropical approximating circuits be substantially
reduced by allowing (besides $\min/\max$ and $+$) also
\emph{subtraction} $(-)$ gates? In the case of the approximation
factor $r=1$ (exact solution), we already know the answer: subtraction
gates \emph{can} then even exponentially decrease the circuit
size. Namely, we already know that both directed and undirected
versions of the MST problem (minimum weight spanning tree problem) on
$n$-vertex graphs require tropical $(\min,+)$ circuits of size
$2^{\Omega(\sqrt{n})}$~\cite{jerrum,JS19} but, as shown by Fomin,
Grigoriev and Koshevoy~\cite{FGK}, both these problems are solvable by
tropical $(\min,+,-)$ circuits of size only $O(n^3)$.  Unfortunately,
no non-trivial lower bounds for $(\min,+,-)$ circuits are known so
far. So, at least two natural questions arise.
\begin{enumerate}
\item[$\circ$] Prove lower bounds for $(\min,+,-)$ circuits, at least
  when $r=1$.
\item[$\circ$] What about larger approximation factors $r>1$?
\end{enumerate}
Note that, when restricted to the \emph{boolean} domain $\{0,1\}$,
$(\min,+,-)$ circuits have the entire power of \emph{unrestricted}
boolean $(\lor,\land,\neg)$ circuits: $x\land y=\min(x,y)$, $x\lor
y=\min(1,x+y)$ and $\neg x= 1-x$. The point, however, is that
$(\min,+,-)$ circuits must correctly work over the entire \emph{real}
domain~$\RR_+$.

\appendix

\section{Greedy algorithms}
\label{app:greedy}
Since we compared the approximation power of tropical circuits (and
pure DP algorithms) with that of the greedy algorithm, here we specify
what we actually mean by ``the'' greedy algorithm.

Let $\f\subseteq 2^E$ be some family of feasible solutions forming an
antichain (no two members of $\f$ are comparable under set inclusion).
Given an ordering $e_1,\ldots,e_n$ of the elements of $E$, there are
two trivial heuristics to end up with a member of~$\f$ by treating the
elements one-by-one in this fixed order.
\begin{description}
\item[\mbox{\it First-in}] Start with the empty partial solution,
  treat the elements one-by-one and, at each step, \emph{add} the next
  element to the current partial solution if and only if the extended
  partial solution still lies in at least one feasible solution.

\item[\mbox{\it First-out}] Start with the entire set $E$ as a
  partial solution, treat the elements one-by-one and, at each step,
  \emph{remove} the next element from the current partial solution if
  and only if the reduced partial solution still contains at least one
  feasible solution.
\end{description}

Recall that an optimization (maximization or minimization) problem on
$\f$ is, given an assignment of nonnegative real weights to the ground
elements, to compute the maximum or the minimum weight of a feasible
solution, the latter being the sum of weights of its elements.

In this paper, by \emph{the greedy algorithm} we always mean the
algorithm which, on every input weighting $x:E\to\RR_+$, starts with
the heaviest-first ordering $x(e_1)\geq x(e_2)\geq \ldots\geq x(e_n)$
of the elements of $E$, and uses:
\begin{itemize}
\item[-] the first-in heuristic (``best-in'' strategy) in the case of
  \emph{maximization};

\item[-] the first-out heuristic (``worst-out'' strategy) in the case
  of \emph{minimization}.
\end{itemize}
That is, at each step, the ``oracle'' of the maximizing greedy
algorithm decides whether the current set is still contained in at
least one feasible solution, while that of the minimizing greedy
algorithm decides whether the current set still contains at least one
feasible solution.

We denote the approximation factor achieved by the greedy algorithm on
a corresponding optimization (minimization or maximization) problem on
$\f$ by~$\fgreedy{\f}$.  It is well known (see, for example,
\cite[Theorem~1.8.4]{oxley} that $\fgreedy{\f}=1$ if and only if $\f$
is (the family of bases of) a matroid.  If $\f$ is not a matroid, then
greedy algorithms can only approximate the corresponding optimization
problems. In this case, it is already crucial what greedy strategy is
used.

\begin{ex}\label{ex:bad}
  The choice of these special heuristics (first-in for maximization
  and first-out for minimization) is not an accident. Namely, a greedy
  algorithm starting with the lightest-first ordering $x(e_1)\leq
  x(e_2)\leq \ldots\leq x(e_n)$, and using the first-out heuristic
  (``worst-out'' strategy) for maximization or first-in heuristic
  (``best-in'' strategy) for minimization would be unable to
  approximate some optimization problems within any finite factor.  To
  give a simple example, consider the path with three nodes
  $\scalebox{0.8}{\xygraph { !{<0cm,0cm>;<1cm,0cm>:<0cm,-0.9cm>::}
      !{(0,0)}*+=[o]{\bullet}="a" !{(1,0)}*+=[o]{\bullet}="b"
      !{(2,0)}*+=[o]{\bullet}="c" !{(0,-0.3)}*+{a}="a1"
      !{(1,-0.3)}*+{b}="b1" !{(2,-0.3)}*+{c}="c1" "a"-"b" "b"-"c" }}$,
  and let $\f$ be the family consisting of just two sets $\{a,c\}$ and
  $\{b\}$ (the maximal independent sets in this path).  If we take an arbitrarily large number $M>1$, and give
  weights $x(a)=0$, $x(b)=1$ and $x(c)=M$, then both these greedy algorithms will treat the vertices in the order $a,b,c$. The worst-out maximizing greedy on $\f$ will output $x(b)=1$ while the optimum is $M$, and the best-in greedy for
  minimization will output $x(a)+x(c)=0+M$, while the optimum is $1$. In both
  cases, the achieved approximation factor is $r\geq M$ (unbounded).
\end{ex}

If, however, the greedy algorithm uses the ``right'' strategies for
maximization and for minimization, then the approximation factor is
always \emph{bounded} (albeit possibly growing with the size of
feasible solutions). Say that a family $\f$ of sets is
$m$-\emph{bounded} if $|S|\leq m$ holds for all $S\in\f$.

\begin{prop}\label{prop:greedy}
  For every $m$-bounded family $\f$, we have $\fgreedy{\f}\leq m$, and
  there exist $m$-bounded antichains $\f$ for which $\fgreedy{\f}=m$.
\end{prop}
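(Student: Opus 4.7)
The plan is to first show that on every input weighting the greedy output $G$ is itself a member of $\f$. Throughout the greedy run the current partial solution is contained in (respectively, contains) a feasible solution, so at termination $G \subseteq S$ (respectively $G \supseteq S$) for some $S \in \f$. In the maximization (best-in) case, if instead $G \subsetneq S$ with $e \in S \setminus G$, then at the moment $e$ was examined the then-current partial solution $G'$ satisfied $G' \cup \{e\} \subseteq G \cup \{e\} \subseteq S$, contradicting the best-in rule which would have added $e$. The minimization (worst-out) case is dual. This feasibility step is the only place where the antichain assumption on $\f$ is used.

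Writing $x(S) := \sum_{i \in S} x_i$ for the weight of a set, I would then bound the two factors separately. For maximization, let $g_1$ be the first element greedy adds, let $S^* \in \f$ be an optimal solution, and let $e^* \in S^*$ be its heaviest element. Since $\{e^*\} \subseteq S^* \in \f$ and greedy processes elements in nonincreasing order of weight, $x_{g_1} \geq x_{e^*} \geq x(S^*)/m$, whence $x(G) \geq x_{g_1} \geq x(S^*)/m$. For minimization, let $g^* = e_k$ be the heaviest element surviving in $G$; then $e_1, \ldots, e_{k-1}$ were all removed before $e_k$ was examined, so at that moment the partial solution was exactly $\{e_k, \ldots, e_n\}$. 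Since $e_k$ was nevertheless \emph{not} removed, $\{e_{k+1}, \ldots, e_n\}$ contains no feasible solution, so $S^* \cap \{e_1, \ldots, e_k\} \neq \emptyset$ and hence $S^*$ contains an element of weight at least $x_{g^*}$. Therefore $x_{g^*} \leq x(S^*)$ and $x(G) \leq |G| \cdot x_{g^*} \leq m \cdot x(S^*)$.

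For the tightness assertion, I would use the $m$-bounded antichain $\f = \{\{1\}, \{2, \ldots, m+1\}\}$ on $n = m+1$ ground elements. Under the weighting $x_1 = 1+\epsilon$, $x_2 = \cdots = x_{m+1} = 1$, the maximizing greedy picks only element $1$ (no pair $\{1,i\}$ lies in any feasible solution, so no further element can be added), yielding weight $1+\epsilon$ against optimum $m$; symmetrically, the minimizing greedy removes element $1$ first and then cannot remove any of $2, \ldots, m+1$, yielding weight $m$ against optimum $1+\epsilon$. Letting $\epsilon \to 0$ gives ratio $m$ in both cases. The proof is elementary throughout; the only step requiring any real care is the feasibility claim above.
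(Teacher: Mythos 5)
Your proof is correct and follows essentially the same route as the paper's: both arguments bound the ratio via the first element accepted (resp.\ the heaviest element kept) by the heaviest-first greedy, and both establish tightness with the two-set antichain $\{\{a\},\{b_1,\dots,b_m\}\}$ under a weighting that makes the singleton slightly heavier. The only addition is your explicit verification that the greedy output is itself feasible, which the paper leaves implicit (and which, incidentally, does not actually require the antichain hypothesis, since the argument shows the output coincides with some member of the family directly).
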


\begin{proof}
  To show the upper bound, take an arbitrary weighting
  $x:E\to\RR_+$. Consider the heaviest-first ordering $x(e_1)\geq
  \ldots \geq x(e_i)\geq \ldots\geq x(e_n)$.  Let $e_i$ be the
  \emph{first} element accepted by the greedy algorithm.  Let $S\in\f$
  be an optimal solution for the input $x$, and $A\in\f$ be the
  solution found by the algorithm. Let also $x(S)=\sum_{i\in S}
  x(e_i)$ and $x(A)=\sum_{i\in A} x(e_i)$ be their weights.

  If this is the maximizing (best-in) greedy, then $e_i$ is the first
  element belonging to at least one feasible set. So, $S\cap
  \{e_1,\ldots,e_{i-1}\}=\emptyset$, implying that $x(S)\leq |S|\cdot
  x(e_i)\leq m\cdot x(e_i)\leq m\cdot x(A)$, as desired.

  If this is the minimizing (worst-out) greedy, then
  $\{e_{i+1},\ldots,e_n\}$ cannot contain any feasible solution (for
  otherwise, $e_i$ would not be accepted). So, some element $e_j$ with
  $j\leq i$ must belong to the optimal solution $S$. But then
  $x(S)\geq x(e_j)\geq x(e_i)$, whereas $x(A)\leq |A|\cdot x(e_i)\leq
  m\cdot x(e_i)$, implying that $x(A)\leq m\cdot x(S)$, as desired.

  To show that $\fgreedy{\f}\geq m$ holds for some $m$-bounded
  antichains $\f$, take an arbitrarily small number $\epsilon>0$, and
  consider the star $K_{1,m}$ centered in $a$ and with leaves
  $b_1,\ldots,b_m$. Let $\f$ consist of the only two maximal
  independent sets $\{a\}$ and $\{b_1,\ldots,b_m\}$ in this graph.
  Give the weight $\scalar:=1/(1-\err/2)>1$ to the center $1$, and
  weights $1$ to the leaves. The maximizing (best-in) greedy will
  output $\scalar$ while the optimum is $m$, and the minimizing
  (worst-out) greedy algorithms will output $m$ while the optimum is
  $\scalar$. In both cases, the achieved approximation factor is
  $r\geq m/\scalar > (1-\epsilon)m$.
\end{proof}

\section{Sidon sets: proof of Theorem~\ref{thm:sidon}}
\label{app:sidon}
A set $A\subset\NN^n$ of vectors is a \emph{Sidon set} if for all
vectors $a,b,c,d\in A$: if $a+b=c+d$, then $\{a,b\}=\{c,d\}$. That is,
knowing the sum of two vectors in $A$, we know which vectors were
added.  Let (as before) $\Max{A}{r}$ denote the minimum size of a
tropical $(\max,+)$ circuit $r$-approximating the problem
$f(x)=\max_{a\in A}\skal{a,x}$ on~$A$.

Let $m$ be an odd integer, and $n=4m$. Our goal is to show that then
there is an explicit Sidon set $A\subseteq\{0,1\}^n$ such that
$\Max{A}{1}\geq 2^{n/4}$ but $\Max{A}{2}\leq n$.  For this, consider
the cubic parabola $C=\{(z,z^3)\colon z\in \{0,1\}^m\}\subseteq
\gf{2^{2m}}$.  As customary, we view vectors in $z\in\{0,1\}^m$ as
coefficient-vectors of polynomials of degree at most $m-1$ over
$\gf{2}$ when rising them to a power. Note, however, that in the
definition of Sidon sets, the sum of vectors is taken over the
semigroup $(\NN,+)$, not over $\gf{2}$; in particular, $a+a=0$ holds
only for $a=0$.

For a finite set $A\subset\NN^n$ of vectors, let $\Un{A}$ denote the
minimum size of a Minkowski $(\cup,+)$ circuit producing~$A$.  We will
use the following three facts. Recall that a set $A\subseteq\{0,1\}^n$
is \emph{uniform} if all its vectors have the same number of ones.

\begin{itemize}
\item[(1)] The cubic parabola $C\subseteq \{0,1\}^{2m}$ is a Sidon
  set~\cite[Theorem~2]{lindstrom}.

\item[(2)] $\Un{A}\geq |A|$ holds for every Sidon set
  $A\subset\NN^n$~\cite[Theorem~1]{GS12}.

\item[(3)] If $A\subseteq\{0,1\}^n$ is uniform, then $\Max{A}{1}\geq
  \Un{A}$~\cite[Theorem~2.9]{jerrum}.
\end{itemize}
The cubic parabola $C$ is not uniform, and we cannot apply (3) to it.
But, using a simple trick (suggested by Igor Sergeev, personal
communication), we can extend this set to a uniform Sidon set.  For a
$0$-$1$ vector $a$, let $\bar{a}$ denote the componentwise negation of
$a$. For example, if $a=(0,0,1)$ then $\bar{a}=(1,1,0)$. Consider the
following set of vectors:
\[
A= \{(c,\bar{c})\colon c\in C\}=\left\{(a,a^3,\bar{a},\bar{a^3})\colon
  a\in\{0,1\}^m\right\}\subseteq\{0,1\}^{n}\,.
\]
This set is already uniform: every vector of $A$ has exactly $2m$
ones. The set $A$ is also a Sidon set because, by (1), the set $C$ was
such.  So, (2) and (3) imply that $\Max{A}{1}\geq |A|=2^m=2^{n/4}$.

It remains therefore to prove the upper bound $\Max{A}{2}\leq n$.  We
have $n=4m$ variables $x_1,\ldots,x_{4m}$. Our approximating circuit
will solve the maximization problem on the set $B=B'\cup B''$, where
\[
B'=\{(a,0,\bar{a},0)\colon a\in \{0,1\}^m\}\ \ \mbox{ and }\ \
B''=\{(0,a,0,\bar{a})\colon a\in \{0,1\}^m\}\,.
\]
The maximization problem on $B$ is to compute $f(x)=\max\{g(x),
h(x)\}$, where
\begin{align*}
  g(x)&=\max\ \sum_{i=1}^m a_ix_i+\sum_{i=2m+1}^{3m} (1-a_{i})x_{i}\,;\\
  h(x)&=\max\ \sum_{i=m+1}^{2m} a_ix_i+\sum_{i=3m+1}^{4m}
  (1-a_{i})x_{i}
\end{align*}
with both maximums taken over all vectors $a\in\{0,1\}^{4m}$.  Since
$g(x)$ is just the sum
$\max\{x_1,x_{2m+1}\}+\max\{x_2,x_{2m+2}\}+\cdots
+\max\{x_m,x_{3m}\}\,, $ and similarly for $h(x)$, the maximization
problem $f$ can be solved using only $4m=n$ gates.

It remains to show that $f$ indeed approximates the maximization
problem on $A$ within factor $r=2$. As we have shown in
\cref{sec:exact} (see \cref{lem:max}), this happens precisely when the
set $B$ lies below $A$, and the set $\tfrac{1}{2}\cdot A$ lies below
the convex hull $\conv{B}$ of $B$. It is clear that the first subset
$B'$ of $B$ lies below $A$.  We have to show that this holds also for
the second subset $B''$. For this, it is enough to show that $B''$
coincides with the set of all vectors $(0,a^3,0,\bar{a^3})$ for
$a\in\{0,1\}^m$.

It is known that a polynomial $x^k$ permutes $\gf{q}$ if and only if
$q-1$ and $k$ are relatively prime; see, for example, Lidl and
Niederreiter \cite[Theorem~7.8]{LiedlN86}. In our case, we have
$q=2^m$ and $k=3$. Since $m$ is odd, we have $m=2t+1$ for some
$t\in\NN$. Easy induction on $t$ shows that $p(t):=2^{2t+1}+1$ is
divisible by $3$: the basis $t=0$ is obvious, because $p(0)=3$, and
the induction step $p(t+1)=2^{2(t+1)+1}+1= 4(2^{2t+1}+1)-3=4\cdot
p(t)-3$ follows from the induction hypothesis.  So, $q-1=p(t)-2$
cannot be divisible by $3$, that is, $q-1$ and $3$ are relatively
prime and, hence, the mapping $a\mapsto a^3$ is a
\emph{bijection}. This gives us a crucial fact:
\[
\left\{(0,a^3,0,\bar{a^3})\colon a\in \{0,1\}^m\right\}
=\left\{(0,a,0,\bar{a})\colon a\in \{0,1\}^m\right\} =B''\,.
\]
Hence, the entire set $B=B'\cup B''$ lies below $A$, that is, every
vector of $B$ is covered by at least one vector of $A$.  By
\cref{lem:max}, it remains to show that the set $\tfrac{1}{2}\cdot A$
lies below the convex hull $\conv{B}$.  So, take an arbitrary vector
$u=\tfrac{1}{2}\cdot(a,a^3,\bar{a},\bar{a^3})$ in $\tfrac{1}{2}\cdot
A$.  This vector is a convex combination $\tfrac{1}{2}\cdot
v+\tfrac{1}{2}\cdot w$ of vectors $v=(a,0,\bar{a},0)$ and
$w=(0,a^3,0,\bar{a^3})$ of $B$, as desired.  \qed

\section*{Acknowledgments}
We thank Georg Schnitger and Igor Sergeev for inspiring discussions.

\bibliographystyle{siamplain}

\end{document}